\newcommand{\R}{\mathbb{R}}
\newcommand{\N}{\ensuremath{\mathbb{N}}}                 % Set of natural numbers
\newcommand{\1}{{\bf  I}}
\newcommand{\0}{{\bf 0}}
\renewcommand{\ge}{\geqslant}
\renewcommand{\geq}{\geqslant}
\renewcommand{\le}{\leqslant}
\renewcommand{\leq}{\leqslant}
\renewcommand{\setminus}{\backslash}
\newcommand{\defeq}{\mathrel{\mathop:}=}
\DeclareMathOperator{\sign}{sign}
\DeclareMathOperator{\cl}{cl}
\DeclareMathOperator{\diag}{diag}
\DeclareMathOperator{\card}{card}
\DeclareMathOperator{\inter}{int}
\newcommand{\subalign}[1]{%
	\vcenter{%
		\Let@ \restore@math@cr \default@tag
		\baselineskip\fontdimen10 \scriptfont\tw@
		\advance\baselineskip\fontdimen12 \scriptfont\tw@
		\lineskip\thr@@\fontdimen8 \scriptfont\thr@@
		\lineskiplimit\lineskip
		\ialign{\hfil$\m@th\scriptstyle##$&$\m@th\scriptstyle{}##$\hfil\crcr
			#1\crcr
		}%
	}%
}
\DeclareMathOperator*{\argmin}{arg\,min}
\newtheorem{theorem}{Theorem}
\newtheorem{assumption}{Assumption}
\newtheorem{definition}{Definition}
\newtheorem{proposition}{Proposition}
\newtheorem{lemma}{Lemma}
\newtheorem{corollary}{Corollary}
\newtheorem{remark}{Remark}
\DeclareSymbolFont{cyrletters}{OT2}{wncyr}{m}{n}
\DeclareMathSymbol{\Sha}{\mathalpha}{cyrletters}{"58}
\def\BibTeX{{\rm B\kern-.05em{\sc i\kern-.025em b}\kern-.08em
		T\kern-.1667em\lower.7ex\hbox{E}\kern-.125emX}}
\crefname{theorem}{Theorem}{Theorems}
\crefname{definition}{Definition}{Definitions}
\crefname{assumption}{Assumption}{Assumptions}
\crefname{proposition}{Proposition}{Propositions}
\crefname{lemma}{Lemma}{Lemmas}
\crefname{remark}{Remark}{Remarks}
\crefname{section}{Section}{Sections}
\crefname{corollary}{Corollary}{Corollaries}
\crefname{table}{Table}{Tables}
\begin{document}

\title{{{Discrete-Time}} Conewise Linear Systems with Finitely Many Switches}
\author{Jamal Daafouz, J\'er\^ome Loh\'eac, Constantin Mor\u{a}rescu, Romain Postoyan
\thanks{The authors are with Universit\'e de Lorraine, CNRS, CRAN, F-54000 Nancy, France (e-mails: firstname.surname@univ-lorraine.fr). J.~Daafouz is also with Institut Universitaire de France (IUF).}
\thanks{This work is supported by IUF and by the ANR through the projects NICETWEET, ANR-20-CE48-0009, TRECOS, ANR-20-CE40-0009, and OLYMPIA, ANR-23-CE48-0006.} 
}
\maketitle

\begin{abstract} 
We investigate discrete-time conewise linear systems (CLS) for which all the solutions exhibit a finite number of switches. By switches, we mean transitions of a solution from one cone to another. Our interest in this class of CLS comes from the optimization-based control of an insulin infusion model, for which the fact that solutions switch finitely many times appears to be key to establish the global exponential stability of the origin. The stability analysis of this class of CLS greatly simplifies compared to general CLS, as all solutions eventually exhibit linear dynamics. The main challenge is to  characterize CLS satisfying this finite number of switches property. %, which is a non-trivial problem. 
We first present general conditions in terms of set intersections for this purpose. To ease the testing of these conditions, we translate them as a non-negativity test of linear forms using Farkas lemma. As a result, the problem reduces to verify the non-negativity of a single solution to an auxiliary linear discrete-time system. Interestingly, this property  differs from the classical non-negativity problem, where any solution to a
system must remain non-negative (component-wise) for any non-negative initial condition, and thus requires novel tools to test it. We finally illustrate the relevance of the presented results on  the optimal insulin infusion problem.
\end{abstract}

\begin{IEEEkeywords}
Conewise linear systems, Lyapunov stability, optimization-based control, insulin infusion, Farkas lemma.
\end{IEEEkeywords}
	
\section{Introduction}
Conewise linear systems (CLS) are dynamical systems for which the state space is partitioned into a finite number of non-overlapping polyhedral cones \cite{Camlibel2006, Shen2010, Wirth2014}. The dynamics within each cone is governed by a linear time-invariant dynamical system called a mode. These systems pose significant challenges due to their piecewise linear nature. 
In particular, it has been shown in \cite{Blondel1999} that the stability analysis of CLS is a NP-hard problem. Hence, algorithms for deciding stability of CLS are inherently inefficient. 
We can mention the converse Lyapunov results in \cite{Wirth2014}, which lead to necessary and sufficient stability conditions. The results in \cite{Wirth2014} state that the origin of a CLS is globally exponentially stable (GES) if and only if it admits a conewise linear Lyapunov function, whose associated conic partition does not coincide with the original system partition in general. 
As a consequence, the method proposed in~\cite{Wirth2014} 
may be undecidable or computationally intractable. We can also mention that CLS are known to be equivalent to a class of linear complementarity systems \cite{Camlibel2006} for which cone-copositive Lyapunov functions may be synthesized, as done recently in \cite{Souaiby2021} in continuous-time. The advantage is the derivation of converse results with polynomial approximations but, again, there is no guarantee of computational tractability in general. To alleviate the computational obstruction of the CLS stability analysis, an alternative approach is to exploit additional properties for classes of CLS. 
This is the approach pursued in this work, where we focus on discrete-time CLS, whose solutions all exhibit a finite number of switches. By switches, we mean the transition of a solution from one cone to another. 

Our motivation to study CLS with finitely many switches comes from the application of the optimization-based control approach of~\cite{Boyd2011a}, namely quadratic control-Lyapunov policy (QCLP), for optimal insulin infusion in \cite{Good2019}. The primary objective is to minimize peak blood glucose level (BGL) caused by a food impulse, while adhering to the constraint that insulin flow must be positive \cite{Good2019}. Under the assumption that 
the response of the meal lasts longer than the response of an insulin impulse, as observed in specific situations (e.g., low glycemic index or high-fat/protein meals), it is proved in \cite{Good2019} that the optimal infusion policy is given by an open-loop policy combining an insulin bolus (applied with the meal) and a specific form of decaying insulin flow thereafter. This strategy exhibits the shortcomings of being open-loop and of  requiring the perfect knowledge of the meals. This justifies the need for alternative efficient, suboptimal, closed-loop feedback policies. 
It appears that the problem can be formalized as a Linear Quadratic Regulator (LQR) problem with positive control inputs, as we show. This constrained optimal LQR problem arises in numerous real-world systems and leads to major methodological challenges \cite{Pachter1980, Heemels1998, Bempo2002, Boyd2011, Loheac2021}. Necessary and sufficient optimality conditions for the continuous-time case are available in \cite{Heemels1998}. 
Nevertheless, the associated numerical algorithm suffers from the curse of dimensionality,  approximations are thus required to determine the stationary infinite horizon optimal feedback, moreover and importantly, no stability guarantees are provided. This is the reason why we opted for QCLP \cite{Boyd2011} instead, which is easy to implement and shows remarkable performances in simulations when applied to the model in \cite{Good2019}; see Figure \ref{diabf1c} in Section \ref{sect:motivating-example}. We show that the obtained closed-loop system can be modelled as a CLS but ensuring its stability remains challenging. Indeed, despite our attempts to apply existing methods for stability analysis of CLS \cite{Wirth2014} or optimization-based control systems, such as \cite{Primbs2001} and its extensions \cite{Li2008,Ahmad2014,Park}, as well as sum-of-squares-based methods like \cite{Korda2002} or the discrete-time counterpart of \cite{Souaiby2021}, we were unable to obtain a stability certificate for the CLS under consideration.

We realized that the above-mentioned motivating example satisfies a distinctive feature: the number of switches of any solution to the closed-loop system is uniformly bounded. This property is extremely useful to investigate stability, as any solution is eventually given by a linear dynamical system. As, for this example, each mode has a Schur state matrix, the global exponential stability of the origin can then be established. Our goal in this work is to formalize and generalize these findings. 

We start by assuming that a given general discrete-time CLS is such that all its solutions exhibit a finite number of switches, and we give a necessary and sufficient condition for the origin to be globally exponentially stable. This stability property relies on the  {assumption} that solutions switch finitely many times, which is non-robust to exogenous disturbances a priori. We might thus deduce that the ensured stability property is not robust. We prove that this is not the case by establishing that, for a CLS, global exponential stability of the origin implies exponential input-to-state stability with respect to a general class of additive disturbances. Afterwards, we focus on the main challenge of this work that is to derive conditions under which all solutions to a CLS exhibit a finite number of switches.  {We are not aware of such results in the literature. We can mention e.g., \cite{shen-pang-siam2005},  which deals with linear complementarity systems with continuous-time dynamics and not discrete-time dynamics as we do, and ensures finitely many switches for any solution  on finite time intervals only, while we are seeking for results over the whole domain of the solutions. There is therefore a need for novel methodological tools that allows establishing that any solution to given discrete-time CLS switch a given maximum number of times.}

 {In this context, we present general conditions in terms of set intersections.} To ease the testing of these conditions, we derive alternative, tractable conditions, which boil down to verifying whether a specific solution to an auxiliary discrete-time system is non-negative thanks to the use of Farkas lemma. These conditions are derived for the case of a partition of the state space made of two cones, only to avoid over complicating the used notation. It is interesting to note that the required non-negativity property differs from the abundant literature on positive systems e.g.,~\cite{Farina2000}, which concentrate on systems for which \emph{all} solutions take non-negative values (component-wise). In our case, a single solution has to be non-negative, we therefore present tailored conditions for this purpose, which have their own interest and which are successfully applied to the optimal insulin infusion problem.

The rest of the paper is organized as follows. The motivating example inspired by \cite{Good2019} and the problem statement are presented in Section~\ref{sect:motivating-example}. Stability results are established in Section~\ref{sect:stability} assuming all the solutions of the considered CLS have a finite number of switches. 
Section \ref{sect:cls-finite-number-switches} provides conditions under which solutions to a CLS exhibit no more than a given maximum number of switches. Section \ref{sect:positivity-condition} is dedicated to the non-negativity analysis of the auxiliary system derived from Farkas Lemma. 
The results are finally applied to examples, including the insulin infusion problem, in Section~\ref{sect:applications}.  {Section~\ref{sect:conclusions} concludes the paper.} 
Some results are postponed to the appendix to avoid breaking the flow of exposition. 

\noindent
{\bf Notations.} $\R$ denotes the set of real numbers, $\R_+$ the set of non-negative real numbers, $\N$ the set of non-negative integers, $\N^\star:=\N\backslash\{0\}$, and $\mathbb{C}$ the set of complex numbers. For real matrices or vectors ($^{\top}$) indicates
transpose. The  identity matrix of the considered set of matrices is denoted $\1$. For any symmetric matrix $X > 0$ ($X \geq 0$) means that $X$ is positive (semi-)definite. Given $a_1,\ldots,a_n\in\R$ with $n\in\N^\star$, we use $\diag(a_1,\ldots,a_n)$ to denote the diagonal matrix, whose diagonal components are $(a_1,\ldots,a_n)$. {For any real square matrix $M$, $\sigma(M)$ denotes its spectrum}.
For any vectors $v$, we write $v\geq 0$ when all its entries are non-negative. The notation $\lfloor s\rfloor$ for $s\in\R$ stands for the integer part of~$s$, and recall that $s-1\le\lfloor s\rfloor$. The interior of a set $S$ is denoted $\inter(S)$, its closure $\cl(S)$ and we use $\mathbb{B}(0,r)$ for the closed ball centered at the origin of radius $r>0$ of the considered Euclidean space. Also, $\card(S)$ is the cardinal of the set~$S$.  Given $n\in\N^\star$,  we say that set $\mathcal{C}\subset\R^n$ is a closed convex cone of $\R^n$ if $\mathcal{C}$ is closed and, for any $x,y\in\mathcal{C}$ and any $a,b\in\R_+$,  $a x+b y\in \mathcal{C}$. Given two sets $A$ and $B$, $A^B$ stands for the set of functions defined {from~$B$ to~$A$}.

\section{Motivation and problem formulation}\label{sect:motivating-example}

We first focus on the LQR problem with scalar positive inputs (Section~\ref{subsect:lqr-positive-inputs}), which covers the optimal insulin infusion problem presented in more details afterwards (Section~\ref{subsect:optimal-insulin-infusion}). After having illustrated the potential of QCLP for the {near-optimal insulin infusion}, we {formalize the stability analysis of the obtained closed-loop system as the stability problem of a CLS (Section~\ref{subsect:ex-cls-system}). There, we also discuss possible, but unfortunately unsuccessful, Lyapunov-based approaches to establish stability properties for the considered example thereby motivating the problem statement (Section~\ref{sect:problem-statement}).} 

\subsection{LQR with positive inputs}\label{subsect:lqr-positive-inputs}

The envisioned optimal insulin infusion problem is modeled as a LQR problem with positive inputs. We thus consider the deterministic discrete-time linear system
\begin{equation}\label{eqdisc}
	x_{t+1} = A x_t + Bu_t, 
\end{equation}
where $x _t =(x_t^1, \ldots, x_t^n)^\top \in \R ^n$ is the state and $u_t \in \R_+$ is the non-negative scalar control input at time $t\in\N$, with $n\in\N^\star$. We denote the solution to (\ref{eqdisc}) initialized at state $x\in\R^{n}$ with input sequence $\mathbf{u}\in\R_+^{\N}$ at time $t\in\N$ as $\phi(t,x,\mathbf{u})$ and $\phi(0,x,\cdot)=x$. The cost function is given, for any $x\in\R^n$ and infinite-length sequence of non-negative inputs $\mathbf{u}=(u_0,u_1,\ldots)\in\R_+^{\N}$, by
\begin{equation}
	\begin{array}{rllll}
		J(x,\mathbf{u}) & := & \displaystyle\sum_{t=0}^{\infty} \ell(\phi(t,x,\mathbf{u}),u_t),
	\end{array}
	\label{eq:cost-J}
\end{equation}
where $\ell(z,v):=z^{\top}Qz + 2z^{\top}Sv + {Rv^2}$ for any $(z,v)\in\R^{n}\times\R_+$, $Q\in\R^{n\times n}$ such that $Q=Q^{\top} \geq 0$, $R\in(0,\infty)$, $S\in\R ^n$ and $\left[\begin{matrix} Q & S \\ S^\top & R \end{matrix}\right]\geq 0$. We know from Bellman equation that the optimal value function associated {with} (\ref{eq:cost-J}), namely $V^{\star}(x):=\min_{\mathbf{u}\in\R_+^{\N}}J(x,\mathbf{u})$ satisfies, for any $x\in\R^{n}$,
\begin{equation}\label{eqValueF}
	V^\star(x) \defeq \min_{u\in\R_+}\Big(\ell(x,u)+V^{\star}(Ax+Bu)\Big)
\end{equation}
based on which  the optimal feedback policy is given by
\begin{equation}\label{eqBell}
	g^\star(x) = \min_{u\in\R_+}\Big(\ell(x,u)+V^{\star}(Ax+Bu)\Big).
\end{equation}
Hence to construct the optimal policy $g^\star$, we need to know~$V^{\star}$, which is very challenging in general because of the constraint that $u$ has to be non-negative; see \cite{Heemels1998} for results in the continuous-time case. Consequently, we exploit an alternative suboptimal policy proposed in \cite{Boyd2011} known as QCLP. The idea is to construct the feedback policy using a known control Lyapunov function $V_{\text{clf}}$ instead of $V^{\star}$ in (\ref{eqBell}). This leads to, for any $x\in\R^{n}$,
\begin{equation}\label{eq:QCLP}
g(x) := \argmin_{u\in\R_+} \big( \ell(x,u) + V_{\text{clf}}(Ax+Bu)\big).
\end{equation}
In this work, and as proposed in \cite{Boyd2011a}, we define $V_{\text{clf}}$ as the optimal value function associated {with} the \emph{unconstrained} LQR problem in the sense that the input can take any value in $\R$, i.e., $V_{\text{clf}}(x):=x^{\top}Px$ for any $x\in\R^{n}$ where $P$ is the unique, real, symmetric, positive definite solution to the Riccati equation $A^{\top}PA-P	 -(A^{\top}PB+S)(B^{\top}PB+R)^{-1}(A^{\top}PB+S)^{\top}+Q=0$, which exists as long as the pair $(A,B)$ is controllable. 

In view of the expressions of $\ell$ and $V_{\text{clf}}$, (\ref{eq:QCLP}) can be re-written~as
\begin{equation}\label{QP99}
\begin{array}{r@{\ }c@{\ }lll}
g(x) & = &  \argmin_{u\in\R_+} {\big((R+B^{\top}PB)(u-Kx)^2\big)} \\
& = & \max\{0,Kx\},\end{array}
\end{equation}
with $K=-(R+B^\top P B)^{-1}(B^\top P A+S^\top)$ the optimal gain for the unconstrained LQR problem. 
This equivalence follows from (\ref{eq:QCLP}) and the equalities $K^{\top}(R+B^{\top}PB)K = A^{\top}PA-P+Q$ and $-(R+B^{\top}PB)K = B^{\top}PA+S^{\top}$.

\subsection{Optimal insulin infusion}\label{subsect:optimal-insulin-infusion}

We apply the approach of Section~\ref{subsect:lqr-positive-inputs} to the problem of optimal insulin infusion of~\cite{Good2019} where
the primary objective is to minimize the peak of the blood glucose level (BGL) resulting from a food impulse while ensuring that insulin flow remains non-negative. 
We reformulate the problem as a LQR problem with a control non-negativity constraint and {apply} the associated QCLP strategy~(\ref{eq:QCLP}). The continuous-time model of~\cite{Good2019} was obtained from clinical trials. Here, we use a sampled-data version with a sampling period $T=5$ min; see Appendix~\ref{appendix:sampled} for more details. As a result, the discrete-time state model is given by (\ref{eqdisc}) with $n=4$, and
\begin{equation}
	\begin{array}{l}\label{eq:matrices-insulin-A-B}
		A= \left[ \begin{array}{rrrc}
				0.8351  & -0.1150 &  -0.0521    &     0 \\
				0.0716  &  0.9954&   -0.0021     &    0 \\
				0.0014   & 0.0390  &  1.0000   &      0 \\
		-0.0082 &  -0.3249 & -16.4423  &  0.9277 \end{array} \right] \\
		B =\left[ \begin{array}{r}
				1.6702\\
				0.1431\\
				0.0029\\
		-0.0163 \end{array} \right].
	\end{array}
\end{equation}
We have selected the weighting matrices $Q,S$ and $R$ as in Appendix~\ref{appendix:sampled}, which leads to \begin{equation}\begin{array}{l}\label{eq:matrices-insulin-K}
		K = \left[\begin{array}{cccc}
				-0.4936 &   -6.9988 & -104.7360  & 1.6626
		\end{array}\right].
	\end{array}
\end{equation}
A solution to the  corresponding  closed-loop system with (\ref{QP99}) is depicted in Figure~\ref{diabf1c} together with the solutions using the optimal open-loop strategy given in~\cite{Good2019} and the zero-input strategy. As it can be seen in Figure~\ref{diabf1c}, the control is effective in reducing the excursion due to an impulse of food ingested ($60$~g) compared to the open-loop behavior with zero input in black dash-dotted line. 
 Our result shows a maximum BGL excursion of $2$ mmol/L. The best feedback policy in \cite[Section~7]{Good2019}, for the corresponding continuous-time model (see Appendix \ref{appendix:sampled}) has a maximum BGL excursion of $1.28$ mmol/L. However, the associated controller assumes the BGL can be exactly differentiated, which is unfortunately unrealistic.
To make the strategy realistic, the bandwidth of these differentiators has been restricted in \cite{Good2019}, and the results show larger peaks ($5.71$ mmol/L).

\begin{figure}[ht]
	\begin{center}
		\includegraphics[trim=150 0 150 0,clip,width=\the\columnwidth]{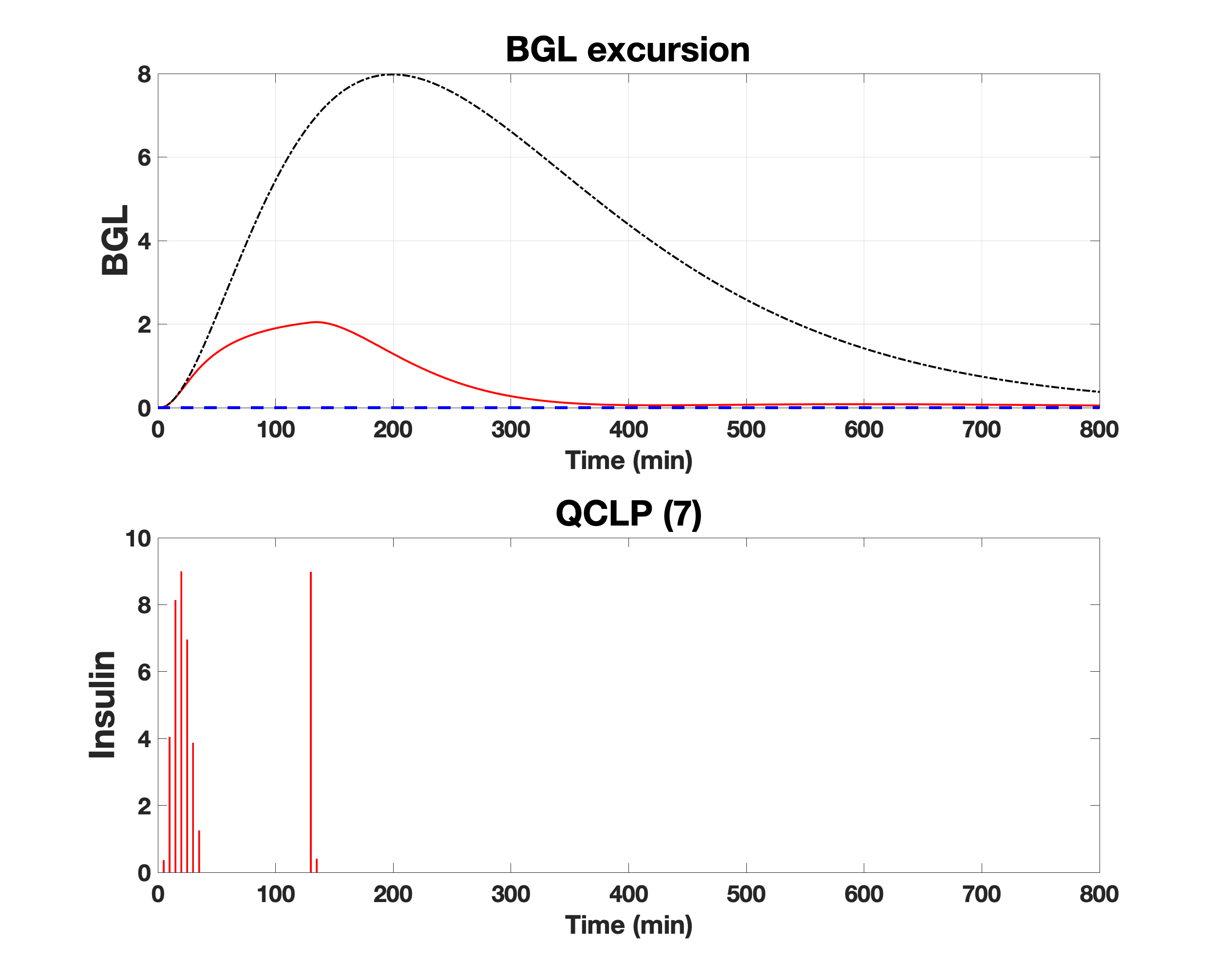}
    \end{center}
	\caption{\label{diabf1c}
        Top: glucose evolution (open-loop with zero input in black dash-dotted line, closed-loop with $K$ in red solid line, optimal open-loop of \cite{Good2019}  in blue dashed line). Bottom: Insulin flow (closed-loop with $K$ in (\ref{eq:matrices-insulin-K})).
    }
\end{figure}

QCLP (\ref{QP99}) is thus  promising  for this application. However, the stability of the corresponding closed-loop system is an open question, as we now explain. 

\subsection{{Limitations of existing Lyapunov-based techniques}}\label{subsect:ex-cls-system}

{System (\ref{eqdisc}) in closed-loop with (\ref{QP99}) is a CLS given by}
\begin{equation}\label{sysPWQ}
	x_{t+1} = \begin{cases}
		A_1x_t  & x_t \in  \mathcal{C}_1,  \\
		A_2x_t  & x_t \in  \mathcal{C}_2,
	\end{cases}
\end{equation}
with $A_1 = A+BK$ and $A_2=A$ and
\begin{equation}\label{eqcvx12}
\begin{array}{r@{\>}l}
\mathcal{C}_1&=\left\{ x\in\R^n\ :\ Kx\geq0 \right\},\\
\mathcal{C}_2&=\left\{ x\in\R^n\ :\ -Kx\geq0 \right\}=\cl\left(\R^n\backslash\mathcal{C}_1\right),
\end{array}
\end{equation}
where ${K \in \R^{1\times n}}$ is the gain associated {with} the optimal unconstrained LQR problem. {It appears that certifying the stability of the origin for system (\ref{sysPWQ}) is challenging \cite{Blondel1999}; even when $A_1$ and~$A_2$ are Schur as in Section~\ref{subsect:optimal-insulin-infusion}.}

{Interestingly, this stability problem can be formulated in other frameworks such as optimization-based systems due to~(\ref{eq:QCLP}), piecewise linear systems to which system (\ref{sysPWQ}) belongs to, or linear complementarity systems\footnote{This can be seen by noticing that the necessary and sufficient optimality Karush-Kuhn-Tucker (KKT) conditions of the quadratic problem~(\ref{eq:QCLP}) write $ (R+B^{\top}PB)v -(R+B^{\top}PB)K x_t - \lambda  = 0$, $v \geq 0$, $\lambda  \geq 0$, $\lambda v=0$, 
where $\lambda \in \R$ is a Lagrange multiplier. As a result, the closed-loop system (\ref{eqdisc})-(\ref{eq:QCLP}) can also be written as a discrete-time LCS $x_{t+1} = Ax_t + Bu_t$, $y_t = Cx_t + Du_t$,
$0 \leq u_t \perp y_t \geq 0$, 
where $C = -(R+B^{\top}PB)K$ and $D = R+B^{\top}PB$. } (LCS) {a well-established class of nonsmooth systems~\cite{Brogliato2020}. This connection is significant, as it situates our problem within a broader theoretical framework. However, while the LCS theory provides a useful perspective, solving the problem at hand within this framework is nontrivial and cannot be fully addressed by existing methods in the literature as far as we know. This opens an interesting avenue for future research, requiring the development of new techniques to handle the specific challenges posed by the considered system.

To elaborate more on the fact that none of the existing Lyapunov-based stability tools of the literature certifies the stability of the system in Section~\mbox{\ref{subsect:optimal-insulin-infusion}},} we introduce below a novel comparison between three distinct classes of Lyapunov function candidates commonly encountered in the above-mentioned fields. The first one is found in the optimization-based framework \cite{Primbs2001,Postoyan-et-al-tac(optimal),Grimm-et-al-tac2005}, and corresponds to the value function associated {with}~(\ref{eq:QCLP}), i.e., for any $x\in\R^{n}$,
\begin{equation}\label{eqLyapVk}
\begin{array}{l}
V(x) := \displaystyle\min_{u\in\R_+}\Big(x^{\top} Qx+2x^{\top} Su+u^{\top} Ru
 \\[-2mm] \ \hskip 3.0cm
 +(Ax+Bu)^{\top} P(Ax+Bu)\Big).
\end{array}
\end{equation}
The second Lyapunov function candidate is the quadratic form inherited from the LCS literature \cite{Camlibel2007}, i.e., for any $x\in\R^{n_x}$ and $u=g(x)$ as in (\ref{eq:QCLP}),
\begin{equation}\label{eqLyapLCP}
	V_{\text{quad}}(x) : =  \left[ \begin{array}{cc} x^{\top}  & u^{\top}   \end{array}\right] \underbrace{\left[ \begin{array}{cc}  X_1 & X_2 \\ X_2^{\top}  & X_3 \end{array}\right]}_{\cal X}\left[ \begin{array}{c} x \\ u  \end{array}\right]  %, \quad    {\mathcal X}
\end{equation}
where $X_1=X_1^\top  \in \R ^{n \times n}$, $X_2 \in \R ^{n \times m}$, and $X_3=X_3^\top  \in \R ^{m \times m}$ and ${\mathcal X}$ strictly copositive \cite{Camlibel2007}.
The third one is the piecewise quadratic function, as in \cite{Feng2002}, that is for any $x\in\R^{n}$,
\begin{equation}\label{eqLyapPWQ}
V_{\text{PWQ}}(x) =:\begin{cases}
x^{\top}  P_1 x, & \text{if } x \in {\cal C}_1, \\
x^{\top}  P_2 x, & \text{if } x \in {\cal C}_2,
\end{cases}
\end{equation}
where $P_1=P_1^\top  \in \R ^{n \times n}$ and $P_2=P_2^\top  \in \R ^{n \times n}$.
This function is also a conewise quadratic function with the same partition as~(\ref{sysPWQ}). 

The relationship between these three Lyapunov function candidates unfolds as follows. Function $V$ is a particular instance of $V_{\text{quad}}$. Specifically, if we set, in~(\ref{eqLyapLCP}), $X_1 = A^{\top} PA+Q$, $X_2 = A^{\top} PB+S$, and $X_3 = B^{\top} PB+R$, then we obtain $V_{\text{quad}}=V$. Function $V_{\text{PWQ}}$ is the most general form, encompassing both (\ref{eqLyapVk}) and (\ref{eqLyapLCP}), as it accommodates $V_{\text{quad}}$ (which is more general than $V$) by setting $P_1= X_1+K^{\top} X_2^{\top} +X_2K+K^{\top} X_3K$ and $P_2 = X_1$, leading to $V_{\text{PWQ}}=V_{\text{quad}}$. Having identified the most general Lyapunov function candidate among these three options for system (\ref{sysPWQ}), LMI conditions under which the stability of the origin of~(\ref{sysPWQ}) would be guaranteed using (\ref{eqLyapPWQ}) are provided in Appendix~\ref{appendix:PWQ}. Regrettably, these conditions are infeasible for the system discussed in Section \ref{subsect:optimal-insulin-infusion}.  Furthermore, all our attempts to ensure the stability of this system based on other existing tools from the literature \cite{Primbs2001,Wirth2014,Souaiby2021,Li2008,Ahmad2014,Park} failed.} 

Now, for all the tested values of $Q$, $R$ and~$S$ in (\ref{eq:cost-J}), all the solutions of the corresponding system (\ref{sysPWQ}) have the distinctive feature to switch a maximum of $4$ times, which, once analytically established, allows concluding on the stability of the origin for the considered system, see Section \ref{subsect:applications-insulin}. In the following, we formalize these findings for general CLS. 

\subsection{Problem statement}\label{sect:problem-statement}

Motivated by the above developments, we consider in this paper general discrete-time CLS of the form
\begin{equation}\label{eqCLS}
	x_{t+1} = A_i x_t \quad \quad x_t \in {\cal C}_i, \quad i \in \{1,\dots,m\},
\end{equation}
where $m\in\N^\star$ is the number of non-empty closed convex cones $\mathcal{C}_i$, $i\in\{1,\ldots,m\}$, which partition the state space~$\R^n$, i.e., 
\begin{itemize}
	\item ${\cal C}_1\cup\dots\cup{\cal C}_m = \R ^n$,
	\item $\inter({\cal C}_i ) \cap \inter({\cal C}_j) = \emptyset$ for all $i\neq j \in\{1,\dots,m\}$.
\end{itemize}

{\begin{assumption}\label{ass_cont}
	$A_i \xi = A_j \xi$ for any vector $\xi \in {\cal C}_i \cap {\cal C}_j$.
\end{assumption}
As in \cite{Wirth2014}, the continuity condition of Assumption~\ref{ass_cont} ensures consistent behavior across the switching surfaces (the cone boundaries) and guarantees the uniqueness of solutions for system (\ref{eqCLS}) for any initial condition. This assumption is crucial because, in discrete-time systems, it is possible for trajectories to converge in finite time towards sliding surfaces, as discussed in \cite{Brogliato2021}. The continuity assumption we make here prevents the occurrence of sliding modes and attractive surfaces, which would otherwise complicate the system dynamics.
\begin{assumption}\label{ass:regular}
    Matrices $A_1,\ldots,A_m$ are invertible.
\end{assumption}
The full-rank condition imposed by Assumption~\ref{ass:regular} is quite classical for discrete-time systems.
In our study, this assumption will allow us to avoid additional technical difficulties.}

For the sake of convenience and with some slight abuse of notation, we use $\phi$ to denote solutions to (\ref{eqCLS}), so that for initial condition $x\in\R^n$, the corresponding solution to (\ref{eqCLS}) at time $t\in\N$ is denoted $\phi(t,x)$.
We refer to the dynamics of~(\ref{eqCLS}) for a given $i\in\{1,\ldots,m\}$ as a mode.
{%
We next formalize what is meant by a switch and the maximum number of switches of a solution to (\ref{eqCLS}).
Intuitively, one would say that, given a solution to (\ref{eqCLS}), a switching time occurs when the solution leaves a set $\mathcal{C}_i$.
However, since our sets are not strictly disjoint, this simple definition is not suitable.
For instance, with $m=3$, if we have a sequence such that $x\in\mathcal{C}_1\setminus\left( \mathcal{C}_2\cup\mathcal{C}_3 \right)$, $\phi(1,x)\in\left( \mathcal{C}_1\cap\mathcal{C}_2 \right)\setminus\mathcal{C}_3$, $\phi(2,x)\in \mathcal{C}_1\cap\mathcal{C}_2 \cap\mathcal{C}_3$, $\phi(3,x)\in\left( \mathcal{C}_2\cap\mathcal{C}_3 \right)\setminus\mathcal{C}_1$ and $\phi(t,x)\in\mathcal{C}_3\setminus\left( \mathcal{C}_1\cup\mathcal{C}_2 \right)$ for every $t\ge4$, we would say that switching times are~$3$ and~$4$, since $\mathcal{C}_1$ (respectively $\mathcal{C}_2$) is left at time $t=3$ (respectively $t=4$).
However, for this particular case, the \textsl{correct} switching time is $t=3$, since $\phi(t,x)\in\mathcal{C}_1$ for $t\in\{0,1,2\}$ and $\phi(t,x)\in\mathcal{C}_3$ for $t\ge3$.
Thus, in order to define the switching times and the number of switches, we need to explore all possible sequences of modes.
To this end, in the spitit of \cite[Section~5]{Wirth2014}, for every $x\in\R^n$, we define the possible sequences of modes
\[\mathcal{I}(x)=\left\{ (i_t)_{t\in\N}\in\{1,\ldots,m\}^\N\ :\ \forall t\in\N,\ \phi(t,x)\in\mathcal{C}_{i_t} \right\}.\]
Let us now define for $(i_t)_{t\in\N}\in\{1,\ldots,m\}^\N$ the jumping times:
\[\mathcal{T}\left( (i_t)_t \right)=\left\{ t\in\N^*\ :\ i_t-i_{t-1}\neq0 \right\}\subset\N^*\]
and the number of jumps
\[\jmath\left( (i_t)_t \right)=\card\left( \mathcal{T}\left( i_t)_t \right) \right)\in\N\cup\{\infty\}.\]
We are now in position to define the number of switches and switching times associated with a solution of~\eqref{eqCLS}.
\begin{definition}[Switching times and number of switches]\label{def-switch}
	Given an initial condition $x\in\R^{n}$, the \textsl{number of switches} of the solution $\phi(\cdot,x)$ to~\eqref{eqCLS} is given by
	\[\varsigma(x)=\min\left\{ \jmath\left( (i_t)_{t\in\N} \right)\ :\ (i_t)_{t\in\N}\in \mathcal{I}(x) \right\}\in\N\cup\{\infty\}.\]
	Corresponding \textsl{switching times} are the elements of $\mathcal{T}\left( (i^*_t)_t \right)$, where $(i^*_t)_t\in \mathcal{I}(x)$ is a sequence such that $\jmath\left( (i^*_t)_t \right)=\varsigma(x)$.
\end{definition}
Note that, in the above definition, $\varsigma(x)$ and $(i^*_t)_t$ are well-defined since we take the minimum of a non-negative function having integer values. 
}
In view of Definition~\ref{def-switch}, a solution $\phi(\cdot,x)$ for some $x\in\R^n$ exhibits a finite number of switches if and only if $\varsigma(x)<\infty$.

The main objective of this work is to derive conditions under which all solutions to system (\ref{eqCLS}) exhibits a finite number of switches, see Section~\ref{sect:cls-finite-number-switches}.
This is motivated by the fact that  the stability analysis simplifies for this class of CLS, as we show in the next section.
{In \cref{sect:cls-finite-number-switches}, to ensure that $\varsigma(x)$ is uniformly bounded with respect to $x\in\R^n\setminus\{0\}$, we will ensure that $\max\left\{ \jmath\left( (i_t)_{t\in\N} \right)\ :\ (i_t)_{t\in\N}\in \mathcal{I}(x) \right\}$ is uniformly bounded with respect to $x\in\R^n\setminus\{0\}$.}

\section{Stability results}\label{sect:stability}

We first provide a necessary and sufficient condition for $x=0$ to be GES for system (\ref{eqCLS}), again, assuming all its solutions switch a finite number of times{; conditions to ensure this property are provided in Section \ref{sect:cls-finite-number-switches}}. We then show that this stability property is robust, in the sense that an input-to-state stability property holds when (\ref{eqCLS}) is perturbed by exogenous disturbances. 

\subsection{Global exponential stability}\label{subsect:ges}

We define next the set $\mathcal{F}$, which characterizes the region of the state space where solutions to (\ref{eqCLS}) stop switching,
\begin{equation}\label{eq:F}
	\mathcal{F}:=\left\{x\in\R^n\ :\ \exists i\in\{1,\ldots,m\}{,}\,\, \forall t\in\N\,\,  A_i^t x \in\mathcal{C}_i\right\}.
\end{equation}
In other words, $\mathcal{F}=\{x\in\R^n\ :\ \varsigma(x)=0\}$ with the notation of Definition~\ref{def-switch}.
Obviously, $0\in\mathcal{F}$ and $\mathcal{F}$ is forward invariant for system (\ref{eqCLS}).
The next theorem gives a necessary and sufficient condition for the origin of system (\ref{eqCLS}) to be GES, i.e., there exist $c_1\geq 1$ and $c_2>0$ such that for any $x\in\R^{n}$, $|\phi(t,x)|\leq c_1e^{-c_2 t}|x|$ for any $t\in\N$, when we know that any solution {exhibits} a finite number of switches.
Again, conditions to ensure the latter property are provided in Section \ref{sect:cls-finite-number-switches}.

\begin{theorem}\label{thm:stability} Consider system (\ref{eqCLS}) and suppose that any solution exhibits a finite number of switches, i.e., $\varsigma(x)<\infty$ for any $x\in\R^n$ with $\varsigma$ as in Definition~\ref{def-switch}. Then the origin is GES if and only if the origin of the restriction of (\ref{eqCLS}) to $\mathcal{F}$, namely,
\begin{equation}\label{eqCLS-constrained}
	x_{t+1} = A_i x_t \quad \quad x_0 \in {\cal C}_i\cap \mathcal{F}, \quad i \in \{1,\dots,m\},
\end{equation}
is GES.
\end{theorem}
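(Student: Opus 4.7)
The plan is to prove the two directions of the equivalence separately, with the ``if'' direction carrying the bulk of the work.

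The ``only if'' direction is immediate: since $\mathcal{F}\subseteq\R^n$ is forward invariant and the dynamics of (\ref{eqCLS-constrained}) coincide with those of (\ref{eqCLS}) on $\mathcal{F}$, any GES estimate for (\ref{eqCLS}) restricts to a GES estimate for (\ref{eqCLS-constrained}) with the same constants $c_1,c_2$.

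For the ``if'' direction, assume the origin of (\ref{eqCLS-constrained}) is GES with constants $c_1\geq 1$ and $c_2>0$, so $|\phi(t,y)|\le c_1 e^{-c_2 t}|y|$ for every $y\in\mathcal{F}$ and every $t\in\N$. I would fix an arbitrary $x\in\R^n\setminus\{0\}$ and exploit $\varsigma(x)<\infty$ to define $\tau(x)\in\N$ as the first time the trajectory enters $\mathcal{F}$ (equivalently, the time immediately after the last switch of $\phi(\cdot,x)$). The trajectory then splits into a \emph{switching phase} $t\in\{0,\dots,\tau(x)\}$, on which the crude estimate $|\phi(t,x)|\le\alpha^{t}|x|$ holds with $\alpha:=\max_{i}\|A_i\|$, and a \emph{post-settling phase} $t\ge\tau(x)$, on which applying the restricted GES bound to $\phi(\tau(x),x)\in\mathcal{F}$ yields $|\phi(t,x)|\le c_1\alpha^{\tau(x)} e^{-c_2(t-\tau(x))}|x|$. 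Merging these two bounds produces an estimate of the form $|\phi(t,x)|\le C(x)\,e^{-c_2 t}|x|$ with $C(x):=c_1\alpha^{\tau(x)} e^{c_2\tau(x)}$ for every $t\in\N$.

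The remaining, and main, obstacle is to bound $C(x)$ uniformly in $x$. Homogeneity of the CLS yields $\tau(\lambda x)=\tau(x)$ for every $\lambda>0$, so $\tau$ descends to a function on the unit sphere $S^{n-1}$. Using \cref{ass_cont} to ensure that the one-step map $\phi(1,\cdot)$, and hence each iterate $\phi(K,\cdot)$, is continuous, and \cref{ass:regular} together with the closedness of the polyhedral cones $\mathcal{C}_i$ to ensure that $\mathcal{F}$ is closed, each sublevel set $\mathcal{F}_K:=\{x\in\R^n\,:\,\tau(x)\le K\}=\{x\in\R^n\,:\,\phi(K,x)\in\mathcal{F}\}$ is a closed cone. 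These form a nested increasing family $\mathcal{F}_0\subseteq\mathcal{F}_1\subseteq\cdots$ whose union is $\R^n$ by the hypothesis $\varsigma(x)<\infty$. I would then argue, via a compactness-plus-piecewise-linear structure argument on $S^{n-1}$, that $\mathcal{F}_{K^\star}=\R^n$ for some finite $K^\star$, yielding the uniform bound $\tau\le K^\star$. Once this is secured, $C(x)\le c_1\alpha^{K^\star}e^{c_2 K^\star}$ delivers GES of (\ref{eqCLS}) with decay rate $c_2$ and overshoot $c_1\alpha^{K^\star}e^{c_2 K^\star}$, completing the proof. The crux is precisely this last step: promoting pointwise finiteness of $\tau$ to a uniform bound on $S^{n-1}$, for which neither pure compactness nor a pure Baire-category argument suffices without invoking the polyhedral-cone structure of $\mathcal{F}_K$ inherited from the CLS partition.
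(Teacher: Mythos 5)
Your ``only if'' direction and your decomposition of the ``if'' direction are fine, but the step you yourself flag as the crux --- promoting pointwise finiteness of the settling time $\tau$ into $\mathcal{F}$ to a uniform bound $\tau\le K^\star$ on the unit sphere --- is not merely unproven: it is false in general, so no refinement of the compactness or polyhedral-structure argument can close it. The paper's own second-order example in Section~\ref{subsect:ex-second-order} is a counterexample. Take $x_\epsilon=(1,-\epsilon)$ with $\epsilon>0$ small: then $x_\epsilon\in\mathcal{C}_2\setminus\mathcal{F}$ and $A_2^tx_\epsilon=(0.93^t,\,-0.95^t\epsilon)$ stays in $\mathcal{C}_2$ for roughly $\ln(1/\epsilon)/\ln(0.95/0.93)$ steps before its first switch, because the negative component along the dominant eigendirection $[0\ 1]^\top$ needs that long to overtake the component along $[1\ 0]^\top$. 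Hence the time of the last switch, and a fortiori $\tau$, is unbounded on the unit sphere; the nested closed cones $\mathcal{F}_K$ do cover $\R^n$, but none of them equals $\R^n$ (Baire category only yields that some $\mathcal{F}_K$ has nonempty interior). The structural reason your scheme cannot work is that $\mathcal{F}$ is closed and may have empty interior, whereas uniform hitting-time bounds of the kind you need (cf.\ \cref{lem:sup-finite}) require the target to be an \emph{open} set containing a compact set that every trajectory visits.

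The paper takes a weaker intermediate step that avoids this entirely: from $\varsigma(x)<\infty$ it deduces only that every solution eventually enters the forward-invariant set $\mathcal{F}$ and therefore converges to the origin, i.e.\ global attractivity with no uniformity whatsoever. It then invokes \cref{thm:attractivity-ges} of Appendix~\ref{appendix:attractivity-ges}, the discrete-time analogue of Angeli's result: for a continuous vector field that is homogeneous of degree one, global attractivity of the origin already implies GES. There the uniform time bound is taken with respect to an open ball around the origin, where the compactness-plus-continuity argument of \cref{lem:sup-finite} genuinely applies, and homogeneity is used to iterate the resulting contraction. If you replace your uniform-settling-time step by this attractivity-implies-GES lemma, your argument becomes the paper's proof.
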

\begin{proof}
	We first suppose the origin of (\ref{eqCLS}) is GES.
    As $\mathcal{F}$ is forward invariant, it follows that the origin of (\ref{eqCLS-constrained}) is GES. 

	Suppose now that the origin of (\ref{eqCLS-constrained}) is GES.
    Let $\phi$ be a solution to~(\ref{eqCLS}) initialized at $x\in\R^n$.
    Since $\phi(\cdot,x)$ exhibits a finite number of switches over $\N$, {for $t_0=t_0(x)$ (the last switching time),} we have $\phi(t_0,x)\in\mathcal{F}$.
    As a result, $\phi(t,x)\in\mathcal{F}$ for any $t\geq t_0$ as $\mathcal{F}$ is forward invariant.
    This implies that $\phi(t,x)\to 0$ as $t\to \infty$ as the origin is GES for system~(\ref{eqCLS-constrained}).
    Since $x$ has been arbitrarily chosen, we have proved that any solution to~(\ref{eqCLS}) asymptotically converges to the origin, i.e., the origin is globally attractive for system~(\ref{eqCLS}).
    As item~\ref{it:thm:attractivity-ges:1} of Theorem~\ref{thm:attractivity-ges} in Appendix~\ref{appendix:attractivity-ges} holds, it yields that the origin is GES for system~(\ref{eqCLS}).
\end{proof}

To apply Theorem~\ref{thm:attractivity-ges} we use two ingredients: that the origin of (\ref{eqCLS-constrained}) is GES, and that any solution enters in~$\mathcal{F}$.
Although not necessary, assuming that any solution to~(\ref{eqCLS}) exhibits a finite number of switches guarantees that any solution enters \cal{F}.
We note that $\mathcal{C}_i\cap\mathcal{F}$, which appears in~(\ref{eqCLS-constrained}), is the largest forward invariant set within cone $\mathcal{C}_i$ for the dynamics $x_{t+1}=A_i x_t$.
When we know in which cone(s) the solutions eventually enter and remain for all future times, this boils down to investigating the spectrum of the state matrices associated {with} these cones.
When all the matrices~$A_i$, $i\in\{1,\ldots,m\}$, are Schur as in the examples of Section~\ref{sect:applications}, the fact that the origin of~(\ref{eqCLS-constrained}) is GES directly follows as solutions to~(\ref{eqCLS-constrained}) exhibit no switches. 
Now ensuring the condition that all solutions exhibit a finite number of switches, requires the development of novel tools, which are presented in Section~\ref{sect:cls-finite-number-switches}.

%To apply Theorem~\ref{thm:attractivity-ges} we need two ingredients: that any solution to~(\ref{eqCLS}) exhibits a finite number of switches and that the origin of (\ref{eqCLS-constrained}) is GES. We note that $\mathcal{C}_i\cap\mathcal{F}$, which appears in~(\ref{eqCLS-constrained}), is the largest forward set within cone $\mathcal{C}_i$ for the dynamics $x_{t+1}=A_i x_t$. % with $i\in\{1,\ldots,m\}$.
%When we know in which cone(s) the solutions eventually enter and remain for all future times, this boils down to investigating the spectrum of the state matrices associated {with} these cones. When all the matrices~$A_i$, $i\in\{1,\ldots,m\}$, are Schur as in the examples of Section \ref{sect:applications}, the fact that the origin of (\ref{eqCLS-constrained}) is GES directly follows as solutions to (\ref{eqCLS-constrained})  exhibit no switches. 
%Regarding the first condition, namely that all solutions exhibit a finite number of switches, this requires the development of novel tools, which are presented in Section~\ref{sect:cls-finite-number-switches}.

\begin{remark}
	A necessary condition for the origin of system~(\ref{eqCLS-constrained}) to be GES is that for every $i\in\{1,\dots,m\}$, if $\lambda\in\sigma(A_i)\cap\R$ is such that there exists an eigenvector $v\in\mathcal{C}_i$ of $A_i$ associated {with} $\lambda$, then $\lambda\in[0,1)$.
    This comes from the fact that, with these conditions, the solution of~\eqref{eqCLS} initialized at $v$, $\phi(t,v)=\lambda^t v$, belongs to $\mathcal{C}_i$ for every $t\in\N$.
    In particular, we have $\R_+ v\subseteq \mathcal{F}$.
\end{remark}

\subsection{Exponential input-to-state stability}

The property assumed in Theorem \ref{thm:stability} that any solution to~(\ref{eqCLS}) switches a finite number of times is non-robust a priori, in the sense that an arbitrarily small exogenous disturbance may destroy it. {It is therefore essential  to analyze the robustness of the stability result in Theorem~\ref{thm:stability}.} 
This is formalized in the next theorem, which we could not find in the literature, although similar statements have been established for discrete-time systems but with different homogeneity degrees, see \cite[Theorem~10]{sanchez-et-al-aut20}, or for general continuous-time homogeneous systems, see e.g.~\cite{ryan-scl95}.

We consider for this purpose system (\ref{eqCLS}) perturbed by exogenous disturbance as follows 
\begin{equation}\label{eqCLS-perturbed}
	x_{t+1} \in\left\{ A_i x_t + w_t \,:\, i\in\{1,\ldots,m\},\,\, x_t\in \mathcal{C}_i,\,\,w_t\in E(x_t)v_t\right\},
\end{equation}
where $v_t\in\R^{n_v}$, with $n_v\in\N^\star$, is the disturbance at time~$t$ and $E$ is a set-valued map from $\R^n$ to $\R^{n\times n_v}$ such that: (i) $E(\lambda x)=E(x)$ for any $x\in\R^{n}$ and $\lambda >0$; (ii) there exists $m_E\geq 0$ such that for any $x\in\R^{n}$ and  $z\in E(x)$, $|z|\leq m_E$. Set-valued map $E$ covers as a special case the situation where  $E(x)=\{E_i\ :\ i\in\{1,\ldots,m\},\,\, x\in \mathcal{C}_i\}$ for some constant matrices $E_i$.  On the other hand, matrices $A_i$ and sets $\mathcal{C}_i$, $i\in\{1,\ldots,m\}$, are as  in (\ref{eqCLS}). System (\ref{eqCLS-perturbed}) is a difference inclusion, as its right-hand side is set-valued. For the sake of convenience and like in Section~\mbox{\ref{subsect:lqr-positive-inputs}}, we denote a solution to (\ref{eqCLS-perturbed}) initialized at $x\in\R^{n}$ with disturbance sequence $\mathbf{v}=(v_0,v_1,\ldots)\in(\R^{n_v})^\N$ at time $t\in\N$ as $\phi(t,x,\mathbf{v})$. We have the next robustness result. 

\begin{theorem}\label{thm:iss}
	Suppose the origin is GES for system (\ref{eqCLS}), then system (\ref{eqCLS-perturbed}) is exponentially input-to-state stable, in particular there exist $c_1\geq 1$, $c_2>0$ and $c_3\ge0$ such that for any $x\in\R^{n}$ and any $\mathbf{v}\in(\R^{n_v})^\N$, any solution $\phi$ satisfies $|\phi(t,x,\mathbf{v})|\leq c_1 e^{-c_2 t}|x|+c_3\sup_{t'\in\{1,\ldots,t\}}|v_{t'}|$.
\end{theorem}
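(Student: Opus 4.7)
The plan is to invoke the converse Lyapunov theorem of \cite{Wirth2014} recalled in the introduction: since the origin of~\eqref{eqCLS} is GES, this result yields a conewise linear Lyapunov function $V:\R^n\to\R_+$ for~\eqref{eqCLS}. Three features of $V$ will drive the entire argument. First, being positively homogeneous of degree one and bounded away from $0$ and $\infty$ on the unit sphere, $V$ satisfies $\alpha_1|x|\le V(x)\le \alpha_2|x|$ for every $x\in\R^n$ and some $\alpha_1,\alpha_2>0$. Second, being continuous and linear on each cone of a finite polyhedral partition of $\R^n$, $V$ is globally Lipschitz with some constant $L\ge 0$. Third, GES translates into the existence of $\rho\in(0,1)$ such that $V(A_i x)\le\rho V(x)$ for every $i\in\{1,\dots,m\}$ and every $x\in\mathcal{C}_i$.

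Given these three properties, the core of the argument reduces to a short comparison. For any $x\in\R^n$, any $\mathbf{v}\in(\R^{n_v})^\N$ and any solution $\phi(\cdot,x,\mathbf{v})$ to~\eqref{eqCLS-perturbed}, the very definition of~\eqref{eqCLS-perturbed} provides, for each $t\in\N$, an index $i_t\in\{1,\dots,m\}$ with $\phi(t,x,\mathbf{v})\in\mathcal{C}_{i_t}$ and a vector $w_t\in E(\phi(t,x,\mathbf{v}))v_t$ such that $\phi(t+1,x,\mathbf{v})=A_{i_t}\phi(t,x,\mathbf{v})+w_t$. The uniform bound on $E$ gives $|w_t|\le m_E|v_t|$, and combining Lipschitz continuity with the per-mode decrease of $V$ yields
\[V(\phi(t+1,x,\mathbf{v}))\le V(A_{i_t}\phi(t,x,\mathbf{v}))+L|w_t|\le \rho V(\phi(t,x,\mathbf{v}))+Lm_E|v_t|.\]
A direct iteration over $t$, i.e., a discrete Gr\"onwall argument, then produces, for every $t\in\N$,
\[V(\phi(t,x,\mathbf{v}))\le \rho^t V(x)+\frac{Lm_E}{1-\rho}\sup_{0\le t'\le t-1}|v_{t'}|,\]
and the two-sided bound of the first paragraph converts this into the exponential ISS estimate of the theorem, with $c_1=\alpha_2/\alpha_1$, $c_2=-\ln\rho$ and $c_3=Lm_E/(\alpha_1(1-\rho))$, modulo a harmless relabelling of the index set for $\mathbf{v}$.

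The hard part lies in rigorously establishing the three properties of the Wirth Lyapunov function $V$. The polyhedral partition on which $V$ is piecewise linear need not coincide with $\{\mathcal{C}_1,\dots,\mathcal{C}_m\}$, and the decrease of $V$ in~\cite{Wirth2014} is typically stated along trajectories of~\eqref{eqCLS} rather than cone-by-cone; one has to argue that this trajectory-wise decrease is equivalent, under the present assumptions, to the per-mode inequality $V(A_i x)\le\rho V(x)$ on each $\mathcal{C}_i$, which is the version actually needed above, since along a perturbed trajectory it is the perturbed state that selects the active mode. A self-contained alternative avoiding~\cite{Wirth2014} would set $V(x):=\sup_{t\in\N}e^{c_2' t}|\phi(t,x)|$ for some $0<c_2'<c_2$, obtain positive homogeneity of degree one and the two-sided bound directly from GES, and deduce Lipschitz continuity from the conewise linearity of $x\mapsto\phi(t,x)$ (inherited from \cref{ass_cont}) combined with the uniform exponential bound on the operator norms of the matrix products $A_{i_{t-1}}\cdots A_{i_0}$ furnished by GES.
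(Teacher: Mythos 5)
Your argument is correct in substance and shares the paper's overall architecture --- build a degree-one homogeneous, globally Lipschitz Lyapunov function $V$ with a per-mode geometric decrease, bound the perturbation term by $L\,m_E|v_t|$ via Lipschitz continuity, and iterate --- but you source $V$ differently. The paper invokes the converse Lyapunov theorem for homogeneous systems of \cite{tuna-teel-cdc04} (checking its three assumptions via \cite{Goebel-Sanfelice-Teel-book}), which yields a $V$ that is smooth away from the origin; global Lipschitz continuity then has to be proved separately, which is exactly the role of \cref{res:HomLip}. You instead invoke the conewise linear converse result of \cite{Wirth2014}, for which global Lipschitz continuity is automatic (a continuous function that is linear on each cone of a finite polyhedral partition is Lipschitz with constant $\max_k|w_k|$, by Cauchy--Schwarz along segments), and the ``trajectory-wise versus per-mode'' worry you raise is vacuous here: since the one-step map of \eqref{eqCLS} is $x\mapsto A_ix$ on $\mathcal{C}_i$, a one-step decrease along trajectories \emph{is} the per-mode inequality $V(A_ix)\le\rho V(x)$ on $\mathcal{C}_i$, which is what the perturbed state needs. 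What each route buys: yours avoids the auxiliary Lipschitz lemma entirely; the paper's is insensitive to the precise form of Wirth's statement and extends verbatim to any continuous homogeneous-of-degree-one right-hand side. One caution on your ``self-contained alternative'' $V(x)=\sup_{t}e^{c_2't}|\phi(t,x)|$: GES only bounds $|M_kx|$ for $x$ in the cone on which the product $M_k=A_{i_{t-1}}\cdots A_{i_0}$ is realized, not the full operator norm $\|M_k\|$, so the claimed ``uniform exponential bound on the operator norms'' does not follow from GES and the Lipschitz step of that construction is genuinely open as sketched; since you offer it only as a fallback, this does not affect your main proof.
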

\begin{proof}
	We first apply \cite[Theorem 2]{tuna-teel-cdc04} to obtain a suitable homogeneous Lyapunov function for system (\ref{eqCLS}). With the notation of \cite{tuna-teel-cdc04}, we take $\omega(\cdot)=|\cdot|$ and $\mathcal{D}=\R^n$. Then, Assumption~1 in \cite{tuna-teel-cdc04} holds by \cite[Theorems 6.30 and 7.21]{Goebel-Sanfelice-Teel-book}, as $x=0$ is GES for system (\ref{eqCLS}) and the vector field in (\ref{eqCLS}) is continuous\footnote{Recall that $A_i x=A_j x$ for any $x\in\mathcal{C}_i\cap \mathcal{C}_j$.}. Assumption 2 in \cite{tuna-teel-cdc04} also holds, as the vector field in (\ref{eqCLS}) is homogeneous of degree $0$. The last condition to check is \cite[Assumption 3]{tuna-teel-cdc04}, which is verified with $G_\lambda=\lambda \1$ and $d=1$. Consequently, by \cite[Theorem 2]{tuna-teel-cdc04}, there exist $V:\R^n\to\R_{+}$ continuous on $\R^n$, smooth on $\R^{n}\backslash\{0\}$, such that $V(\lambda x)=\lambda V(x)$ for any $\lambda >0$ and $x\in\R^n$, as well as $\underline\alpha,\overline\alpha>0$ and $\mu\in(0,1)$ such that, for any $x\in\R^n$ and any $i\in\{1,\ldots,m\}$ such that $x\in\mathcal{C}_i$,
	\begin{equation}
		\begin{array}{r@{\ }c@{\ }c@{\ }c@{\ }l}
			\underline\alpha|x| & \leq   &  V(x) & \leq & \overline\alpha|x|\\[2mm]
			V(A_i x) & \leq & \mu V(x).
		\end{array}
		\label{eq:proof-iss-lyap}
	\end{equation}
	Let $(x,v)\in\R^{n}\times\R^{n_v}$, $w\in E(x)v$, and any $i\in\{1,\ldots,m\}$ such that $x\in \mathcal{C}_i$.
	In view of (\ref{eq:proof-iss-lyap}) and \cref{res:HomLip} stated below, there exists $L\ge0$ (independent of $x$, $v$ and thus $w$) such that
	\begin{multline}
		V(A_ix+w) \le V(A_i x) + |V(A_i x+w)-V(A_i x)|\\
        \leq \mu V(x) + L|w|\le \mu V(x)+Lm_E|v|.
		\label{eq:proof-issue-first-ineq}
	\end{multline}
	Let $\mathbf{v}\in(\R^{n_v})^\N$, we derive from \eqref{eq:proof-issue-first-ineq} that for any $t\in\N$, any solution $\phi$ to (\ref{eqCLS-perturbed}) {satisfies}
	\begin{equation*}
		V(\phi(t,x,\mathbf{v})) \leq \mu^t\, V(x) + \frac{Lm_E}{1-\mu}\sup_{t'\in\{1,\ldots,t\}}|v_{t'}|.
	\end{equation*}
	We deduce the desired property with $c_1=\overline\alpha/\underline\alpha$, $c_2=-\ln(\mu)$ and $c_3=Lm_E/(\underline\alpha(1-\mu))$ by invoking the first line in~(\ref{eq:proof-iss-lyap}).
\end{proof}
We have used the next lemma in the proof of \cref{thm:iss}.
\begin{lemma}\label{res:HomLip}
	{%
	Let $V:\R^n\to\R$ be such that $V(\lambda x)=\lambda V(x)$ for every $\lambda>0$ and every $x\in\R^n$.
    $V$ is globally  Lipschitz if and only if $V$ is Lipschitz on the unit sphere of $\R^n$.
	}%
\end{lemma}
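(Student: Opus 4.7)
The forward direction is immediate: the restriction of a globally Lipschitz function on $\R^n$ to the unit sphere $S:=\{z\in\R^n\ :\ |z|=1\}$ inherits the same Lipschitz constant. The plan therefore focuses on the converse.

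The first step is to record two preliminary facts. Applying the homogeneity relation at $x=0$ with $\lambda=2$ gives $V(0)=2V(0)$, and hence $V(0)=0$. Moreover, since $V\vert_S$ is Lipschitz by assumption, it is continuous on the compact set $S$, so $M:=\sup_{z\in S}|V(z)|<\infty$. Denote by $L_0$ the Lipschitz constant of $V$ on $S$. The goal is then to show that $2L_0+M$ is a global Lipschitz constant for $V$ on $\R^n$.

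For $x,y\in\R^n$ both nonzero, the natural approach is to combine the identity $V(x)=|x|V(x/|x|)$ with the decomposition
\[
V(x)-V(y)=|x|\Big(V\big(\tfrac{x}{|x|}\big)-V\big(\tfrac{y}{|y|}\big)\Big)+(|x|-|y|)V\big(\tfrac{y}{|y|}\big),
\]
assuming without loss of generality $|x|\le|y|$. The first term is controlled using the sphere-Lipschitz constant $L_0$, while the second is bounded by $M\big||y|-|x|\big|\le M|x-y|$ via the reverse triangle inequality. The heart of the argument is then the elementary estimate
\[
|x|\left|\frac{x}{|x|}-\frac{y}{|y|}\right|\le 2|x-y|,
\]
which would be derived by writing $|y|x-|x|y=|y|(x-y)+(|y|-|x|)y$, applying the triangle inequality, and dividing by $|y|$. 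Combining these bounds yields $|V(x)-V(y)|\le (2L_0+M)|x-y|$.

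The degenerate cases $x=0$ or $y=0$ are handled separately: if for instance $y=0$, then $V(y)=0$ and $|V(x)|=|x|\,|V(x/|x|)|\le M|x|=M|x-y|$. I do not foresee any serious obstacle: the only step genuinely requiring care is the algebraic inequality $|x|\,|x/|x|-y/|y||\le 2|x-y|$; the rest is routine bookkeeping on top of the homogeneity relation.
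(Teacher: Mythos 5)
Your proof is correct, and it takes a genuinely different route from the paper's. You argue directly: writing $V(x)=|x|V(x/|x|)$, splitting $V(x)-V(y)$ into a ``sphere part'' and a ``radial part'', and controlling the sphere part via the elementary estimate $|x|\left|\tfrac{x}{|x|}-\tfrac{y}{|y|}\right|=\tfrac{\left||y|x-|x|y\right|}{|y|}\le 2|x-y|$, which your decomposition $|y|x-|x|y=|y|(x-y)+(|y|-|x|)y$ indeed yields. This produces the explicit global Lipschitz constant $2L_0+M$ with $M=\sup_{|z|=1}|V(z)|$, and your treatment of the degenerate case $y=0$ via $V(0)=0$ is exactly what is needed. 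The paper instead argues by contradiction: it assumes $V$ is not Lipschitz on the closed unit ball, extracts sequences $x_k,y_k$ with $|V(x_k)-V(y_k)|>k|x_k-y_k|$, normalizes $x_k$ to the sphere, shows $|y_k|\to1$, projects $y_k$ to $z_k=y_k/|y_k|$ and derives $k<2L+\max_{\partial D}|V|$; it also invokes, without detailed proof, the reduction ``Lipschitz on the ball plus homogeneity implies globally Lipschitz.'' Your direct argument is arguably cleaner: it is quantitative (explicit constant), avoids the extraction of subsequences, and in fact subsumes the reduction step the paper leaves implicit, since your computation is precisely what justifies passing from the sphere to all of $\R^n$.
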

{A consequence of Lemma \ref{res:HomLip}  is that, if $V$ is $C^1$ on $\R^n\setminus\{0\}$, then it is globally  Lipschitz on~$\R^n$.}
\begin{proof}
We set $D=\mathbb{B}(0,1)$ and $\partial D=D{\backslash } \inter(\mathbb{B}(0,1))$, the unit sphere of $\R^n$.
    Obviously, if $V$ is Lipschitz on $\R^n$, then $V$ is Lipschitz on~$\partial D$. Reciprocally, we assume that $V$ is Lipschitz on $\partial D$.
	It is enough to prove that $V$ is Lipschitz on~$D$.
	In fact, by homogeneity, if $V$ is Lipschitz on $D$, then $V$ is (globally) Lipschitz on~$\R^n$.
	Assume by contradiction that $V$ is not Lipschitz on $D$.
	Then, for every $k\in\N$, there exists $x_k,y_k\in D$ such that $|V(x_k)-V(y_k)|>k|x_k-y_k|$.
	Obviously, we have $x_k\neq y_k$, and without loss of generality, we can assume that $|x_k|{>}|y_k|$.
	Using the homogeneity of $V$, we can also assume without loss of generality that $x_k\in\partial D$.
	In addition, since $k|x_k-y_k|<|V(x_k)-V(y_k)|\le 2\max_{\xi\in\partial D}|V(\xi)|$, we get that $\lim_{k\to\infty}|y_k|=1$, and in particular, for $k$ large enough, we have $|y_k|\neq0$.
	We then define $z_k=y_k/|y_k|\in\partial D$, and we have
	\begin{align*}
	   |V(x_k)-V(z_k)|
	   & \ge |V(x_k)-V(y_k)|-|V(y_k)-V(z_k)|\\
	   & = |V(x_k)-V(y_k)|-(1-|y_k|)|V(z_k)|\\
	   & > k|x_k-y_k|-(1-|y_k|)\max_{\partial D}|V|.
	\end{align*}
	Since $V$ is Lipschitz on $\partial D$, there {exists} $L\in\R_+$ such that $|V(x_k)-V(z_k)|\le L|x_k-z_k|\le L\left( |x_k-y_k|+(1-|y_k|) \right)$.
	We thus have,
	$$k|x_k-y_k|<L|x_k-y_k|+\left( L+\max_{\partial D}|V| \right)(1-|y_k|).$$
	But, $1-|y_k|=|x_k|-|y_k|\le |x_k-y_k|$, and finally, for every $k\in\N$, we should have $k<2L+\max_{\partial D}|V|$ (recall that $x_k\neq y_k$) which is a contradiction.
\end{proof}

\section{CLS with a finite number of switches}\label{sect:cls-finite-number-switches}

The objective of this section is to provide conditions under which all the solutions to (\ref{eqCLS}) exhibit a finite number of switches, as required by \cref{thm:stability}.
We will actually focus on a stronger property, that is that there exists $p\in\N$ such that $\varsigma(x)\leq p$ for any $x\in\R^n$, where we recall that $\varsigma(x)$ is the number of switches exhibited by the solution of~\eqref{eqCLS} initialized at~$x$, see Definition~\ref{def-switch}. To this end, we first give a sufficient condition based on the emptiness of the intersection of a finite number of sets, see~\cref{pg:Helly}.
To ease the testing of this condition, we translate it in \cref{sect:positivity-condition} in terms of checking the non-negativity of linear forms, by exploiting Farkas lemma. As a result, the problem reduces to verify the non-negativity of a single solution to an auxiliary linear discrete-time system. 

In \cref{pg:Helly}, we consider the general case $m\ge2$, while for the sake of clarity and readability only, we focus on the case $m=2$ in \cref{subsect:m=2}.
Note however that the methodology described in \cref{subsect:m=2} is easily extendable to the general case.

\subsection{A sufficient condition}\label{pg:Helly}

Given $p>0$ time instants $t_1,\dots,t_p>0$ and $t\ge0$, and given indexes $i_1,\dots,i_{p+1}\in\{1,\dots,m\}$ with $i_{k+1}\neq i_{k}$ for $k\in\{1,\dots,p\}$, we define the set $\Sigma_{\subalign{&t_1,\dots,t_p,t\\&i_1,\dots,i_p,i_{p+1}}}$ of initial conditions~$x_0$ such that the solution to~\eqref{eqCLS} satisfies $x_s=A_{i_1}^sx_0\in\mathcal{C}_{i_1}$ for $0\le s<t_1$, $x_{t_1+s}=A_{i_2}^sA_{i_1}^{t_1}x_0\in\mathcal{C}_{i_2}$ for $0\le s<t_2$, etc, and finally $x_{t_1+\dots+t_p+s}=A_{i_{p+1}}^sA_{i_p}^{t_p}\ldots A_{i_1}^{t_1}x_0\in\mathcal{C}_{i_{p+1}}$ for $0\le s\le t$.
The first step is to give a condition under which any solution to~\eqref{eqCLS} initialized in $\Sigma_{\subalign{&t_1,\dots,t_p,t\\&i_1,\dots,i_p,i_{p+1}}}$ admits {no more than $p$ switches}.
{In other words, using notations of \cref{sect:problem-statement}, we will give conditions  ensuring that if $(k_t)_{t\in\N}\in \mathcal{I}(x_0)$ is such that $\left\{ t_1,t_1+t_2,\ldots,t_1+\cdots+t_p \right\}\subset \mathcal{T}\left( (k_t)_t \right)\cap \{1,\ldots,t_1+\cdots+t_p\}$ and $k_{t_1-1}=i_1,\, \ldots\, ,k_{t_1+\cdots+t_p-1}=i_p$ and $k_{t_1+\cdots+t_p+s}=i_{p+1}$ for every $s\in\{0,\ldots,t\}$, then we have $\jmath\left( (k_t)_t \right)=p$.
	This in particular ensures that $\sigma(x_0)\le p$, according to \cref{def-switch}.}

Writing down the definition of $\Sigma_{\subalign{&t_1,\dots,t_p,t\\&i_1,\dots,i_p,i_{p+1}}}$, we obtain that this set is given by the following intersections
\begin{equation}\label{eq:set-Sigma}
	\Sigma_{\subalign{&t_1,\dots,t_p,t\\&i_1,\dots,i_p,i_{p+1}}}
	= \bigcap_{\tau=0}^t \left(S_{\subalign{&t_1,\dots,t_p,\tau\\&i_1,\dots,i_p,i_{p+1}}}\right)
	\cap \left(\bigcap_{k=1}^p\bigcap_{\tau=0}^{t_k-1} S_{\subalign{&t_1,\dots,t_{k-1},\tau\\&i_1,\dots,i_{k-1},i_k}}\right),
\end{equation}
where{, for every $k\in\N$, $t_1,\cdots,t_k\in\N^*$, $\tau\in\N$ and $i_1,\cdots,i_k,i\in\{1,\dots,m\}$ we have set}
\begin{equation}\label{eq:set-S}
	S_{\subalign{&t_1,\dots,t_{k},\tau\\&i_1,\dots,i_{k},i}}:=\left\{ x_{{0}}\in\R^n\ \,:\,\ A^{{\tau}}_iA_{i_k}^{t_k}\dots A_{i_1}^{t_1}x_{{0}}\in \mathcal{C}_{{i}} \right\}.
\end{equation}
For notation convenience, in~\eqref{eq:set-Sigma}, for $k=1$ (resp. $k=2$), $S_{\subalign{&t_1,\dots,t_{k-1},\tau\\&i_1,\dots,i_{k-1},i_k}}$ is identified with $S_{\subalign{&\tau\\&i_1}}:=\big\{ x_{{0}}\in\R^n\ \,:\,\ A^\tau_{i_1}x_{{0}}\in \mathcal{C}_{i_1} \big\}$ (resp. $S_{\subalign{&t_1,\tau\\&i_1,i_2}}=\big\{ x_{{0}}\in\R^n\ \,:\,\ A^\tau_{i_2}A_{i_1}^{t_1}x_{{0}}\in \mathcal{C}_{i_2} \big\}$).

To give a better intuition on the meaning of the sets $\Sigma_{\subalign{&t_1,\dots,t_p,t\\&i_1,\dots,i_p,i_{p+1}}}$ and $S_{\subalign{&t_1,\dots,t_{k},\tau\\&i_1,\dots,i_{k},i}}$, we consider a solution of \eqref{eqCLS} that is initialized at some $x_0\in\mathcal{C}_1$, stays in $\mathcal{C}_1$ for times $t\in\{0,1\}$,  enters in $\mathcal{C}_2$ in time $t=2$ and stays in $\mathcal{C}_2$ at time $t=3$.
We then have:
\begin{itemize}[nosep]
	\item $x_0\in\mathcal{C}_1=S_{\subalign{&0\\&1}}$;
	\item $x_1=A_1x_0\in\mathcal{C}_1$, i.e., $x_0\in S_{\subalign{&1\\&1}}=\left\{ x_0\in\R^n\ :\ A_1x_0\in\mathcal{C}_1 \right\}$;
	\item $x_2=A_1x_1=A_1^2x_0\in\mathcal{C}_2$, i.e., $x_0\in S_{\subalign{&2,0\\&1,2}}=\left\{ x_0\in\R^n\ :\ A_1^2x_0\in\mathcal{C}_2 \right\}$;
	\item $x_3=A_2x_2=A_2A_1^2x_0\in\mathcal{C}_2$, i.e., $x_0\in S_{\subalign{&2,1\\&1,2}}=\left\{ x_0\in\R^n\ :\ A_2A_1^2x_0\in\mathcal{C}_2 \right\}$.
\end{itemize}
Gathering these constraints, we obtain $\displaystyle x_0\in \left( \bigcap_{\tau=0}^1 S_{\subalign{&2,\tau\\&1,2}} \right)\cap \left( \bigcap_{\tau=0}^1 S_{\subalign{&\tau\\&1}} \right)=\Sigma_{\subalign{&2,1\\&1,2}}$.
Reciprocally, if $x_0\in\Sigma_{\subalign{&2,1\\&1,2}}$ the solution initialized from this $x_0$ satisfies:
\begin{itemize}[nosep]
	\item $x_0\in S_{\subalign{&0\\&1}}=\mathcal{C}_1$, thus $x_1=A_1x_0$;
	\item $x_0\in S_{\subalign{&1\\&1}}$, thus $x_1=A_1x_0\in\mathcal{C}_1$ and $x_2=A_1x_1=A_1^2 x_0$;
	\item $x_0\in S_{\subalign{&2,0\\&1,2}}$, thus $x_2=A_1^2x_0\in\mathcal{C}_2$ and $x_3=A_2x_2=A_2A_1^2 x_0$;
	\item $x_0\in S_{\subalign{&2,1\\&1,2}}$, thus $x_3=A_2A_1^2x_0\in\mathcal{C}_2$.
\end{itemize}
We also emphasize  that, without additional information, $x_0\in S_{\subalign{&t\\&i}}$ does not {imply} $x_0\in S_{\subalign{&t-1\\&i}}$.
{In addition, if $x_0\in S_{\subalign{&t_1,\dots,t_{k},\tau\\&i_1,\dots,i_{k},i}}$, without additional conditions (which in fact, leads to $x_0\in \Sigma_{\subalign{&t_1,\dots,t_{k},\tau\\&i_1,\dots,i_{k},i}}$), nothing ensures that the sequence defined~by
	\[x_1=A_{i_1}x_0,\ \ldots,\ x_{t_1}=A_{i_1}^{t_1}x_0,\ x_{t_1+1}=A_{i_2}A_{i_1}x_0,\ \ldots\]
	is solution of~\eqref{eqCLS}.
	In fact, nothing ensures that $x_t$ is in the correct cone (for instance, nothing ensures that $x_0\in\mathcal{C}_{i_1}$).}

We aim to give conditions ensuring that any {nontrivial} solution to \eqref{eqCLS} exhibits no more than $p$ switches.
This is ensured by enforcing that, if the solution associated with an initial condition has already {\textsl{changed} $p$ times of cone}, then it cannot switch anymore.
{That is to say that $\Sigma_{\subalign{&t_1,\dots,t_p,t,0\\&i_1,\dots,i_p,i_{p+1},i}}=\{0\}$ for every $t\ge1$, and every $t_1,\dots,t_p\in\N^*$ and $i_1,\dots,i_{p+1},i\in\{1,\dots,m\}$, with $i_{k+1}\neq i_k$ and $i\neq i_{p+1}$.
	The condition $\Sigma_{\subalign{&t_1,\dots,t_p,t,0\\&i_1,\dots,i_p,i_{p+1},i}}=\{0\}$ for every $t_1,\dots,t_{p+1}\in\N^*$ and every $t\in\N$ explicitly means that the only initial condition for the system~\eqref{eqCLS} such that the corresponding solution of~\eqref{eqCLS} has visited successively the cones $\mathcal{C}_{i_1},\ldots,\mathcal{C}_{i_{p+1}}$, and will visit the cone~$\mathcal{C}_i$, is~$0$.
	We recall that $0 \in \bigcap_{i=1}^m\mathcal{C}_i$, and hence, $\mathcal{I}(0)=\{1,\ldots,m\}^\N$.}
A sufficient condition for this is as follows.
\begin{proposition}\label{res:Simp}
	Given $p\in\N^\star$, $i_1,\dots,i_p,i_{p+1}\in\{1,\dots,m\}$ with $i_{k+1}\neq i_k$ for every $k\in\{1,\dots,p\}$, and given $t_1,\dots,t_p\in\N^\star$, we also set $t_{p+1}=1$.
	If there exist $\kappa\in\{1,\dots,p+1\}$, $1\leq j_1\leq\dots\leq j_\kappa\leq p+1$ and for every $k\in\{1,\dots,\kappa\}$, there exist $r_k\in\N$ and $\tau_{k,1},\dots,\tau_{k,r_k}\in\{0,\dots,t_{j_k}-1\}$, such that $\sum_{k=1}^\kappa r_k\leq n$ and
	\begin{equation}\label{eq:interS}
		\left( \bigcap_{k=1}^\kappa\bigcap_{\ell={1}}^{r_k} S_{\subalign{&t_{j_1},\dots,t_{j_{k-1}},\tau_{k,\ell}\\&i_{j_1},\dots,i_{j_{k-1}},i_{j_k}}}\right)\cap S_{\subalign{&t_1,\dots,t_p,t,0\\&i_1,\dots,i_p,i_{p+1},i}}={\{0\}},
	\end{equation}
	for every $t\in\N^\star$ and every $i\in\{1,\dots,m\}\backslash\{i_{p+1}\}$, then all the solutions to \eqref{eqCLS} initialized in $\Sigma_{\subalign{&t_1,\dots,t_p,0\\&i_1,\dots,i_p,i_{p+1}}}{\setminus\{0\}}$ exhibit no more than~$p$ switches.
\end{proposition}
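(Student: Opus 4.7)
The plan is to argue by contradiction. Pick $x_0 \in \Sigma_{t_1,\dots,t_p,0;i_1,\dots,i_p,i_{p+1}} \setminus \{0\}$ and suppose that the solution $\phi(\cdot,x_0)$ to~\eqref{eqCLS} admits strictly more than $p$ switches in the sense of \cref{def-switch}. Since membership in $\Sigma_{t_1,\dots,t_p,0;i_1,\dots,i_p,i_{p+1}}$ already forces the trajectory to visit $\mathcal{C}_{i_1},\dots,\mathcal{C}_{i_p}$ for the prescribed durations and then enter $\mathcal{C}_{i_{p+1}}$, exceeding $p$ switches is possible only if the solution leaves $\mathcal{C}_{i_{p+1}}$ at some later instant. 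Let $t' \ge 1$ be the smallest integer with $\phi(t_1+\dots+t_p+t', x_0) \notin \mathcal{C}_{i_{p+1}}$.

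For $s \in \{t_1+\dots+t_p,\dots,t_1+\dots+t_p+t'-1\}$, one has $\phi(s,x_0) \in \mathcal{C}_{i_{p+1}}$, so using \cref{ass_cont} to rule out ambiguities when intermediate states lie on cone boundaries, the target state can be explicitly written as $\phi(t_1+\dots+t_p+t',x_0) = A_{i_{p+1}}^{t'} A_{i_p}^{t_p} \cdots A_{i_1}^{t_1} x_0$. Since $\R^n = \bigcup_i \mathcal{C}_i$, there exists $i \in \{1,\dots,m\} \setminus \{i_{p+1}\}$ such that this state belongs to $\mathcal{C}_i$, which is precisely $x_0 \in S_{t_1,\dots,t_p,t',0;i_1,\dots,i_{p+1},i}$. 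Simultaneously, inspection of~\eqref{eq:set-Sigma} and~\eqref{eq:set-S} with the canonical choice $j_k=k$ shows that every $S_{t_{j_1},\dots,t_{j_{k-1}},\tau_{k,\ell};i_{j_1},\dots,i_{j_{k-1}},i_{j_k}}$ entering the first factor of~\eqref{eq:interS} is one of the sets intersected in the definition of $\Sigma_{t_1,\dots,t_p,0;i_1,\dots,i_p,i_{p+1}}$, whence $x_0$ lies in this first factor as well. Condition~\eqref{eq:interS} then forces $x_0 = 0$, contradicting $x_0 \neq 0$.

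Consequently $\phi(t,x_0) \in \mathcal{C}_{i_{p+1}}$ for every $t \ge t_1+\dots+t_p$, and the sequence $(k_t)_t$ defined by $k_t = i_q$ on the $q$-th prescribed interval and $k_t = i_{p+1}$ thereafter belongs to $\mathcal{I}(x_0)$ with $\jmath\!\left( (k_t)_t \right) \le p$; by \cref{def-switch} this yields $\varsigma(x_0) \le p$, as required. The main delicate point in executing this plan is checking that the sets in~\eqref{eq:interS} are implied by $x_0 \in \Sigma$; this is transparent for $j_k=k$ but needs care if one wants to exploit the more general non-decreasing choices $1\le j_1\le\dots\le j_\kappa\le p+1$ permitted by the statement. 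A second subtle point is the use of \cref{ass_cont} to evaluate the trajectory unambiguously at cone boundaries, so that $A_{i_{p+1}}^{t'} A_{i_p}^{t_p}\cdots A_{i_1}^{t_1} x_0$ genuinely describes the state at time $t_1+\dots+t_p+t'$. The cardinality bound $\sum_k r_k \le n$ plays no role in the logical argument itself and should be interpreted as a tractability requirement enabling the Farkas-based reformulation of \cref{sect:positivity-condition}.
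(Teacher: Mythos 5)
Your proof is correct and rests on the same mechanism as the paper's: condition \eqref{eq:interS} forces $\Sigma_{t_1,\dots,t_p,t,0;\,i_1,\dots,i_p,i_{p+1},i}$ to reduce to $\{0\}$ for every $t\in\N^\star$ and $i\neq i_{p+1}$, so no nonzero initial condition in $\Sigma_{t_1,\dots,t_p,0;\,i_1,\dots,i_p,i_{p+1}}$ can produce a further switch. The paper's proof consists only of the set-inclusion chain $\Sigma_{\dots,t,0}\subset \Sigma_{\dots,0}\cap S_{\dots,t,0}\subset{}$(left-hand side of \eqref{eq:interS}), whereas you make explicit the trajectory-level half of the argument that the paper leaves implicit --- the first exit time from $\mathcal{C}_{i_{p+1}}$, the covering of $\R^n$ by the cones to produce the index $i\neq i_{p+1}$, and the exhibition of a mode sequence with exactly $p$ jumps --- so your write-up is, if anything, more complete. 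The caveat you flag about the non-consecutive indices $1\le j_1\le\dots\le j_\kappa\le p+1$ is a genuine imprecision, but it is the paper's rather than yours: as literally written, $S_{t_{j_1},\dots,t_{j_{k-1}},\tau_{k,\ell};\,i_{j_1},\dots,i_{j_{k-1}},i_{j_k}}$ coincides with one of the sets defining $\Sigma_{\dots,0}$ only when $j_\ell=\ell$ for $\ell<k$, and both your inclusion and the paper's second ``$\subset$'' require reading the prefix as $t_1,\dots,t_{j_k-1}$ (a sub-selection of the constraints defining $\Sigma$), which is clearly the intended meaning.
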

{We make several comments before proving Proposition \ref{res:Simp}.
	\begin{itemize}[nosep]
		\item As claimed in the paragraph preceding \cref{res:Simp}, the aim of this proposition is to give conditions ensuring that $\Sigma_{\subalign{&t_1,\dots,t_p,t,0\\&i_1,\dots,i_p,i_{p+1},i}}={\{0\}}$ for every $t\in\N^*$.
		\item The proof of this result relies on the fact that given a sequence of sets ${(S_i)}_{i\in I}${, such that $0\in\bigcap_{i\in I}S_i$}, if there exists $J\subset I$ such that $\bigcap_{j\in J}S_J={\{0\}}$ then $\bigcap_{i\in I}S_i={\{0\}}$.
		\item Using the abstract notation of the previous point, we add two constraints on the possible choices of~$J$:
			\begin{enumerate}[nosep]
				\item the cardinality of $J$ is not greater than $n$ (this constraint is $\sum_{k=1}^\kappa r_k\leq n$).
					This constraint is added for the numerical verification of the test (see~\cref{unboundtn}), where a $n\times n$ matrix will be constructed from these sets;
				\item we do not consider sets ${(S_j)}_{j\in J}$ of the form $S_{\subalign{&t_1,\dots,t_p,\tau\\&i_1,\dots,i_p,i_{p+1}}}$ (with $\tau>0$).
					This fact is expressed by the condition $t_{p+1}=1$, together with $\tau_{k,1},\dots,\tau_{k,r_k}\in\{0,\dots,t_{j_k}-1\}$ (that is to say that if $j_{\kappa}=p+1$, we enforce $r_\kappa=1$ and $\tau_{\kappa,r_\kappa}=0$).
			\end{enumerate}
	\end{itemize}
}
\begin{proof}[Proof {of \cref{res:Simp}}]
	For every $t\in\N^*$, we have
    \begin{align*}
	   &\Sigma_{\subalign{&t_1,\dots,t_p,t,0\\&i_1,\dots,i_p,i_{p+1},i}}
		 \subset \Sigma_{\subalign{&t_1,\dots,t_p,0\\&i_1,\dots,i_p,i_{p+1}}}\cap S_{\subalign{&t_1,\dots,t_p,t,0\\&i_1,\dots,i_p,i_{p+1},i}}\\&\hspace{10mm}
		 {= \left(\bigcap_{k=1}^p\bigcap_{\tau=0}^{t_k-1} S_{\subalign{&t_1,\dots,t_{k-1},\tau\\&i_1,\dots,i_{k-1},i_k}}\right)\cap S_{\subalign{&t_1,\dots,t_p,0\\&i_1,\dots,i_p,i_{p+1}}}\cap S_{\subalign{&t_1,\dots,t_p,t,0\\&i_1,\dots,i_p,i_{p+1},i}}}\\&\hspace{10mm}
		 \subset \left( \bigcap_{k=1}^\kappa\bigcap_{\ell={1}}^{r_k} S_{\subalign{&t_{j_1},\dots,t_{j_{k-1}},\tau_{k,\ell}\\&i_{j_1},\dots,i_{j_{k-1}},i_{j_k}}}\right)\cap S_{\subalign{&t_1,\dots,t_p,t,0\\&i_1,\dots,i_p,i_{p+1},i}}.
    \end{align*}
\end{proof}
We are ready to state conditions under which \emph{any} solution to (\ref{eqCLS}), and not only those initialized in $\Sigma_{\subalign{&t_1,\dots,t_p,0\\&i_1,\dots,i_p,i_{p+1}}}$ as in \cref{res:Simp}, exhibits at most $p$ switches.
\begin{theorem}\label{res:SimpComb}
	{Under Assumption~\ref{ass:regular}, g}iven $p\in\N^\star$, if for every $i_1,\dots,i_{p+1}\in\{1,\dots,m\}$ and every $t_2,\dots,t_{p}\in\N^\star$, the conditions of \cref{res:Simp} are satisfied with $t_1=1$, then all the possible {nontrivial} solutions of~\eqref{eqCLS} admit at most $p$ switches.
\end{theorem}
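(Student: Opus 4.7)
The plan is to argue by contradiction, combining \cref{res:Simp} with the invertibility granted by \cref{ass:regular}. Suppose there exists $x_0\in\R^n\setminus\{0\}$ with $\varsigma(x_0)\geq p+1$. Pick a minimizing sequence $(i_t^*)_t\in\mathcal{I}(x_0)$ with $\jmath((i_t^*)_t)=\varsigma(x_0)$, and let $t_1,\ldots,t_p\in\N^\star$ be its first $p$ block lengths and $\alpha_1,\ldots,\alpha_{p+1}\in\{1,\ldots,m\}$ the first $p+1$ visited cones (so $\alpha_{k+1}\neq\alpha_k$).

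To match the setting of the theorem, where $t_1=1$, I would shift forward by $t_1-1$ steps and set $y_0:=A_{\alpha_1}^{t_1-1}x_0=\phi(t_1-1,x_0)$. By \cref{ass:regular}, $A_{\alpha_1}$ is invertible, hence $y_0\neq 0$. The semigroup identity $\phi(t,y_0)=\phi(t+t_1-1,x_0)$ combined with the minimizing sequence directly shows $y_0\in\Sigma_{\subalign{&1,t_2,\ldots,t_p,0\\&\alpha_1,\ldots,\alpha_p,\alpha_{p+1}}}\setminus\{0\}$.

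The hypothesis of the theorem then allows \cref{res:Simp} to be invoked with parameters $1,t_2,\ldots,t_p$ and $\alpha_1,\ldots,\alpha_{p+1}$. Inspecting its proof, I would deduce that $\Sigma_{\subalign{&1,t_2,\ldots,t_p,t,0\\&\alpha_1,\ldots,\alpha_p,\alpha_{p+1},\alpha}}=\{0\}$ for every $t\in\N^\star$ and every $\alpha\in\{1,\ldots,m\}\setminus\{\alpha_{p+1}\}$. Since $y_0\neq 0$ and the cones cover $\R^n$, iterating in time forces $\phi(s,y_0)\in\mathcal{C}_{\alpha_{p+1}}$ for every $s\geq 1+t_2+\cdots+t_p$. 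Consequently the natural mode sequence $(j_t)_t$ taking value $\alpha_1$ at $t=0$, then $\alpha_k$ on $t_k$ consecutive steps for $k=2,\ldots,p$, and finally $\alpha_{p+1}$ indefinitely, lies in $\mathcal{I}(y_0)$, exhibits exactly $p$ switches and satisfies $j_0=\alpha_1$.

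To conclude, I would pull $(j_t)_t$ back to $x_0$ by prepending $t_1-1$ copies of $\alpha_1$: set $i_t=\alpha_1$ for $t\in\{0,\ldots,t_1-2\}$ and $i_{t_1-1+s}=j_s$ for $s\in\N$. Membership $(i_t)_t\in\mathcal{I}(x_0)$ follows from the first block of the minimizing sequence for $x_0$, and the identity $j_0=\alpha_1$ prevents any additional switch at the junction $t=t_1-1$. Thus $(i_t)_t$ admits exactly $p$ switches, so $\varsigma(x_0)\leq p$, contradicting $\varsigma(x_0)\geq p+1$. The delicate point is the third step: \cref{res:Simp} as stated gives only the bare bound $\varsigma(y_0)\leq p$, so I have to extract from its proof the stronger information that a \emph{specific} mode sequence starting with $\alpha_1$ realizes this bound, a property that is essential for the backward extension to add no extra switch.
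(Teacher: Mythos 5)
Your argument is correct and follows essentially the same route as the paper: argue by contradiction, shift the initial condition by $A_{\alpha_1}^{t_1-1}$ (nonzero thanks to Assumption~\ref{ass:regular}) to reduce to the case $t_1=1$, and invoke the content of the proof of Proposition~\ref{res:Simp} to conclude that the relevant sets $\Sigma_{\subalign{&1,t_2,\dots,t_p,t,0\\&\alpha_1,\dots,\alpha_p,\alpha_{p+1},\alpha}}$ reduce to $\{0\}$. The paper is merely more direct in the closing step: it uses the $(p+1)$-th switch to place the shifted point straight into $\Sigma_{\subalign{&1,t_2,\dots,t_{p+1},0\\&i_1,\dots,i_{p+1},i_{p+2}}}=\{0\}$, whereas you reconstruct an explicit $p$-switch mode sequence and handle the junction; your extra care there is sound but unnecessary in that shorter formulation.
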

\begin{proof}
	Assume by contradiction that there exists $x_0\in\R^n{\backslash\{0\}}$ such that the solution initialized at $x_0$ switches more that $p$ times.
	Then there exists $i_1,\dots,i_p,i_{p+1},i_{p+2}$ (with $i_{k+1}\neq i_k$) and $t_1,\dots,t_p,t_{p+1}\in\N^\star$ such that $A^{t_1-1}_{i_1}x_0$ belongs to $\Sigma_{\subalign{&1,t_2,\dots,t_{p+1},0\\&i_1,i_2,\dots,i_{p+1},i_{p+2}}}$.
	This leads to a contradiction, since by Proposition~\ref{res:Simp}, this set is {reduced to~$\{0\}$, and by Assumption~\ref{ass:regular}, $A_{i_1}$ is invertible}.
\end{proof}
{\begin{remark}
	Under the assumptions of \cref{res:SimpComb}, we have in fact a stronger result, which is $\max\left\{ \jmath\left( (k_t)_t \right)\ :\ (k_t)_{t\in\N}\in\mathcal{I}(x) \right\}\le p$ for every $x\in\R^n\setminus\{0\}$.
	Obviously, this ensures the claim of \cref{res:SimpComb}: $\varsigma(x)\le p$ for every $x\in\R^n\setminus\{0\}$.
\end{remark}}

From a computational point of view, the main difficulties with the result of Theorem \ref{res:SimpComb} are related to checking the intersection (\ref{eq:interS}) for every $t\in\N^\star$ and every $i\in\{1,\dots,m\}\backslash\{i_{p+1}\}$.
The problem is combinatorial by nature. However, we provide in the sequel, tractable conditions for the case $m=2$.
In \cref{subsect:m=2}, we propose a solution to tackle the fact that $t$ is a priori not bounded.
It consists in using Farkas lemma, which is a particular case of the S-procedure dedicated to linear forms, to transform the problem into assessing the non-negativity of a specific solution to an auxiliary discrete-time system. We propose tractable conditions to check this non-negativity in \cref{sect:positivity-condition}.
As to the combinatorial complexity related to checking the intersection (\ref{eq:interS}) for every $i\in\{1,\dots,m\}\backslash\{i_{p+1}\}$, the situation is manageable as we will manipulate sets defined by linear inequalities, and checking if their intersection is {reduced to~$\{0\}$} can be done using linear programming.

\subsection{When $m=2$}\label{subsect:m=2}

To streamline the discussion and maintain clarity, we focus on the case where $m=2$, i.e., the partition of $\R^{n}$ is made of two cones, while keeping in mind that the results and concepts presented here can be readily extended to the general case when $m\in\N^\star$.
Note that the case where $m=2$ is notoriously difficult \cite{Blondel1999}.

When $m=2$, the cones $\mathcal{C}_1$ and $\mathcal{C}_2$ can be written as in~\eqref{eqcvx12} for some matrix $K\in\R^{1\times {n}}$, which is not the same as in \cref{sect:motivating-example} in general.
{Recall that we assumed that the cones~$\mathcal{C}_i$ are convex, hence, for $m=2$ the two cones~$\mathcal{C}_1$ and~$\mathcal{C}_2$ are half-spaces of~$\R^n$.}

\subsubsection{General result}\label{subsubsect:m=2-general}

Since the indexes $i_k$ used in Section~\mbox{\ref{pg:Helly}}, are such that $i_{k+1}\neq i_k$, the only possibilities, when $m=2$ are the sequences $1,2,1,\dots$ or $2,1,2,\dots$
This observation together with the expression of $\mathcal{C}_1$ and $\mathcal{C}_2$ given in~\eqref{eqcvx12}, leads to set
\begin{multline*}
	\mathcal{S}_{i_1;t_1,\dots,t_p,t}:=\mathcal{S}_{\subalign{&t_1,\dots,t_p,t\\&[i_1],\dots,[i_1+p]}}\\
	=\left\{ x\in\R^n\ :\ (-1)^{i_1+p}KA_{[i_1+p+1]}^tA_{[i_1+p]}^{t_p} \ldots
	A_{[i_1]}^{t_1}x\geq 0 \right\},
\end{multline*}
where, for every $i\in\N$, we have defined $[2i+1]=1$ and $[2i]=2$.
Similarly, we define $\Sigma_{i_1;t_1,\dots,t_p,t}$.
Given $i_1\in\{1,2\}$ if we start from an initial condition in $\mathcal{C}_{i_1}$, the solution stays for the remaining times in $\mathcal{C}_{i_1}$ or enters in $\mathcal{C}_{[i_1+1]}$.
Hence, the maximal number of switches starting from $\mathcal{C}_{i_1}$ is bounded by one plus the maximal number of switches starting from~$\mathcal{C}_{[i_1+1]}$.
This observation leads to the next corollary of \cref{res:SimpComb}.
\begin{corollary}\label{cor:m-2}
	Under \cref{ass:regular}, consider the system (\ref{eqCLS}) with $m=2$.
	Given $p\in\N^\star$, if there exists $i_1\in\{1,2\}$ such that for every $t_2,\dots,t_p\in\N^\star$, the conditions of \cref{res:Simp} are satisfied with $t_1=1$, then each {nontrivial} solution of~\eqref{eqCLS} admits at most $p+1$ switches.
\end{corollary}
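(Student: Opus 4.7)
The plan is to bootstrap the bound from \cref{res:SimpComb} by exploiting that, with $m=2$, once a starting cone is fixed the sequence of visited cones is forced to alternate between $\mathcal{C}_{i_1}$ and $\mathcal{C}_{[i_1+1]}$. Hence, although the hypothesis of \cref{cor:m-2} only assumes the conditions of \cref{res:Simp} for \emph{one} specific $i_1\in\{1,2\}$, this is enough to control every nontrivial solution initialized in $\mathcal{C}_{i_1}$, and solutions initialized in $\mathcal{C}_{[i_1+1]}$ can then be treated by paying a single additional switch.

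First I would establish the intermediate claim that every nontrivial solution $\phi(\cdot,x_0)$ with $x_0\in\mathcal{C}_{i_1}\setminus\{0\}$ satisfies $\varsigma(x_0)\leq p$. The proof mimics that of \cref{res:SimpComb}: assume by contradiction that such a solution admits more than $p$ switches. Since $m=2$, the only compatible sequence of visited cones is $\mathcal{C}_{i_1},\mathcal{C}_{[i_1+1]},\mathcal{C}_{i_1},\ldots$, so there exist $t_1,\ldots,t_{p+1}\in\N^\star$ such that $A_{i_1}^{t_1-1}x_0$ belongs to the set
$\Sigma_{\subalign{&1,t_2,\ldots,t_{p+1},0\\&i_1,[i_1+1],\ldots,[i_1+p+1]}}$.
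By the hypothesis of the corollary combined with \cref{res:Simp}, this set is reduced to $\{0\}$, so $A_{i_1}^{t_1-1}x_0=0$, and invertibility of $A_{i_1}$ from \cref{ass:regular} forces $x_0=0$, a contradiction.

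Next I would pick an arbitrary $x_0\in\R^n\setminus\{0\}$ and split cases. If $x_0\in\mathcal{C}_{i_1}$, the intermediate claim directly gives $\varsigma(x_0)\leq p\leq p+1$. Otherwise $x_0\in\mathcal{C}_{[i_1+1]}$; either every iterate remains in $\mathcal{C}_{[i_1+1]}$, in which case $\varsigma(x_0)=0$, or there is a first time $t_\star\in\N^\star$ at which $\phi(t_\star,x_0)\in\mathcal{C}_{i_1}$. By \cref{ass:regular} the state $\phi(t_\star,x_0)$ is nonzero, so the intermediate claim applied to it bounds the number of switches occurring after $t_\star$ by $p$. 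Concatenating with the single switch at $t_\star$ yields at most $p+1$ switches, which is the announced bound.

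The main technical point — not really an obstacle — is the bookkeeping required by \cref{def-switch}, which defines $\varsigma$ as a minimum over compatible mode sequences in $\mathcal{I}(x)$. In the second case above, one selects the mode sequence that sticks to index $[i_1+1]$ up to time $t_\star$ and then concatenates an optimal mode sequence for $\phi(t_\star,x_0)$; its number of jumps is at most $1+p$, and by definition of $\varsigma$ as a minimum this yields $\varsigma(x_0)\leq p+1$, completing the argument.
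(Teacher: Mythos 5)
Your proposal is correct and follows essentially the same route the paper intends: the paper justifies \cref{cor:m-2} only by the one-line observation preceding it (a solution starting in one cone either never switches or, after one switch, behaves like a solution starting in the other cone, whose switches are bounded by the \cref{res:SimpComb}-type argument for the single alternating index sequence fixed by $i_1$). Your intermediate claim and the case split on the starting cone are a faithful and correctly detailed elaboration of exactly that argument, including the use of \cref{ass:regular} to keep the state nonzero and the bookkeeping with $\varsigma$ as a minimum over mode sequences.
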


\subsubsection{Tractable conditions}\label{pg:Farkas}

To make the paper self-contained, we recall Farkas lemma in the form in which we will use it, although other formulations are possible, see e.g.~\cite{LaraHir} and \cite[Corollary 4.3]{Bertsim97} for more detail.
\begin{lemma}[Farkas lemma]\label{res:Farkas}
	Let $\ell,n\in\N^\star$ and $M_0,\ldots,M_{\ell}\in\R^n$.
	We have,
	\begin{multline*}
		\left\{ x\in\R^n\ \,:\,\ x^{\top} M_0\ge0 \right\}\subset\bigcap_{i=1}^\ell \left\{ x\in\R^n\ \,:\,\ x^{\top} M_i\ge0 \right\}\\[-1mm]
		\quad\Leftrightarrow\quad
		\exists { \alpha_1,\dots,\alpha_\ell\in\R_+\ \text{s.t.}\ M_0=\sum_{i=1}^\ell\alpha_i M_i.}
	\end{multline*}
\end{lemma}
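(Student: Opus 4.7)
My plan is to prove Farkas lemma by the classical convex-geometry argument, relying only on the hyperplane separation theorem and the closedness of finitely generated convex cones. Since the result is standard, it could also be invoked directly from \cite{LaraHir,Bertsim97}.

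The easy implication is a direct calculation: if $M_0 = \sum_{i=1}^\ell \alpha_i M_i$ with $\alpha_i \geq 0$, then $x^\top M_0 = \sum_{i=1}^\ell \alpha_i (x^\top M_i)$, so the sign of $x^\top M_0$ is controlled by those of the $x^\top M_i$'s through non-negative weights, which yields the corresponding inclusion of half-spaces.

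The non-trivial implication I would obtain by contraposition. The key object is the finitely generated convex cone $K := \{\sum_{i=1}^\ell \alpha_i M_i \,:\, \alpha_1,\dots,\alpha_\ell \in \R_+\} \subset \R^n$. Assuming $M_0 \notin K$, I would separate $M_0$ from $K$ using the hyperplane separation theorem in $\R^n$, producing a vector $x \in \R^n$ such that $x^\top M_0 < 0$ while $x^\top y \geq 0$ for every $y \in K$. Specializing to $y = M_i$ yields $x^\top M_i \geq 0$ for each $i$, so that $x$ contradicts the inclusion hypothesis of the lemma.

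The main obstacle is the closedness of the cone $K$, which is needed to apply the separation theorem in its strict form. I would prove this via a Carath\'eodory-type reduction: every element of $\cl(K)$ can be represented as a non-negative combination of a linearly independent subfamily of $\{M_1,\dots,M_\ell\}$, and a compactness argument on the resulting (now bounded) coefficients yields the conclusion. Apart from this step, everything reduces to elementary linear algebra.
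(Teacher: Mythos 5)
Your plan is correct, and it is worth noting that the paper itself offers no proof of this lemma: it is stated as a recalled classical result with pointers to \cite{LaraHir} and \cite[Corollary 4.3]{Bertsim97}. What you supply is the standard convex-geometry proof — the trivial direction by expanding $x^{\top}M_0=\sum_i\alpha_i\,x^{\top}M_i$, and the nontrivial direction by separating $M_0$ from the finitely generated cone $K=\{\sum_i\alpha_iM_i:\alpha_i\ge0\}$, with the closedness of $K$ secured by a Carath\'eodory reduction to linearly independent subfamilies plus compactness of the resulting bounded coefficient vectors. All of these steps are sound and this is exactly the argument the cited references carry out, so there is nothing to reconcile with the paper.

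One point deserves attention. Your argument actually establishes the equivalence of $M_0=\sum_{i=1}^\ell\alpha_iM_i$ (with $\alpha_i\ge0$) with the implication ``$x^{\top}M_i\ge0$ for all $i$ $\Rightarrow$ $x^{\top}M_0\ge0$'', i.e.\ with the inclusion $\bigcap_{i=1}^\ell\{x:x^{\top}M_i\ge0\}\subset\{x:x^{\top}M_0\ge0\}$. This is the \emph{reverse} of the inclusion displayed in the lemma as stated, but it matches the paper's own ``in other words'' restatement immediately below the lemma and is the form actually invoked in the proof of \cref{unboundtn} (where $\mathcal{N}x\ge0$ must imply $-KA_{[i_1+n-1]}^t\mathcal{M}x\ge0$). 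The displayed $\subset$ in the statement is therefore almost certainly a typo in the paper; your separating vector $x$ (with $x^{\top}M_i\ge0$ for all $i$ yet $x^{\top}M_0<0$) refutes the correct inclusion, not the one literally printed. You should make that explicit rather than writing that $x$ ``contradicts the inclusion hypothesis of the lemma'' as displayed, since as displayed the hypothesis says nothing about such an $x$ (it does not lie in the left-hand set $\{x:x^{\top}M_0\ge0\}$). With that reorientation made explicit, the proof is complete.
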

In other words, Lemma \ref{res:Farkas} states that, given any $x\in \R^n$,
$$x^{\top} M_i\ge 0 \quad \forall i\in\{1,\dots,\ell\} \quad \Longrightarrow \quad x^{\top} M_0 \ge 0$$
if and only if there exist non-negative numbers {$\alpha _1,\alpha _2, \ldots,\alpha_\ell$} such that {$M_0 = \sum_{i=1}^{\ell} \alpha _i M_i$.}
In the sequel, to verify condition~(\ref{eq:interS}), we consider without loss of generality that \mbox{$\sum_{k=1}^\kappa r_k=n$}.
This is justified because, if (\ref{eq:interS}) holds with $\sum_{k=1}^\kappa r_k = \nu$ with $\nu<n$, it also holds with $\sum_{k=1}^\kappa r_k = n$.
This can be achieved by including any $n-\nu$ sets from the available sets{, we recall that $\sum_{k=1}^\kappa r_k$ is the number of {sets} used for testing the emptiness condition~\eqref{eq:interS}}. 
For $t_1,\dots,t_{n-1}\in\N^\star$, we introduce the following matrices, which all belong to~$\R^{1\times n}$,
\begin{equation}\label{eq:St}
	\begin{array}{lr}
		N_{i_1;t}
		=(-1)^{i_1+1}KA_{[i_1]}^t, 
		& \text{for }0\le t<t_1,\\[1mm]
		N_{i_1;t_1,t}
		=(-1)^{i_1+2}KA_{[i_1+1]}^tA_{[i_1]}^{t_1}, 
		& \text{for }0\le t<t_2,\\[1mm]
		N_{i_1;t_1,t_2,t}
		=(-1)^{i_1+3}KA_{[i_1+2]}^tA_{[i_1+1]}^{t_2}A_{[i_1]}^{t_1},\hskip -2mm & \text{for }0\le t<t_3, \\
		\hskip 5mm ~\vdots &\\
		N_{i_1;t_1,\dots,t_{n-2},t}
		=(-1)^{i_1+n-1}KA_{[i_1+n-2]}^tA_{[i_1+n-3]}^{t_{n-2}}\dots A_{[i_1]}^{t_1},\hskip -10 cm\\
		& \text{for }0\le t<t_{n-1}, \\
		N_{i_1;t_1,\dots,t_{n-2},t_{n-1},0}
		=(-1)^{i_1+n}KA_{[i_1+n-2]}^{t_{n-1}}\dots A_{[i_1]}^{t_1}.\hskip -10 cm
	\end{array}
\end{equation}
Observe that these matrices verify
\begin{equation}\label{eq:St2}
	S_{i_1;t_1,\dots,t_p,t}=\left\{ x\in\R^n\ \,:\,\ N_{i_1;t_1,\dots,t_p,t}x\ge0 \right\}.
\end{equation}

To verify (\ref{eq:interS}) we investigate the intersection of $S_{i_1;1,t_2,\dots,t_{n-1},t,0}$ with $n$-sets taken from~\eqref{eq:St}.
We notice that in~\eqref{eq:St}, there are exactly $1+\sum_{i=1}^{n-1} t_i$ row vectors defined  (recall that $t_1,\ldots,t_{n-1}\in\N^*$, hence, $1+\sum_{i=1}^{n-1} t_i\ge n$).\\
{In addition, we also have (with $t_1=1$)}
$$S_{i_1;1,t_2,\dots,t_{n-1},t,0}=\left\{ x\in\R^n\ \,:\,\ N_{i_1;{1},t_2,\dots,t_{n-1},t,0}x\ge0 \right\},$$ with
\begin{multline*}
	N_{i_1;{1, t_2}\dots,t_{n-1},t,0}=(-1)^{i_1+n}KA_{[i_1+n-1]}^tA_{[i_1+n-2]}^{t_{n-1}}\dots {A_{[i_1]}}.
\end{multline*}

The next result gathers \cref{cor:m-2,res:Farkas}.
\begin{theorem}\label{unboundtn}
	{Under \cref{ass:regular},} suppose there exists $i_1\in\{1,2\}$ such that for every $t_2,\dots,t_{n-1}\in\N^\star$, we either have $\Sigma_{i_1;1,t_2,\dots,t_{n-1},0}={\{0\}}$ or {one is able to build } an invertible matrix $\mathcal{N}\in\R^{n\times n}$ whose lines are taken from~\eqref{eq:St} such that 
	\begin{enumerate}[label=(\roman*)]
		\item\label{it:unboundtn:1} $\beta _0 = -(KA_{i_{[i_1+n-1]}}{\cal M}\mathcal{N}^{-1})^{\top} $ is non-negative,
		\item\label{it:unboundtn:2} the solution to
			\begin{equation}
				\beta _{t+1} = {\cal L} \beta_t,
				\quad \mbox{with}\ \
				{\cal L}^{\top} = \mathcal{N}{\cal M}^{-1} A_{i_{[i_1+n-1]}} {\cal M}\mathcal{N}^{-1}
				\label{eqbeta}
			\end{equation}
			initialized with $\beta _0 = -(KA_{i_{[i_1+n-1]}}{\cal M}\mathcal{N}^{-1})^{\top} \ge 0$ remains non-negative for all future time,
	\end{enumerate}
    where we have set,
    $$\mathcal{M}\defeq (-1)^{i_1+n}A_{[i_1+n-2]}^{t_{n-1}}\dots A_{[i_1+1]}^{t_2}A_{i_1}.$$
	Then any solution to (\ref{eqCLS}) exhibits no more than $n$ switches.
\end{theorem}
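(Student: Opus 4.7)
The plan is to apply \cref{cor:m-2} with $p = n-1$: it delivers the announced bound of $p+1 = n$ switches as soon as the hypothesis of \cref{res:Simp} is verified, with $t_1 = 1$, for every $(t_2, \ldots, t_{n-1}) \in (\N^\star)^{n-2}$. Fix such a tuple. In the case $\Sigma_{i_1;1,t_2,\ldots,t_{n-1},0}=\{0\}$, the inclusion $\Sigma_{i_1;1,t_2,\ldots,t_{n-1},t,0} \subseteq \Sigma_{i_1;1,t_2,\ldots,t_{n-1},0}$ (valid for every $t \in \N^\star$, since more constraints yield a smaller set) ensures that no nontrivial trajectory realizes the corresponding $(n-1)$-switch pattern, so this tuple is vacuously compatible with the required bound. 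I concentrate on the complementary case where $\mathcal{N}$ exists with conditions~(i)--(ii).

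Since every row of $\mathcal{N}$ is of a form listed in~\eqref{eq:St}, the intersection $\bigcap_{k,\ell} S_{\ldots}$ in~\eqref{eq:interS} is exactly $\{x\in\R^n : \mathcal{N} x \geq 0\}$, and for $m = 2$ the outer index $i \in \{1,2\}\setminus\{i_{p+1}\}$ is uniquely determined. The hypothesis of \cref{res:Simp} therefore reduces to showing that
\[
\{x : \mathcal{N} x \geq 0\} \,\cap\, \{x : N_{i_1;1,t_2,\ldots,t_{n-1},t,0}\,x \geq 0\} \;=\; \{0\}
\]
for every $t \in \N^\star$. Using the invertibility of $\mathcal{N}$, the change of variables $x = \mathcal{N}^{-1} y$ turns this geometric requirement into a sign condition on the row vector $N_{i_1;1,t_2,\ldots,t_{n-1},t,0}\,\mathcal{N}^{-1}$ over $\R_+^n$, precisely the setup of \cref{res:Farkas}. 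Expanding $N_{i_1;1,t_2,\ldots,t_{n-1},t,0}$ via~\eqref{eq:St}, using $A_{[i_1+n]}^0 = \1$, and combining the sign factor in front of this row with the $(-1)^{i_1+n}$ absorbed inside $\mathcal{M}$, one obtains the compact identity
\[
N_{i_1;1,t_2,\ldots,t_{n-1},t,0} \;=\; -K\, A_{[i_1+n-1]}^{t}\, \mathcal{M},
\]
so that the Farkas coefficient vector at time $t$ coincides, up to sign, with the vector $\beta_{t-1}$, and $\beta_0$ in~(i) is recovered at $t = 1$. The telescoping identity
\[
\bigl(\mathcal{N}\mathcal{M}^{-1} A_{[i_1+n-1]} \mathcal{M}\mathcal{N}^{-1}\bigr)^{t} \;=\; \mathcal{N}\mathcal{M}^{-1} A_{[i_1+n-1]}^{\,t}\, \mathcal{M}\mathcal{N}^{-1},
\]
which follows from $\mathcal{N}^{-1}\mathcal{N} = \mathcal{M}\mathcal{M}^{-1} = \1$ (invertibility of $\mathcal{M}$ being guaranteed by \cref{ass:regular}, as it is a product of invertible matrices), then produces $\beta_t = \mathcal{L}^t\beta_0$ with $\mathcal{L}$ as in~(ii). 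Consequently, condition~(ii), namely $\beta_t \geq 0$ for every $t \in \N$, is equivalent to the Farkas certificate being valid at every $t \in \N^\star$; this verifies the hypothesis of \cref{res:Simp} and, via \cref{cor:m-2}, concludes that every nontrivial solution of~\eqref{eqCLS} exhibits at most $n$ switches.

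The main obstacle I anticipate is the careful sign and index bookkeeping when translating the geometric intersection into the $\beta_t$-recursion: one has to identify which row of~\eqref{eq:St} plays the role of the \emph{final} half-space, match the parities arising from $(-1)^{i_1+p}$ across the different expressions of $N_{i_1;\cdots}$ with the sign absorbed into $\mathcal{M}$, and apply \cref{res:Farkas} in the correct orientation so that the certificate produced is exactly $\beta_t$ rather than $-\beta_t$. Once this bookkeeping is fixed, the telescoping identity together with \cref{res:Farkas} do the rest.
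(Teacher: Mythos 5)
Your proposal follows essentially the same route as the paper's own proof: invoke \cref{cor:m-2} with $p=n-1$ and $t_1=1$, reduce condition~\eqref{eq:interS} to the implication $\mathcal{N}x\ge0\Rightarrow -KA_{[i_1+n-1]}^t\mathcal{M}x\ge0$ for all $t\in\N^\star$, convert it via \cref{res:Farkas} and the invertibility of $\mathcal{N}$ into the non-negativity of $\beta_t^\top=-KA_{[i_1+n-1]}^t\mathcal{M}\mathcal{N}^{-1}$, and obtain the recursion~\eqref{eqbeta} by the same telescoping conjugation through $\mathcal{M}\mathcal{N}^{-1}$ (using \cref{ass:regular} for the invertibility of $\mathcal{M}$). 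The only loose end you leave --- fixing the orientation of the Farkas certificate so that it is $\beta_t$ rather than $-\beta_t$ --- is exactly where the paper's own sign bookkeeping between~\eqref{eq:St}, the definition of $\mathcal{M}$, and the worked examples is least tidy, so flagging it rather than resolving it is acceptable but worth tightening.
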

\begin{proof} 
	Using \cref{cor:m-2}, any {nontrivial} solution to (\ref{eqCLS}) exhibits no more than $n$ switches if \eqref{eq:interS} is satisfied with $t_1=1$ and \mbox{$p=n-1$}.
	This means checking that the intersection of $ S_{\subalign{&1,\dots,t_{n-1},t,0\\&i_1,\dots,i_{n-1},i_{n},i}}$ with the sets characterized by~\eqref{eq:St} and~\eqref{eq:St2} is {reduced to $\{0\}$}.
	To check this condition, it is enough to exhibit $n$ sets such that, for every $t\in\N^\star$, their intersection with $S_{i_1;1,t_2,\dots,t_{n-1},t,0}$ is {reduced to $\{0\}$}.
    This is equivalent to find $n$ rows in~\eqref{eq:St} such that the matrix $\mathcal{N}\in\R^{n\times{n}}$ formed by these $n$ rows is such that
	\begin{equation}\label{eqNj}
		\mathcal{N}x\ge0
	\end{equation}
	implies
	\begin{equation*}%\label{eqCns}
		(-1)^{i_1+n}KA_{[i_1+n-1]}^tA_{[i_1+n-2]}^{t_{n-1}}\dots A_{[i_1]}x\le0,
		\qquad\forall t \in \N^\star,
	\end{equation*}
	i.e., implies
	\begin{equation}\label{eqCns2b}
		-KA_{[i_1+n-1]}^t\mathcal{M}x\ge0,
		\qquad\forall t \in \N^\star.
	\end{equation}
	Invoking Lemma \ref{res:Farkas}, checking that (\ref{eqNj}) implies (\ref{eqCns2b}) reduces to check whether there exists $\beta_t\in\R_+^n$ such that
	$$-KA_{[i_1+n-1]}^t\mathcal{M}=\beta_t^{\top} \mathcal{N}, \qquad\forall t \in \N^\star,$$
	that is to say, with $\mathcal{N}$ invertible,
	\begin{equation}\label{eqCns2pos}
		\beta_t^{\top} =-KA_{[i_1+n-1]}^t\mathcal{M}\mathcal{N}^{-1}\ge0, \qquad\forall t \in \N^\star.
	\end{equation}
	To conclude the proof, we show that (\ref{eqCns2pos}) is equivalent to check the positivity of a solution of a discrete time dynamics. 
	Indeed, as the invertibility of $A_1$ and $A_2$ implies that $\mathcal{M}$ is invertible, we have
	\begin{align*}
		\beta_{t+1}^{\top} &= -KA_{[i_1+n-1]}^{t+1}\mathcal{M}\mathcal{N}^{-1}\\
		&=-KA_{[i_1+n-1]}^t\mathcal{M}\mathcal{N}^{-1}\mathcal{N}{\mathcal{M}^{-1}}A_{[i_1+n-1]}\mathcal{M}\mathcal{N}^{-1}\\
		&=\beta_t^{\top} \mathcal{N}\mathcal{M}^{-1}A_{[i_1+n-1]}\mathcal{M}\mathcal{N}^{-1}\\
		&=(\mathcal{L}\beta_t)^{\top}.
	\end{align*}
\end{proof}

This theorem states that to verify the non-negativity of a single solution to the auxiliary linear discrete-time system described in (\ref{eqbeta}) is enough to ensure that any solution to~(\ref{eqCLS}) switches no more than $n$ times. 
This property  is different from the classical non-negativity problem, where any solution to a system must remain non-negative (component-wise) for any non-negative initial condition. In this classical setting, it is well known that, for a discrete-time system to be positive, a necessary and sufficient condition is that the entries of the dynamical matrix have to be positive \cite{Farina2000}. This does not correspond to our problem, and we will see in Section~\ref{sect:applications} that the entries of the dynamical matrix ${\cal L}$ are not necessarily non-negative, so that we cannot invoke the results from~\cite{Farina2000} in general. This means new results are needed to ensure the non-negativity of  system (\ref{eqbeta}) for a given initial condition: this is addressed in the next section. 

%%%%%%%%%%%%%%%%%%%%%%%%%%%%%%%%%%%%%%%%%%%%%%%%%%%%%%%%%%%%%%%%%%%%%%%%%%%%%%%%%%
\section{Non-negativity analysis}\label{sect:positivity-condition}

We derive in this section conditions under which the solution to (\ref{eqbeta}) initialized with a given $\beta_0\ge0$ is non-negative.
The results are based on a function analysis and allows deriving numerically tractable conditions.

\subsection{Non-negativity theorem}\label{subsect:non-negativity-thm}

The next theorem provides easy-to-test conditions to ensure the satisfaction of item (ii) of Theorem \ref{unboundtn}.
The result of \cref{TestJerome} is based on the necessary condition given in \cref{res:TestSimp}, which is given below.
\begin{theorem} \label{TestJerome}
	Consider system (\ref{eqbeta}) with a given $\beta_0 \geq~0$ and assume that $\sigma(A_{[i_1+n-1]})=\{\lambda_1,\dots,\lambda_n\}$, with $\lambda_1>\dots>\lambda_n>0$.
	Then $\mathcal{L}$ is diagonalizable, and we set $v_1,\dots,v_n$ the eigenvectors of~$\mathcal{L}$, and $\gamma_1,\dots,\gamma_n\in\R$ such that $\beta_0=\sum_{i=1}^n\gamma_iv_i$.
	Let us finally set $\rho_i = \gamma_i v_i\in\R^n$ and $\mu_i=-\ln(\lambda_i/\lambda_1)$ with $i=1, \ldots, n$ and assume that $\rho_1 \geqslant 0$.
	Then, the solution $\beta _t$ is non-negative for any $t \in \N$ if for every $\ell\in\{1,\dots,n\}$, one of the conditions of \cref{res:TestSimp} is satisfied, with $z_i$ the $\ell$-th component of $\rho_i$.
\end{theorem}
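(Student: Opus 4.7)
The plan is to diagonalise $\mathcal{L}$, write $\beta_t$ explicitly in the corresponding eigenbasis, and reduce the vector-valued non-negativity requirement to $n$ independent scalar tests, each of which is exactly of the form treated by \cref{res:TestSimp}. The diagonalisability part comes essentially for free: from~\eqref{eqbeta} one reads $\mathcal{L}^\top = P\,A_{[i_1+n-1]}\,P^{-1}$ with $P=\mathcal{N}\mathcal{M}^{-1}$, so $\mathcal{L}$ and $A_{[i_1+n-1]}$ share their spectrum, and since this spectrum consists of $n$ pairwise distinct positive reals $\lambda_1>\dots>\lambda_n$, $\mathcal{L}$ is diagonalisable over $\R$.

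Next I would compute the trajectory. Let $v_1,\dots,v_n$ be eigenvectors of $\mathcal{L}$ associated with $\lambda_1,\dots,\lambda_n$ and expand $\beta_0=\sum_{i=1}^n\gamma_i v_i$. Iterating the recursion and using $\lambda_i/\lambda_1=e^{-\mu_i}$ yields
\[
\beta_t=\sum_{i=1}^n\gamma_i\lambda_i^tv_i=\sum_{i=1}^n\lambda_i^t\rho_i=\lambda_1^t\sum_{i=1}^n e^{-\mu_i t}\rho_i,\qquad t\in\N.
\]
Because $\lambda_1^t>0$, the condition $\beta_t\ge 0$ for every $t\in\N$ is equivalent to the componentwise non-negativity of $\sum_{i=1}^n e^{-\mu_i t}\rho_i$. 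Fixing $\ell\in\{1,\dots,n\}$ and denoting by $z_i$ the $\ell$-th entry of $\rho_i$, the $\ell$-th component of this vector reads $\sum_{i=1}^n e^{-\mu_i t}z_i$, which is precisely the scalar object whose sign is controlled by \cref{res:TestSimp}.

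Two consistency checks are worth noting. First, $\mu_1=0$ and $\mu_i>0$ for $i\ge 2$, so the $\ell$-th scalar function tends to $z_1$ as $t\to\infty$; the standing hypothesis $\rho_1\ge0$ thus supplies $z_1\ge0$ for every $\ell$, which matches what \cref{res:TestSimp} requires for its dominant mode. Second, the reduction to componentwise analysis is legitimate because the $v_i$ (and hence the $\rho_i$) can be taken real, the $\lambda_i$ being distinct reals. Applying \cref{res:TestSimp} to each of the $n$ components then concludes the proof. I do not expect any obstacle in this spectral-expansion step, which is routine; the substantive analytical content actually lies upstream, in \cref{res:TestSimp} itself, whose role is to deliver tractable sign certificates for sums of decaying exponentials with potentially mixed-sign coefficients.
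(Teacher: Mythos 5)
Your proposal is correct and follows essentially the same route as the paper: diagonalisability via similarity of $\mathcal{L}^\top$ to $A_{[i_1+n-1]}$ with distinct eigenvalues, the modal expansion $\beta_t=\sum_i\lambda_i^t\rho_i=\lambda_1^t\sum_i e^{-\mu_i t}\rho_i$, factoring out $\lambda_1^t>0$, and relaxing $t\in\N$ to $s\in\R_+$ so that each component reduces to the scalar function handled by \cref{res:TestSimp}. No gaps worth noting.
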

\begin{proof}
	According to the definition of ${\cal L}$ in (\ref{eqbeta}), the eigenvalues of ${\cal L}$ are also eigenvalues of $A_{[i_1+n-1]}$.
	They are distinct, and the $n$ eigenvectors $v_i$ are independent.
	The solution of the linear discrete-time system (\ref{eqbeta}) is a weighted linear combination of modal solutions:
	$$\beta _t = \sum_{i=1}^{n} \gamma_i v_i \lambda _i^t.$$
	
	To conclude the proof of this theorem, we have to find {conditions} on $( \rho_1,\dots, \rho_n)$ such that
	\begin{equation}\label{eq:Obj0z}
		\beta _t = \sum_{i=1}^{n} \gamma_i v_i \lambda _i^t =	\lambda_1^t\rho_1+\dots+\lambda_n^t\rho_n\geqslant0\qquad (t\in\N).
	\end{equation}
	We recall that $\rho_i\in\R^n$ and inequality \eqref{eq:Obj0z} has to be understood component-wise, i.e., $[\beta_t]_\ell\ge0$ for every $\ell\in\{1,\dots,n\}$.
	According to the definition of $\mu _i$, for every $i\in\{2, \ldots ,n\}$, we have $0<\mu_2<\ldots <\mu_n$.
	Hence, the problem is equivalent with finding a condition on $(\rho_1,\dots,\rho_n)\in\R^n$ such that
	\begin{equation}\label{eq:Objn1}
		\rho_1\ge -\sum_{i=2}^n e^{-\mu_it} \rho_i\qquad (t\in\N).
	\end{equation}
	Note that \eqref{eq:Objn1} holds if we have
	\begin{equation*}%\label{eq:Objn}
		\rho_1\ge -\sum_{i=2}^n e^{-\mu_is} \rho_i\qquad (s\in\R_+).
	\end{equation*}
	We conclude the proof using \cref{res:TestSimp}
\end{proof}

\begin{proposition}\label{res:TestSimp}
	Let $n\in\N^*$, $0<\mu_2<\dots<\mu_n$ and \mbox{$z_1,\dots,z_n\in\R$}.
	{If $z_1\ge0$, $z_1+\dots+z_n\ge0$ and one of the following conditions holds:}
	\begin{enumerate}[label={(\roman*)},nosep]
		\item\label{it:g1} $z_2,\dots,z_n\ge0$;
		\item\label{it:g2} $z_2,\dots,z_n\le0$;
		\item\label{it:g3} there exists $j\in\{2,\dots,n\}$ such that 
			\begin{equation}\label{eq:casn}
				z_1\ge-\sum_{\substack{i=2\\i\neq j}}^n\left( 1-\frac{\mu_i}{\mu_j} \right)z_ie^{-\mu_i s}, \quad \forall s\in\R_+.
			\end{equation}
	\end{enumerate}
	{Then, the function $\varphi(s)=z_1+z_2e^{-\mu_2 s}+\dots+z_ne^{-\mu_ns}$ defined for every $s\in\R_+$ is non-negative on $\R_+$.}
\end{proposition}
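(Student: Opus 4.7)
My plan is to treat the three cases (i), (ii), and (iii) separately, using in all three the common endpoint bound $\varphi(0)=z_1+\cdots+z_n\ge 0$ supplied by the hypotheses.

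Case (i) is immediate: each term $z_i e^{-\mu_i s}$ is non-negative for $i\ge 2$, so $\varphi(s)\ge z_1\ge 0$ on $\R_+$.

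For case (ii), I would differentiate to get
$$\varphi'(s)=-\sum_{i=2}^n\mu_i z_i e^{-\mu_i s}.$$
Since $\mu_i>0$ and $z_i\le 0$ for $i\ge 2$, the derivative is non-negative, so $\varphi$ is non-decreasing on $\R_+$; its minimum on $\R_+$ is therefore $\varphi(0)=z_1+\cdots+z_n\ge 0$, which closes the case.

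Case (iii) is the central one, and I would handle it by an integrating-factor argument on a first-order linear inequality. A direct computation yields
$$\varphi(s)+\frac{1}{\mu_j}\varphi'(s) = z_1+\sum_{\substack{i=2\\i\neq j}}^n\left(1-\frac{\mu_i}{\mu_j}\right)z_i e^{-\mu_i s},$$
because the $i=j$ contribution has coefficient $1-\mu_j/\mu_j=0$ and so drops out. Hypothesis~\eqref{eq:casn} says precisely that the right-hand side is non-negative on $\R_+$, hence $\mu_j\varphi(s)+\varphi'(s)\ge 0$ for every $s\ge 0$. Recognising this as $\tfrac{d}{ds}\bigl(e^{\mu_j s}\varphi(s)\bigr)\ge 0$, I conclude that the map $s\mapsto e^{\mu_j s}\varphi(s)$ is non-decreasing on $\R_+$, and evaluating at $s=0$ gives $e^{\mu_j s}\varphi(s)\ge \varphi(0)\ge 0$, whence $\varphi(s)\ge 0$.

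I do not expect a genuine obstacle: the proof is entirely elementary. The only step that is not purely mechanical is the identification, in case (iii), of $e^{\mu_j s}$ as the integrating factor that turns the stated assumption~\eqref{eq:casn} into monotonicity of $e^{\mu_j s}\varphi(s)$; once one notices that the linear combination $\varphi+\mu_j^{-1}\varphi'$ eliminates the $j$-th exponential and reproduces exactly the right-hand side of~\eqref{eq:casn}, the rest is a one-line consequence of $\varphi(0)\ge 0$.
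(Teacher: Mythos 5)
Your proof is correct, and cases (i) and (ii) coincide with the paper's. For the key case (iii), however, you take a genuinely different route. The paper argues via the location of the infimum: since $\varphi$ is smooth with $\lim_{s\to\infty}\varphi(s)=z_1\ge0$ and $\varphi(0)\ge0$, the infimum over $\R_+$ is either one of these boundary values or $\varphi(s_0)$ at an interior critical point; substituting $\varphi'(s_0)=0$ into $\varphi(s_0)$ eliminates the $j$-th exponential and shows that \eqref{eq:casn} forces $\varphi(s_0)\ge0$. You instead observe that the very same cancellation holds identically in $s$, namely
\begin{equation*}
\varphi(s)+\tfrac{1}{\mu_j}\varphi'(s)=z_1+\sum_{\substack{i=2\\ i\neq j}}^n\Bigl(1-\tfrac{\mu_i}{\mu_j}\Bigr)z_ie^{-\mu_i s},
\end{equation*}
so that \eqref{eq:casn} becomes the differential inequality $\mu_j\varphi+\varphi'\ge0$, and the integrating factor $e^{\mu_j s}$ turns this into monotonicity of $s\mapsto e^{\mu_j s}\varphi(s)$, which together with $\varphi(0)\ge0$ closes the case. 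Both arguments rest on the same algebraic identity (the paper evaluates it only where $\varphi'$ vanishes; you use it globally), but yours avoids the classification of where the infimum is attained and is arguably cleaner; note also that your case (iii) never invokes $z_1\ge0$ directly, though it is implied by \eqref{eq:casn} as $s\to\infty$. The paper's critical-point viewpoint has the advantage of feeding directly into the sharper, tailored analyses for $n=3$ and $n=4$ later in the section, where the sign tables and explicit formulas for $s_0$ are needed.
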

\begin{proof}
	Observe that $\varphi(0)=z_1+\dots+z_n$ and $\lim_{s\to\infty}\varphi(s)=z_1$.
	They are both non-negative in the setup of the statement.
	Obviously, if \ref{it:g1} is satisfied, $\varphi\ge0$ on~$\R$.
	If~\ref{it:g2} is satisfied, we have $\varphi'\ge0$ on $\R$ and $\varphi(0)=z_1+\dots+z_n\ge0$.\\
	Without the sign constraints on $z_2,\dots,z_n$, we have either $\inf_{\R_+}\varphi=\varphi(0)=z_1+\dots+z_n$ or $\inf_{\R_+}\varphi=\lim_{s\to\infty}\varphi(s)=z_1$ or $\inf_{\R_+}\varphi=\varphi(s_0)$ for some $s_0\in\R_+^*$ such that $\varphi'(s_0)=0$.
	Hence, to guarantee the nonnegativity of $\varphi$ on $\R_+$ it remains to check the positivity of $\varphi(s_0)$ in the last case.
	Observe that, given $j\in\{2,\dots,n\}$, $\varphi'(s_0)=0$ is equivalent to $\mu_jz_je^{-\mu_j s_0}=\sum_{\substack{i=1\\i\neq j}}^n -\mu_iz_ie^{-\mu_i s_0}$.
	Hence, we have $\varphi(s_0)=z_1+\sum_{\substack{i=2\\i\neq j}}\left( 1-\frac{\mu_i}{\mu_j} \right)z_ie^{-\mu_i s_0}$, and \ref{it:g3} ensures that $\varphi(s_0)\ge0$.
\end{proof}
\cref{TestJerome} allows deriving numerically tractable conditions by iteratively applying (\ref{eq:casn}) to any chosen value of $n$. As a title of example, consider the case $n=4$.
By iteratively applying~(\ref{eq:casn}), one can conclude that $\beta _t \geq 0$ for all $ t$ if
\begin{equation}\label{corro4}
	\begin{array}{c}
		\rho_1\ge0, \quad 
		\rho_1+\left( 1-\frac{\mu_3}{\mu_2} \right)\rho_3+\left( 1-\frac{\mu_4}{\mu_2} \right)\rho_4\ge0, \\
		\mbox{and }	\rho_1+\left( 1-\frac{\mu_4}{\mu_3} \right)\left( 1-\frac{\mu_4}{\mu_2} \right)\rho_4\ge0.
	\end{array}
\end{equation}
\smallskip
To check that this is true, first iterate \eqref{eq:casn} with $n=4$.
As $\left( 1-\frac{\mu_4}{\mu_3} \right)\left( 1-\frac{\mu_4}{\mu_2} \right)\rho_4\ge0$ or $\left( 1-\frac{\mu_4}{\mu_3} \right)\left( 1-\frac{\mu_4}{\mu_2} \right)\rho_4\le0$, this together with $\rho_1\ge0$ and $\rho_1+\left( 1-\frac{\mu_4}{\mu_3} \right)\left( 1-\frac{\mu_4}{\mu_2} \right)\rho_4\ge0$ ensures
$$\rho_1\ge -\left( 1-\frac{\mu_4}{\mu_3} \right)\left( 1-\frac{\mu_4}{\mu_2} \right)\rho_4e^{-\mu_4 s}\qquad(s\in\R_+).$$
This, together with $\rho_1\ge0$ and $\rho_1+\left( 1-\frac{\mu_3}{\mu_2} \right)\rho_3+\left( 1-\frac{\mu_4}{\mu_2} \right)\rho_4\ge0$ ensures
$$\rho_1\ge-\left( 1-\frac{\mu_3}{\mu_2} \right)\rho_3e^{-\mu_3 s}-\left( 1-\frac{\mu_4}{\mu_2} \right)\rho_4e^{-\mu_4 s}\qquad (s\in\R_+).$$
Finally, this last fact, together with $\rho_1\ge0$ and $\beta _0 = \rho_1+\rho_2+\rho_3+\rho_4\ge0$ ensures
$$\rho_1\ge -\rho_2e^{-\mu_2 s}-\rho_3e^{-\mu_3 s}-\rho_4e^{-\mu_4 s}\qquad(s\in\R_+).$$
We will exploit \eqref{corro4} for the insulin infusion example in \cref{subsect:applications-insulin}.

\subsection{Tailored results for small values of $n$}

For small values of $n$, the function analysis in the proof of \cref{res:TestSimp} can be tailored to derive tighter conditions.
{%
In this paragraph, we only present the cases $n\in\{2,3,4\}$.
But similar results could have been obtained for larget values of~$n$.
}

When $n=2$, it is easy to see that $\varphi(s)=z_1+z_2e^{-\mu_2 s}\ge0$ for every $s\in\R_+$ (with $\mu_2\ge0$) if and only if $z_1\ge0$ and $z_1+z_2\ge0$.
The next propositions provide tailored results for $n=3$ and $n=4$.
\begin{proposition}[Case $n=3$]\label{corron3}
	With the notation of \cref{res:TestSimp} with $n=3$, we have $\varphi(s)\ge0$ for all $s\in\R_+$ if and only if $z_1\ge0$, $z_1+z_2+z_3\ge0$ and one of the following conditions is satisfied:
	\begin{enumerate}[label={(\roman*)},nosep]
		\item $z_2\ge0$.
		\item $z_2<0$, $z_3\le0$.
		\item $z_2<0$, $z_3>0$, $\mu_2z_2+\mu_3z_3\le0$.
		\item $z_2<0$, $z_3>0$, $\mu_2z_2+\mu_3z_3>0$ and
			$$z_1+z_2\left( \frac{-\mu_2z_2}{\mu_3z_3} \right)^{\frac{\mu_2}{\mu_3-\mu_2}}+z_3\left( \frac{-\mu_2z_2}{\mu_3z_3} \right)^{\frac{\mu_3}{\mu_3-\mu_2}}\ge0.$$
	\end{enumerate}
\end{proposition}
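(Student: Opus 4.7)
My plan is to treat the result as a straightforward piece of single-variable calculus on $\varphi(s)=z_1+z_2e^{-\mu_2 s}+z_3e^{-\mu_3 s}$, using the fact that its monotonicity structure on $\R_+$ is essentially determined by the signs of $z_2$ and $z_3$ and by the sign of $\varphi'(0)$.

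First, for the necessity of the two conditions $z_1\ge0$ and $z_1+z_2+z_3\ge0$, I would invoke $\varphi(0)=z_1+z_2+z_3$ and $\lim_{s\to\infty}\varphi(s)=z_1$, which must both be non-negative. Next, I would compute
\begin{equation*}
    \varphi'(s)=-\mu_2 z_2 e^{-\mu_2 s}-\mu_3 z_3 e^{-\mu_3 s}
    =e^{-\mu_3 s}\bigl[-\mu_2 z_2 e^{(\mu_3-\mu_2)s}-\mu_3 z_3\bigr],
\end{equation*}
and observe that the bracket is a strictly monotone function of $s$ as soon as $z_2\neq0$ (strictly increasing if $z_2<0$, strictly decreasing if $z_2>0$). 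In particular, $\varphi$ possesses an interior minimum on $\R_+$ if and only if the bracket changes sign from negative to positive, which occurs exactly when $z_2<0$, $z_3>0$ and $\varphi'(0)=-(\mu_2 z_2+\mu_3 z_3)<0$, i.e., in the configuration of case~(iv).

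For sufficiency I would go through the four cases. In case~(i), $z_2\ge0$: if $z_3\ge0$, the three terms of $\varphi$ are non-negative; if $z_3<0$, the sign analysis of the bracket above shows that $\varphi$ is either monotone or has a single interior maximum, so the minimum is attained at $0$ or at $+\infty$, both non-negative by the standing conditions. In case~(ii), $z_2<0$ and $z_3\le0$, $\varphi'(s)>0$ so $\varphi$ is strictly increasing and $\min\varphi=\varphi(0)\ge0$. In case~(iii), $z_2<0$, $z_3>0$ and $\mu_2 z_2+\mu_3 z_3\le0$, the bracket is non-negative at $s=0$ and increasing, so $\varphi'\ge0$ and again $\min\varphi=\varphi(0)\ge0$. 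In case~(iv), the bracket vanishes at a unique $s_0>0$, $\varphi$ attains its global minimum on $\R_+$ at $s_0$, and one must have $\varphi(s_0)\ge0$.

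The main technical step, and what I expect to be the only nontrivial calculation, is the explicit evaluation of $\varphi(s_0)$ in case~(iv). The equation $\varphi'(s_0)=0$ rewrites as $e^{(\mu_3-\mu_2)s_0}=\dfrac{-\mu_3 z_3}{\mu_2 z_2}$, a positive quantity under the case hypotheses. Raising to the appropriate powers yields
\begin{equation*}
    e^{-\mu_2 s_0}=\left(\tfrac{-\mu_2 z_2}{\mu_3 z_3}\right)^{\mu_2/(\mu_3-\mu_2)},\qquad
    e^{-\mu_3 s_0}=\left(\tfrac{-\mu_2 z_2}{\mu_3 z_3}\right)^{\mu_3/(\mu_3-\mu_2)},
\end{equation*}
which, substituted into $\varphi(s_0)$, produces exactly the expression in (iv). The condition $\varphi(s_0)\ge0$ is then both necessary (since $s_0$ is the global minimum of $\varphi$ on $\R_+$ in this case) and sufficient (together with the limits at $0$ and $\infty$). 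Since the four cases partition the parameter space according to the signs of $z_2$, $z_3$ and $\mu_2 z_2+\mu_3 z_3$, this completes the equivalence.
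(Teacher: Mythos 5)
Your proposal is correct and follows essentially the same route as the paper: both analyze the sign of $\varphi'$ through a monotone factor (you use $\varphi'(s)e^{\mu_3 s}$, the paper uses $\varphi'(s)e^{\mu_2 s}/\mu_2$), identify that an interior minimum on $\R_+$ occurs exactly in the sign configuration of case~(iv), and evaluate $\varphi(s_0)$ explicitly via $e^{(\mu_3-\mu_2)s_0}=-\mu_3 z_3/(\mu_2 z_2)$. The only cosmetic difference is that the paper dispatches the same-sign cases to items~(i)--(ii) of \cref{res:TestSimp} and presents the mixed-sign analysis with variation tables, whereas you handle all cases directly; the substance is identical.
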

\begin{proof}
	Note that the cases where $z_2$ and $z_3$ have the same signs are treated in items~\ref{it:g1} and~\ref{it:g2} of \cref{res:TestSimp}.
	Hence, it remains to prove the result when $z_2$ and $z_3$ do not have the same sign.
	Recall that $\varphi'(s)=-\mu_2z_2e^{-\mu_2s}-\mu_3z_3e^{-\mu_3s}$, and $\sign \varphi'=\sign g$, with $g=-z_2-\frac{\mu_3}{\mu_2}z_3e^{-(\mu_3-\mu_2)s}$.
	We also have $g'(s)=\frac{\mu_3}{\mu_2}z_3(\mu_3-\mu_2)e^{-(\mu_3-\mu_2)}$, thus $\sign g'=\sign z_3$.
	When~$z_2$ and~$z_3$ do not have the same sign, there exists one and only one $s_0\in\R$ such that $g(s_0)=0$ (see \cref{fig:oop}).
	We thus have the following situations.
	\begin{itemize}
		\item If $z_2>0$ and $z_3<0$, we have the situation described by \cref{fig:pm}.
			We observe that $\varphi(s)\ge\min\{\varphi(0),z_1\}\ge0$ for every $s\in\R_+$.
		\item If $z_2<0$ and $z_3>0$, we have the situation described by \cref{fig:mp}.
			We observe that $\inf_{\R_+}\varphi=\begin{cases}
				\varphi(0) & \text{if } s_0\le0,\\
				\varphi(s_0) & \text{otherwise.}
			\end{cases}$
			But $s_0$ is such that $g(s_0)=0$, i.e., $s_0=\frac{-1}{\mu_3-\mu_2}\ln\frac{-\mu_2z_2}{\mu_3z_3}$.
			We have $s_0\le0$ if and only if $\mu_2z_2+\mu_3z_3\le0$, and $\varphi(s_0)=z_1+z_2\left( \frac{-\mu_2z_2}{\mu_3z_3} \right)^{\frac{\mu_2}{\mu_3-\mu_2}}+z_3\left( \frac{-\mu_2z_2}{\mu_3z_3} \right)^{\frac{\mu_3}{\mu_3-\mu_2}}$.
	\end{itemize}
	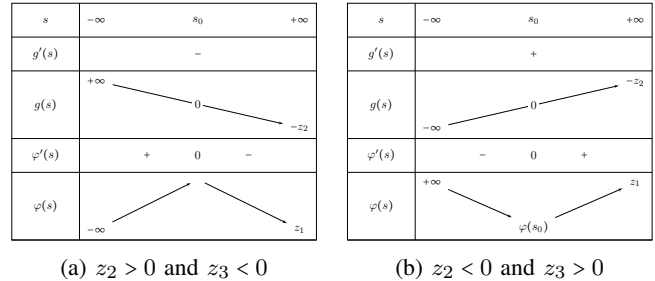
\begin{figure}[!htb]
		\centering
			\begin{subfigure}{0.49\columnwidth}
			\centering
			\scalebox{0.45}{
				\begin{tikzpicture}
					\tkzTabInit{$s$/1,$g'(s)$/1,$g(s)$/2,$\varphi'(s)$/1,$\varphi(s)$/2}{$-\infty$,$s_0$,$+\infty$}
					\tkzTabLine{,,-,,}
					\tkzTabVar{+/$+\infty$,R/,-/$-z_2$}
					\tkzTabIma{1}{3}{2}{$0$}
					\tkzTabLine{,+,0,-,}
					\tkzTabVar{-/$-\infty$,+/,-/$z_1$}
				\end{tikzpicture}
			}
			\caption{$z_2>0$ and $z_3<0$}
			\label{fig:pm}
		\end{subfigure}
		\begin{subfigure}{0.49\columnwidth}
			\centering
			\scalebox{0.45}{
				\begin{tikzpicture}
					\tkzTabInit{$s$/1,$g'(s)$/1,$g(s)$/2,$\varphi'(s)$/1,$\varphi(s)$/2}{$-\infty$,$s_0$,$+\infty$}
					\tkzTabLine{,,+,,}
					\tkzTabVar{-/$-\infty$,R/,+/$-z_2$}
					\tkzTabIma{1}{3}{2}{$0$}
					\tkzTabLine{,-,0,+,}
					\tkzTabVar{+/$+\infty$,-/$\varphi(s_0)$,+/$z_1$}
				\end{tikzpicture}
			}
			\caption{$z_2<0$ and $z_3>0$}
			\label{fig:mp}
		\end{subfigure}
		\caption{Different situations in the proof of \cref{corron3}.}
		\label{fig:oop}
	\end{figure}
\end{proof}

\begin{proposition}[Case $n=4$]\label{corro4t}
	With the notations used in \cref{res:TestSimp} with $n=3$, we have $\varphi(s)\ge0$ for all $s\in\R_+$ if $z_1\ge0$, $z_1+z_2+z_3+z_4\ge0$ and one of the following conditions is satisfied:
	\begin{enumerate}[label={(\roman*)},nosep]
		\item\label{it:1} $z_4\le0$, $z_3\le0$.
		\item\label{it:2} $z_4\le0$, $z_3>0$, $z_2\ge0$.
		\item\label{it:3} $z_4\le0$, $z_3>0$, $z_2<0$ and\\
            $\frac{\mu_4}{\mu_2}\left( \frac{\mu_4-\mu_2}{\mu_3-\mu_2}-1\right)e^{\frac{\mu_4-\mu_2}{\mu_4-\mu_3}\ln\left(\frac{z_3}{|z_4|}\frac{\mu_3}{\mu_4}\frac{\mu_3-\mu_2}{\mu_4-\mu_2}\right)}z_4\ge z_2$.
		\item\label{it:5} $z_4>0$, $z_3\ge0$, $z_2\ge0$.
		\item\label{it:7} $z_4>0$, $z_3<0$, $z_2\le0$ and $|z_3|\mu_3(\mu_3-\mu_2)\ge z_4\mu_4(\mu_4-\mu_2)$.
		\item\label{it:8} $z_4>0$, $z_3<0$, $z_2>0$ and\\
            $\frac{\mu_4}{\mu_2}\left( \frac{\mu_4-\mu_2}{\mu_3-\mu_2}-1\right)e^{\frac{\mu_4-\mu_2}{\mu_4-\mu_3}\ln\left(\frac{|z_3|}{z_4}\frac{\mu_3}{\mu_4}\frac{\mu_3-\mu_2}{\mu_4-\mu_2}\right)}z_4\le z_2$. \hfill $\Box$
	\end{enumerate}
\end{proposition}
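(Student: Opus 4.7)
The plan is to extend the tableau-based sign analysis used in the proof of \cref{corron3} (the $n=3$ case) to the additional exponential term, proceeding by case analysis on the signs of $z_2,z_3,z_4$. I would first introduce the auxiliary function
\[g(s) := -\frac{\varphi'(s)\, e^{\mu_2 s}}{\mu_2} = z_2 + \frac{\mu_3}{\mu_2}\, z_3\, e^{-(\mu_3-\mu_2)s} + \frac{\mu_4}{\mu_2}\, z_4\, e^{-(\mu_4-\mu_2)s},\]
whose sign is opposite to that of $\varphi'$. Since $g$ contains only two non-constant exponentials, its derivative $g'$ is a two-term exponential sum, for which the base case $n=2$ of \cref{res:TestSimp} already gives a complete sign description. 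This lets me read off the monotonicity of $g$, and hence of $\varphi'$, and hence the qualitative shape of $\varphi$.

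For each of the six cases (i)--(vi), I would draw the sign tableau of $g'$, $g$, $\varphi'$, $\varphi$ in analogy with \cref{fig:oop}. The \emph{trivial} case is (iv), where $z_2,z_3,z_4\geq 0$ so that $\varphi(s)\geq z_1\geq 0$ term by term. The \emph{monotonic} cases are (i), (ii), and (v): in each, the sign structure (combined, for (v), with the stated inequality, which amounts exactly to $g'(0)\geq 0$ and forces $g$ to be non-decreasing towards $z_2\leq 0$, hence $\varphi' \ge 0$ on $\R_+$) forces $\varphi$ to be either monotonic or unimodal with a unique interior \emph{maximum}. Therefore the minimum of $\varphi$ on $\R_+$ lies at one of the endpoints, either $\varphi(0)=z_1+z_2+z_3+z_4\geq 0$ or $\lim_{s\to\infty}\varphi(s)=z_1\geq 0$.

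The \emph{genuinely nontrivial} cases are (iii) and (vi), where $\varphi$ admits a unique interior critical point $s_0>0$ which is a global minimizer on $\R_+$ and must be checked individually. Here I would: (a) identify $s_0$ as the unique zero of $g$ in $(0,\infty)$ by analyzing $g$ via the sign of $g'$; (b) introduce the auxiliary variable $u := e^{-(\mu_4-\mu_3)s_0}$, so that each of the three exponentials $e^{-\mu_i s_0}$, $i\in\{2,3,4\}$, becomes a fractional power of $u$ with explicit exponent; (c) use the critical-point relation $g(s_0)=0$ to solve for $u$ in terms of $z_3$, $z_4$ and the rates, yielding, up to sign, $u = \tfrac{|z_3|}{z_4}\cdot\tfrac{\mu_3}{\mu_4}\cdot\tfrac{\mu_3-\mu_2}{\mu_4-\mu_2}$; (d) substitute back into $\varphi(s_0)$. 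The resulting expression matches exactly the left-hand side of the stated inequality in (iii) or (vi), so the assumption is precisely $\varphi(s_0)\geq 0$, closing the argument.

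The main obstacle will be step (c)--(d): algebraically matching $\varphi(s_0)$ with the exact form appearing in the statement, in particular the exponent $\tfrac{\mu_4-\mu_2}{\mu_4-\mu_3}$ inside the inner exponential and the composite coefficient $\tfrac{\mu_4}{\mu_2}\bigl(\tfrac{\mu_4-\mu_2}{\mu_3-\mu_2}-1\bigr)$ on the outside. This requires careful bookkeeping of the powers of $u$ and elimination of one of the exponentials via the critical-point equation. Since cases (iii) and (vi) differ only in the sign of $z_4$, which permutes the roles of the two non-constant exponentials in $g$, a single symbolic computation should cover both, the absolute value in the exponent of the argument of the logarithm absorbing the sign flip.
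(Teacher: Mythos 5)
Your overall framework (reduce to the sign of $g(s)=-\varphi'(s)e^{\mu_2 s}/\mu_2$ and analyse it through $g'$, which is a two-term exponential sum) is the right one and matches the template of the $n=3$ proof, and your treatment of cases (i), (ii), (iv) and (v) is correct --- in particular you correctly read the inequality in (v) as $g'(0)\ge0$, which makes $g$ non-decreasing towards $z_2\le0$ and hence $\varphi$ non-decreasing. The difficulty is entirely in the two cases you yourself single out as the crux, and that is exactly where your plan breaks down.

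In cases (iii) and (vi) you assert that $\varphi$ has an interior minimiser $s_0$, that $s_0$ is the unique zero of $g$, that $g(s_0)=0$ can be solved in closed form for $u=e^{-(\mu_4-\mu_3)s_0}$, and that substituting into $\varphi(s_0)$ reproduces the stated inequality. None of these steps survives scrutiny. First, $g(s)=z_2+\tfrac{\mu_3}{\mu_2}z_3e^{-(\mu_3-\mu_2)s}+\tfrac{\mu_4}{\mu_2}z_4e^{-(\mu_4-\mu_2)s}=0$ is a \emph{three}-term exponential equation with incommensurable exponents; it has no closed-form solution, and the value $u=\tfrac{|z_3|}{z_4}\tfrac{\mu_3}{\mu_4}\tfrac{\mu_3-\mu_2}{\mu_4-\mu_2}$ you quote is in fact the root of the \emph{two}-term equation $g'(s)=0$, not of $g(s)=0$. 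Second, $\varphi(s_0)$ necessarily contains $z_1$, whereas the inequalities in (iii) and (vi) do not, so they cannot be $\varphi(s_0)\ge0$ in disguise. What those inequalities actually encode is the sign of the unique critical \emph{value of $g$}: writing $s^*$ for the zero of $g'$ and using $g'(s^*)=0$ to eliminate the $z_3$-term, one finds $g(s^*)=z_2-\tfrac{\mu_4}{\mu_2}\bigl(\tfrac{\mu_4-\mu_2}{\mu_3-\mu_2}-1\bigr)z_4e^{-(\mu_4-\mu_2)s^*}$ with $e^{-(\mu_4-\mu_3)s^*}$ equal precisely to the argument of the logarithm in the statement; condition (iii) is then $\max_{\R}g=g(s^*)\le0$ and condition (vi) is $\min_{\R}g=g(s^*)\ge0$. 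In both cases the conclusion is that $g$, hence $\varphi'$, keeps a constant sign on $\R_+$, so $\varphi$ is monotone and $\inf_{\R_+}\varphi\in\{\varphi(0),z_1\}$, both non-negative by the standing hypotheses --- there is no interior minimum of $\varphi$ to evaluate. This structural difference is also why the $n=4$ conditions are only sufficient, unlike the necessary-and-sufficient $n=3$ ones where the critical point of $\varphi$ itself is computable in closed form.
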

The proof of this proposition follows the lines of the one of \cref{corron3}.
For the sake of brevity, we do not detail it here.

%%%%%%%%%%%%%%%%%%%%%%%%%%%%%%%%%%%%%%%%%%%%%%%%%%%%%%%%%%%%%%%%%%%%%%%%%%%%%%%%%%
\section{Applications}\label{sect:applications}

Before applying the results to the insulin infusion example in Section \ref{subsect:applications-insulin}, we provide in Section \ref{subsect:ex-second-order} a second order example to illustrate the fact that the application of \cref{cor:m-2} and \cref{unboundtn} is easy in $\R ^2$. It reduces to check the invertibility of a second order matrix and the non-negativity of a second order dynamics. 

\subsection{Second order example}\label{subsect:ex-second-order}
Consider system (\ref{eqCLS}) with $n=2$, $m=2$,
\begin{equation}
	\begin{array}{l}
		A_1 = \left[ \begin{array}{ cc}
				13 & 12.5 \\
		-12.5 & -12 \end{array} \right], \quad 
		A_2 = \left[ \begin{array}{cc}
				0.93 & 0 \\
				0    & 0.95
		\end{array} \right]\\
		K=\left[ \begin{array}{ cc} -8.7 & -9.2 \end{array} \right],
	\end{array}
\end{equation}
and $\mathcal{C}_1$ and $\mathcal{C}_2$ as in (\ref{eqcvx12}). 
The solutions associated with various initial conditions selected from the unit circle are represented in Figure~\ref{f1second} as well as the corresponding number of switches they exhibit. These solutions predominantly reside within ${\cal C}_2$ and only sporadically venture into ${\cal C}_1$. We observe on this example that $\R^2$ is partitioned in three cones, which are respectively defined by the set of initial conditions that result in exactly $0$, $1$ and $2$ switches.
These cones are formed by the line $Kx=0$ and the half-line $\R_+\left[ 1 \,\, 0 \right]^\top$.
Observe also that $\left[1 \,\, 0\right]^\top\in\mathcal{C}_2$ is an eigenvector of~$A_2$. {Moreover, all points initialized in $\mathcal{C}_1$ reach the invariant region of $\mathcal{C}_2$ after a single iteration as highlighted in Figure~\ref{f1secondsingle} for $x_0=[0~-1]^\top$.}\\

\begin{figure}[ht]
	\begin{center}
        \includegraphics[width=\the\columnwidth]{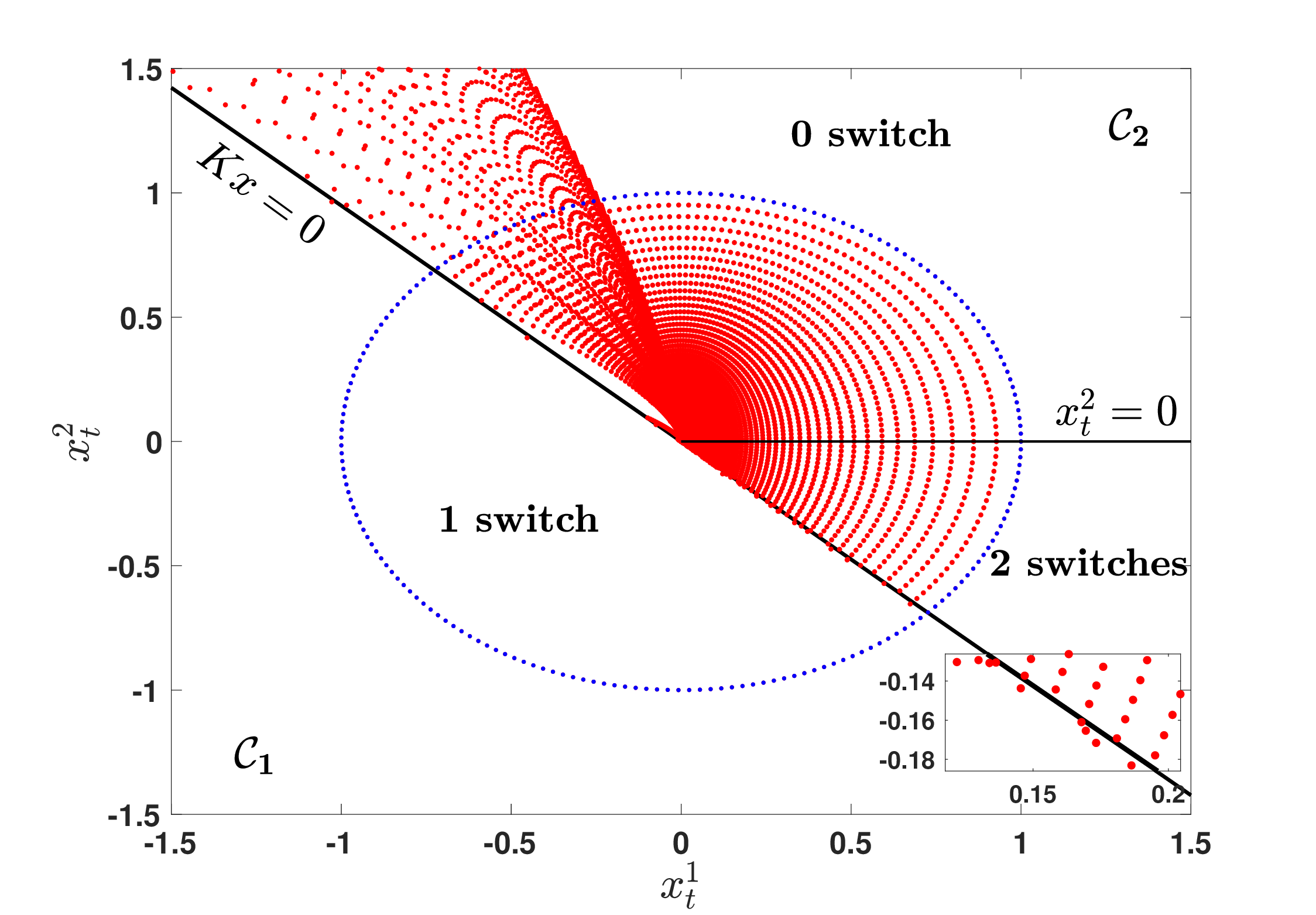}
    \end{center}
		\caption{\label{f1second}
            {Phase portrait in red with $x_0$ on the unit circle.}
        }
\end{figure}

\begin{figure}[ht]
	\begin{center}
        \includegraphics[width=\the\columnwidth]{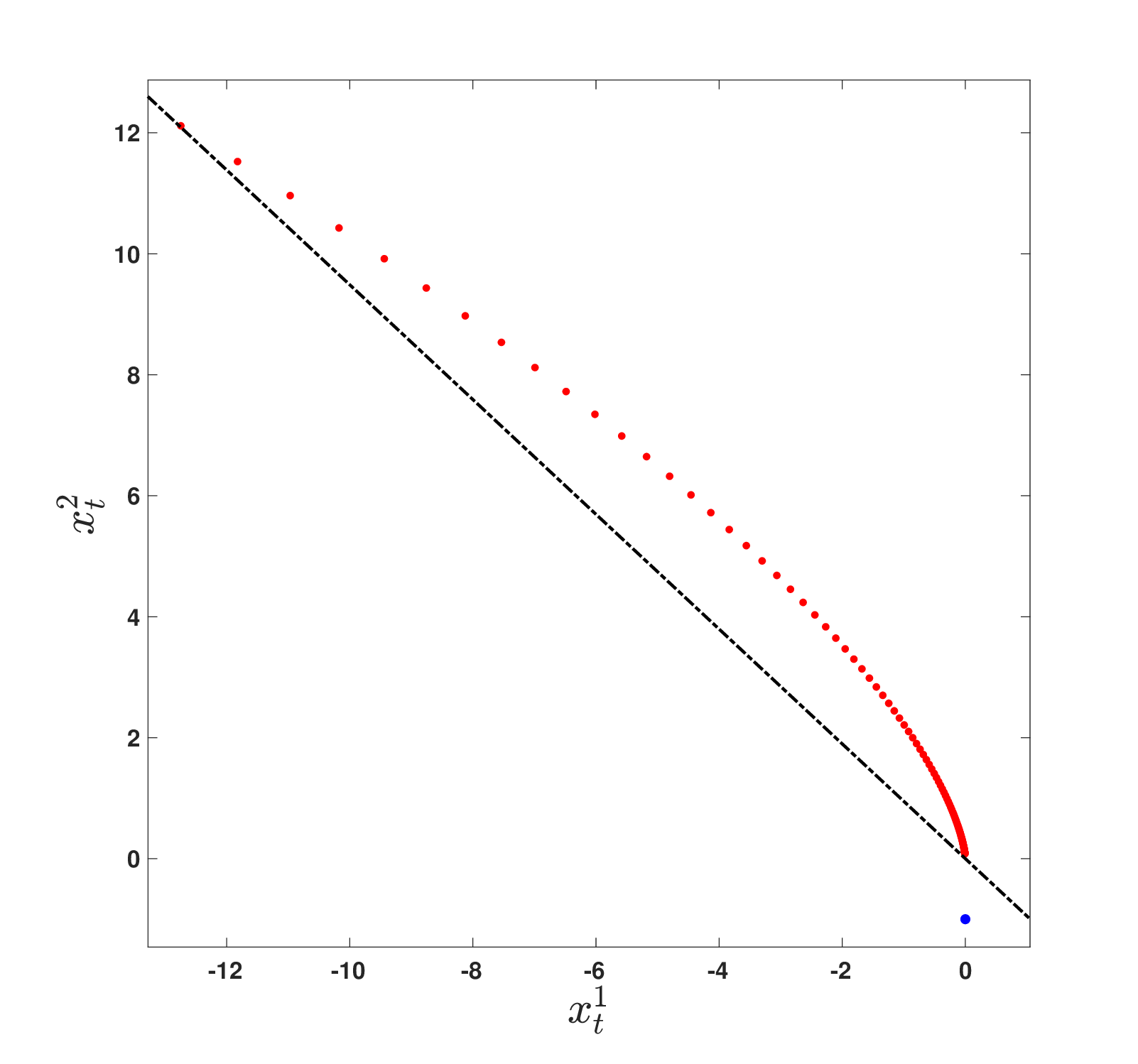}
    \end{center}
		\caption{\label{f1secondsingle}
            {{Phase portrait in red with $x_0=[0, -1]^\top$ ($x_0$ in ${\cal C}_1$).}}
        }
\end{figure}

Let us illustrate the theoretical results of Section \ref{subsect:m=2}. The application of Theorem~\ref{unboundtn} reduces to check that the following implication holds true for $x\in\R^{n}$ 
$$
Kx \geq 0 \mbox{ and } KA_1x \leq 0 \quad \Longrightarrow \quad \forall t\in\N^*, \quad KA_2^{t}A_1x \leq 0
$$
For $t=1$, using Lemma \ref{res:Farkas}, this is equivalent to the existence of $\alpha_1^1, \alpha_2^1 \geq 0 $ such that
$$
-KA_2A_1 = \alpha_1^1 K - \alpha_2^1 KA_1 .
$$
The pair ($-A_1, K$) is observable. Hence,
$$
\begin{array}{ r@{\>}c@{\>}l}
	\left[ \begin{array}{cc}  \alpha_1^1 & \alpha_2^1 \end{array} \right] 
	&=& -KA_2A_1 \left[ \begin{array}{cc} K \\ -KA_1 \end{array} \right] ^{-1}\\
	&=& \left[ \begin{array}{cc} 0.1281 & 1.5541 \end{array} \right].
\end{array}
$$
Now, we have to study the system
$$
\beta _{t+1} = {\cal L} \beta _{t}, \quad \beta _0 = \left[ \begin{array}{cc} \alpha_1^1 & \alpha_2^1 \end{array} \right]^{\top} \geq 0
$$
$$
\mbox{with } {\cal L} = \left[ \begin{array}{rc}
		0.3259  & 0.1281 \\
		-2.9443 & 1.5541
 \end{array} \right], \quad \beta _0 = \left[ \begin{array}{c} 0.1281 \\ 1.5541 \end{array} \right]. 
$$
The eigenvalues of ${\cal L}$, which are also eigenvalues of $A_2$, and the associated eigenvectors are given by
$\lambda_1 = 0.95$, 
$\lambda_2 = 0.93$,
$v_1 = \left[ \begin{array}{cc} -0.2010 & -0.9796 \end{array} \right]^\top$.
$v_2 = \left[ \begin{array}{cc} -0.2074 & -0.9783 \end{array} \right]^\top$,
We have $\beta _t = \sum_{i=1}^{n} \gamma_i v_i \lambda _i^t$, with $\beta_0 = \sum_{i=1}^{n} \gamma_i v_i \geq 0$,
where $\gamma_1$ and $\gamma_2$ are given by
$$
\left[ \begin{array}{c} \gamma_1 \\ \gamma_2 \end{array} \right]
= \left[ \begin{array}{ccc}  v_1 & v_2 \end{array} \right] ^{-1} \beta _0
= \left[ \begin{array}{r}  28.7099 \\  -30.2574  \end{array} \right].
$$
$$
\mbox{and } \gamma_1 v_1 =	\left[ \begin{array}{r}  6.0818 \\29.6399 \end{array} \right],
\quad
\gamma_2 v_2 =	\left[ \begin{array}{r} -5.9537 \\-28.0858 \end{array} \right],
$$
with $\lambda_1 > \lambda_2 > 0$, we conclude from \cref{TestJerome} that
$$\beta _t  = \sum_{i=1}^{2} \gamma_i v_i \lambda _i^t>0, \quad \forall t\in\N.$$
As a conclusion, any solution starting in the set ${\cal C}_1$ switches not more than once. We can conclude that, for this example, any solution to the considered system switches not more than twice as some can be initialized in ${\cal C}_2$. The fact that the origin of the considered system is GES follows by Theorem \ref{thm:stability} as $A_1$ and $A_2$ are Schur. 

\begin{remark} For this example, it turns out that a piecewise quadratic Lyapunov function, given by (\ref{eqLyapPWQ}), exists.
Indeed, the LMIs conditions (\ref{eqLMIPWQ}) given in Appendix~\ref{appendix:PWQ} are feasible with
$
P_1 =  \left[ \begin{array}{cc}  
2.5662  &  2.6008\\
    2.6008  &  2.6427  \end{array} \right]$ , $ 
P_2 = \left[ \begin{array}{cc}  
0.1744 &   0.1570\\
    0.1570  &  0.1482  \end{array} \right] 
$, $
Y_1 = 0.0091$, $Y_2 = 9.2206\times10^{-6}
$, $
Y_3 =  \left[ \begin{array}{cc}  
0.0017   & 0.0232 \\
    0.0232  &  0.0038   \end{array} \right]$, $
Y_4 =  \left[ \begin{array}{cc}  
0.0043  &  0.0058 \\
    0.0058 &   0.0297     \end{array} \right] 
$.
\end{remark}

\subsection{Insulin infusion problem}\label{subsect:applications-insulin}

We consider system (\ref{eqCLS}) with $n=4$, $m=2$ and matrices~$A$,~$B$ and~$K$ given by~(\ref{eq:matrices-insulin-A-B}) and~(\ref{eq:matrices-insulin-K}). Without the results presented in this paper, stability analysis of this $4^{{th}}$-order example is an open question as far as we know. 
Let $A_1=A+BK$ and $A_2=A$ as in Section \ref{subsect:ex-cls-system}. The eigenvalues of $A_2$ are $\lambda_1 =  0.9592$, $\lambda_2 = 0.9512$, $\lambda_3 = 0.9277$ and $\lambda_4 =0.9200$. We next show that the conditions of Theorem~\ref{unboundtn} hold with {$i_1 =2$}.
To this end, we use Theorem~\ref{TestJerome}  along with conditions given in~(\ref{corro4}). 

We start by considering (\ref{eq:St}) with {$i_1 = 1$}, $n = 4$ and $t_1=1$. We have to determine the values of $t_2$ and $t_3$ that satisfy the conditions of Theorem~\ref{unboundtn}, namely for every $t_2,t_{3}\in\N^\star$, we either have  $\Sigma_{i_1;1,t_2,t_{3},0}={\{0\}}$ or there exists an invertible matrix $\mathcal{N}\in\R^{4\times 4}$, whose lines are taken from~\eqref{eq:St} such that items~(i) and~(ii) of Theorem~\ref{unboundtn} are satisfied. By employing linear programming techniques, we have identified 58 pairs of values for $t_2$ and $t_3$, as displayed in \cref{demo-table}.
	\begin{table}[H]
	\begin{center}
		%	\begin{tabular}{|c|p{0cm}|p{0cm}|p{0cm}|p{0cm}|p{0cm}|p{0cm}|p{0cm}|p{0cm}|p{0cm}|p{0cm}|p{0cm}|p{0cm}|p{0cm}|p{0cm}|p{0cm}|}
		\begin{tabular}{|@{\,}c@{\,}||c|c|c|c|c|c|c|c|c|c|c|c|c|c|c|}
			\hline
			$t_2$ & 1   &  3   &  4  &   5    & 6    & 7     &8    & 9   & 10   & 11   & 12  &  13    
			\tabularnewline
			\hline
			$t_3$  &  1     &1 &    1  &   1  &   1   &  1   & 1 &    1     &1 &    1 &    1 & 1      
			\tabularnewline
			\hline
			\hline
			$t_2$ & 14  &  15   & 16   & 17   & 18  &  19  &20   & 21  &  22    &23    &24 &    1     
			\tabularnewline
			\hline
			$t_3$ &  1  &   1   &  1    & 1   &  1 &    1 &     1   &  1   &  1  &   1  &   1  &   2  
			\tabularnewline
			\hline
			\hline
			$t_2$ &   3 &    4  &   5 &    6  &   7  & 8   &  9   & 10   & 11   & 12   & 13   & 14 
			\tabularnewline
			\hline
			$t_3$ &    2   &  2  &   2   &  2 &    2  &   2     & 2     &2  &   2   &  2  &   2 &    2   
			\tabularnewline
			\hline
			\hline
			$t_2$ &  15   & 16&    17   &  1   &  2  &   3  &   4  &   5     & 6     &7  &   8  &   9  
			\tabularnewline
			\hline
			$t_3$ &   2 &    2  &   2   &  3  &   3 &    3    & 3   &  3  &   3  &    3    & 3  &   3
			\tabularnewline
			\hline
			\hline
			$t_2$ & 10    & 1   &  2  &   3  &   4  &   1 &     2   &  3     &1    & 2 & & 
			\tabularnewline
			\hline
			$t_3$ &   3   &  4 &    4 &    4  &   4    & 5    &  5  &   5   &  6    & 6& &
			%$t_2$ & 1 &     1 &    1  &    2   &   3   &   1  &    2   &   3   &   4    &  1  &    2  &    3   &   1  &    2  &    1
			\tabularnewline
			\hline
			%$t_3$ & 1   &   2   &   3  &    3  &    3  &    4  &    4   &   4   &   4   &   5&      5  &    5  &    6   &   6  &    7 
			%	\tabularnewline
			%	\hline
		\end{tabular}
	\end{center}
	\caption{\label{demo-table}Values of $t_2$ and $t_3$ in section \ref{subsect:m=2}.}
\end{table}

For all these 58 pairs, we checked that Theorem~\ref{unboundtn} holds, that is, we have found for each of them an invertible matrix $\mathcal{N}\in\R^{4\times 4}$ whose lines are taken from~\eqref{eq:St} such that items~\ref{it:unboundtn:1} and~\ref{it:unboundtn:2} of Theorem \ref{unboundtn} are satisfied. 
For item~\ref{it:unboundtn:2}, we use Theorem~\ref{TestJerome} and conditions (\ref{corro4}). Detailed computations for two of these pairs are provided in Appendix \ref{appendix:details-checking-conditions-insulin}.   The conclusion is that any solution to the considered system exhibits no more than 4 switches. As the eigenvalues of $A_1$ and $A_2$ lie inside the unit circle, the origin is GES by Theorem~\ref{thm:stability}. 

\section{Conclusion}\label{sect:conclusions}

This paper focuses on the stability analysis of  CLS, whose solutions all exhibit a finite number of switches.
This property is very useful when investigating stability, as we showed, and somehow move the problem to analytically establish that all solutions indeed switch a finite number of times. We have first presented general, sufficient conditions in terms of sets intersections.
To illustrate how these conditions can be exploited, we have then concentrated on the case where two cones partitions the state space, for which we have developed numerically tractable sufficient conditions.
Interestingly, these contributions required the development of novel results on the non-negativity of solutions to discrete-time dynamical systems for given initial conditions.
The theoretical developments have been applied successfully to the optimization-based control of a model of insulin infusion, for which all the existing stability results we are aware of failed. 

It would be interesting in future work to develop similar tools to prove that solutions exhibit a finite number of switches and exploit this property for stability analysis for other classes of hybrid dynamical systems.  
{Another potential direction would be proving the existence of a finite number of switches using the LCS approach. This would first require expressing the CLS as a LCS, followed by the derivation of stability conditions for discrete-time LCS. This path is very challenging. Indeed, stability analysis for discrete-time LCS remains underdeveloped, with limited attention in the literature. }

\appendix
\subsection{Sampled-data modeling for optimal insulin infusion}\label{appendix:sampled} 
Let $y$ be the BGL response, $u$ the insulin input and $f=F \delta$ an impulse of food at time $t=0$. In \cite{Good2019} impulses are used as a mathematical abstraction of insulin applied, and meals consumed and the optimal insulin infusion model is given by linear transfer functions as, for any $s\in\mathbb{C}$, $y(s) = T_F(s) f(s) - T_I(s) u(s)$ where
{\begin{equation}\label{eq:T-I}
T_I(s) := \dfrac{K_I}{({a}_1s+1)({a}_2s+1)({a}_3s+1)}, 
\end{equation}}
with $K_I = 600 \text{ (mmol/L)/(U/min)}$, ${a}_1 = 60 \text{ min}$,  ${a}_2 = 100 \text{ min}$, ${a}_3 = 120 \text{ min}$,  and 
{$$ 
T_F(s) := \dfrac{K_F}{({b}_1s+1)({b}_2s+1)({b}_3s+1)}
$$}
with $K_F = 50 \text{ (mmol/L)/(g/min)}$,  ${b}_1 = 70 \text{ min}$,  ${b}_2 = 110 \text{ min}$ and ${b}_3 = 125 \text{ min}$. To enhance disturbance rejection,  we add an integral action, however instead of implementing a pure integral action, we opt for a first-order filter, for any $s\in\mathbb{C}$, $\tilde{y}(s) = y(s)/(s+\tau)$ with $\tau = 0.015$.
To design the state feedback control, we use a state space model with four state variables, three of them correspond to the transfer function $T_I$ in (\ref{eq:T-I}). It is given by
$\dot{x} = A_c x + B_c u$,  $y = C_c x$ with 
$$
A_c := \left[ \begin{array}{cccc}
   -0.0350  & -0.0249 &  -0.0114   &      0 \\
    \phantom{-}0.0156  &       0     &    0   &      0\\
         0  &  \phantom{-}0.0078      &   0   &      0\\
         0   &      0 &  -3.4133 &  -0.0150 \end{array} \right],
$$
$$
B_c :=\left[ \begin{array}{c}
     2 \\
     0 \\
     0\\
     0\end{array} \right] , \quad 
C_c :=\left[ \begin{array}{cccc}
         0   &      0  & -3.4133    &     0\end{array} \right].
$$
We consider the quadratic cost 
$$
J_c(x_0,u)  := \int_0^{\infty} \Big( x^{\top}(\tau )Q_cx(\tau) + u^{\top}(\tau)R_c u(\tau) \Big) d\tau,
$$
with $Q_c:=\diag(1,1,1,70)$ and $R_c=1$. The design is performed using a sampled-data version, with a sampling period of $T=5$ min, that is (\ref{eqdisc}) with $A =  e^{A_cT}$,  $B = e^{A_cT} B_c$, and cost $J$  as in (\ref{eq:cost-J}) with 
$Q = \int_0^T (e^{A_c\nu})^{\top} Q_c e^{A_c\nu} d\nu$, $S =  \int_0^T  (e^{A_c\nu})^{\top}   Q_c e^{A_c\nu} B_c   d\nu$, $R=\int_0^T B_c^{\top} (e^{A_c\nu})^{\top}  Q_c e^{A_c\nu} B_c  d\nu +  R_c$.  
This leads to the matrices in (\ref{eq:matrices-insulin-A-B}), and $R = 17.8719$,\linebreak
$
Q = 10^4\left[ \begin{array}{rrrr}
    0.0004 &   0.0000 &   0.0009 &  -0.0001 \\
    0.0000  &  0.0012 &   0.0474&   -0.0036 \\
    0.0009  &  0.0474  &  3.2140  & -0.2772 \\
   -0.0001&   -0.0036 &  -0.2772  &  0.0325
\end{array}\right]$ and\linebreak
$S = \left[ \begin{array}{r}
    8.4359 \\
    0.1558 \\
   18.8537\\
   -1.3612
\end{array}\right].
$

\subsection{LMI conditions for conewise quadratic Lyapunov function (\ref{eqLyapPWQ}) with the same partition as (\ref{sysPWQ})}\label{appendix:PWQ}
The next result reduces the conservatism of the conditions in \cite{Feng2002} by taking into account the switches from one cone to another, and obviously applies to general CLS with $m=2$.
\begin{proposition}\label{propPWQ}
	The origin of  {(\ref{sysPWQ})-(\ref{eqcvx12})} is GES if there exist symmetric matrices $P_1$, $P_2$, and symmetric matrices $Y_i$, $i\in\{1,\ldots,4\}$ with non-negative entries, satisfying 
	\begin{equation} \label{eqLMIPWQ}
		\begin{array}{l}
			P_1-\left[\begin{array}{c} K \\KA_1\end{array} \right]^{\top} Y_1\left[\begin{array}{c} K \\KA_1\end{array} \right] >0\\
			A_1^{\top} P_1A_1-P_1+\left[\begin{array}{c} K \\KA_1\end{array} \right]^{\top} Y_1 \left[\begin{array}{c} K \\KA_1\end{array} \right]  < 0\\
          %  ~\\
			P_2-\left[\begin{array}{c} K \\ KA_2 \end{array} \right] ^{\top}Y_2\left[\begin{array}{c} K \\ KA_2 \end{array} \right] >0 \\
			A_2^{\top} P_2A_2-P_2+\left[\begin{array}{c} K \\ KA_2 \end{array} \right] ^{\top}  Y_2 \left[\begin{array}{c} K \\ KA_2 \end{array} \right] < 0\\
		%	~\\
			A_2^{\top} P_1A_2-P_2+ \left[\begin{array}{c} -K \\ KA_2 \end{array} \right] ^{\top}  Y_3 \left[\begin{array}{c} -K \\ KA_2 \end{array} \right] < 0\\
		%	~\\
			A_1^{\top} P_2A_1-P_1
				+\left[\begin{array}{c} K \\-KA_1\end{array} \right]^{\top} Y_4 \left[\begin{array}{c} K \\-KA_1\end{array} \right] < 0.
		\end{array}\hskip -1cm
	\end{equation}
\end{proposition}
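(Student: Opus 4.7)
The approach is to show that $V_{PWQ}$ defined in (\ref{eqLyapPWQ}) is a global Lyapunov function for the closed-loop system (\ref{sysPWQ})--(\ref{eqcvx12}), from which GES follows by homogeneity of $V_{PWQ}$ and compactness on the unit sphere. The central tool is a componentwise S-procedure: if $Y\in\R^{2\times 2}$ is symmetric with non-negative entries and $y\in\R^{2}$ has both entries of the same sign, then $y^{\top} Y y\geqslant 0$. This lets me encode the half-space constraints $Kx\gtrless 0$ and $KA_i x\gtrless 0$ directly into the LMIs as non-negative slack terms, one for each admissible configuration of $(x_t,x_{t+1})$ across the two cones.

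First I would establish positive definiteness of $V_{PWQ}$. For $x\in\mathcal{C}_1$ with $A_1 x\in\mathcal{C}_1$, multiplying the first LMI by $x$ on the left and right gives $x^{\top} P_1 x > (Kx,KA_1 x)Y_1(Kx,KA_1 x)^{\top}\geqslant 0$, since $(Kx,KA_1 x)^{\top}$ is componentwise non-negative and $Y_1$ has non-negative entries. For $x\in\mathcal{C}_2$ with $A_2 x\in\mathcal{C}_2$, the third LMI yields $x^{\top} P_2 x > 0$ by the same argument applied to the non-positive vector $(Kx,KA_2 x)^{\top}$ (same sign on both entries makes the slack non-negative). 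The two crossing configurations are then handled by propagating positivity: e.g., for $x\in\mathcal{C}_1$ with $A_1 x\in\mathcal{C}_2$, the sixth LMI together with invertibility of $A_1$ gives $x^{\top} P_1 x > (A_1 x)^{\top} P_2(A_1 x) > 0$ whenever positivity is already known at $A_1 x\in\mathcal{C}_2$.

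Next I would verify the strict Lyapunov decrease by splitting into four cases:
\begin{itemize}
\item[(i)] $x\in\mathcal{C}_1$, $A_1 x\in\mathcal{C}_1$: LMI2 multiplied by $x$ yields $(A_1 x)^{\top} P_1(A_1 x)-x^{\top} P_1 x<-(Kx,KA_1 x) Y_1 (Kx,KA_1 x)^{\top}\leqslant 0$;
\item[(ii)] $x\in\mathcal{C}_2$, $A_2 x\in\mathcal{C}_2$: LMI4 analogously, using that $(Kx,KA_2 x)^{\top}$ has both entries non-positive;
\item[(iii)] $x\in\mathcal{C}_2$, $A_2 x\in\mathcal{C}_1$: LMI5 with slack $(-Kx, KA_2 x)^{\top}\geqslant 0$ gives $(A_2 x)^{\top} P_1(A_2 x)<x^{\top} P_2 x$;
\item[(iv)] $x\in\mathcal{C}_1$, $A_1 x\in\mathcal{C}_2$: LMI6 with slack $(Kx,-KA_1 x)^{\top}\geqslant 0$ gives $(A_1 x)^{\top} P_2(A_1 x)<x^{\top} P_1 x$.
\end{itemize}
In every case, $V_{PWQ}(x_{t+1})<V_{PWQ}(x_t)$. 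Since $V_{PWQ}$ is $2$-homogeneous and the strict inequalities are preserved on the compact unit sphere intersected with each cone, I extract a uniform contraction rate $\mu\in(0,1)$ such that $V_{PWQ}(x_{t+1})\leqslant \mu V_{PWQ}(x_t)$, together with quadratic bounds $\alpha|x|^2\leqslant V_{PWQ}(x)\leqslant \beta |x|^2$, which yields GES in the standard way.

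The main obstacle I anticipate is cleanly tying together positive definiteness across the two crossing regions, where the first and third LMIs do not a priori guarantee a sign for their slack terms. My preferred resolution is the bootstrapping described above: invertibility of $A_1,A_2$ (Assumption~\ref{ass:regular}) ensures $A_i x\neq 0$ when $x\neq 0$, so the decrease LMIs transport positivity from the ``same-sign'' regions (where LMI1 and LMI3 close the argument directly) to the crossing regions. An alternative, equivalent route is to observe that the strict decrease already forbids $V_{PWQ}(x)\leqslant 0$ for $x\neq 0$ (else $V_{PWQ}$ would become increasingly negative along the trajectory, contradicting the lower bound on the same-sign subsets). The rest of the proof is a mechanical case check against (\ref{eqLMIPWQ}).
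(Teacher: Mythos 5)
Your overall route is the same as the paper's: take $V_{\text{PWQ}}$ from (\ref{eqLyapPWQ}) as the Lyapunov function candidate, split the analysis according to the four configurations of $(x_t,x_{t+1})$ across the two cones, and use the S-procedure (entrywise non-negative multipliers $Y_i$ together with the sign information on $Kx$ and $KA_ix$) to pass between the regional inequalities and the LMIs (\ref{eqLMIPWQ}). Your four decrease cases (i)--(iv) and the sign bookkeeping for the slack terms are exactly the paper's argument, and they are correct.

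The genuine gap is the one you yourself flag, and neither of your two proposed resolutions closes it. The first and third LMIs certify $x^{\top}P_1x>0$ only on $\{Kx\ge0,\ KA_1x\ge0\}$ and $x^{\top}P_2x>0$ only on $\{Kx\le0,\ KA_2x\le0\}$; on the crossing regions the slack vectors have entries of opposite signs, so the quadratic forms in $Y_1,Y_2$ need not be non-negative there. Your bootstrapping produces the chain $x^{\top}P_1x>(A_1x)^{\top}P_2(A_1x)>(A_2A_1x)^{\top}P_1(A_2A_1x)>\cdots$, which terminates with a positive quantity only if the trajectory eventually lands in one of the two ``same-sign'' regions; but nothing in the hypotheses of the proposition prevents a trajectory from switching cones at every single step, in which case the chain never closes. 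Your fallback contradiction argument fails for the same reason: a trajectory along which $V_{\text{PWQ}}$ is negative and strictly decreasing need never visit the same-sign subsets, and ``increasingly negative'' is not by itself contradictory, since the only available lower bound is $V_{\text{PWQ}}(x)\ge-\beta|x|^2$ with $|x_t|$ not yet known to be bounded. Relatedly, your concluding step $V_{\text{PWQ}}(x_{t+1})\le\mu V_{\text{PWQ}}(x_t)$ together with $\alpha|x|^2\le V_{\text{PWQ}}(x)$ presupposes exactly the global positive definiteness in question; what the LMIs directly give is the additive decrement $V_{\text{PWQ}}(x_{t+1})\le V_{\text{PWQ}}(x_t)-\epsilon|x_t|^2$, which without a positive lower bound on $V_{\text{PWQ}}$ does not by itself yield GES. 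To be fair, the paper's own proof is a two-line sketch that lists the same regional conditions and does not address this point either, so you have correctly isolated the one step that actually requires an argument; but as written your proof does not supply it.
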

\begin{proof}
	The proof uses  (\ref{eqLyapPWQ}) as a Lyapunov function candidate.
    The next conditions ensure that $x\mapsto x^{\top}P_i x$, $i\in\{1,2\}$, strictly decreases along solutions to (\ref{sysPWQ}), for $x\neq 0$,
    \begin{align*}
	   \left\{\begin{array}{r@{\>}c@{\>}l}
        x^{\top} P_1x &>& \0,  \\
		x^{\top}(A_1^{\top} P_1A_1-P_1)x&<&\0,
	   \end{array}\right. &&& \quad 	x \in {\cal C}_1 \mbox{ and } A_1 x\in{\cal C}_1,\\
	   \left\{\begin{array}{r@{\>}c@{\>}l} 
		x^{\top} P_2x &>& \0,   \\ 
		x^{\top} (A_2^{\top} P_2A_2-P_2)x&<&\0,  
	   \end{array}\right. &&& \quad 	x \in {\cal C}_2 \mbox{ and } A_2 x\in{\cal C}_2,\\
	   x^{\top} (A_2^{\top} P_1A_2-P_2)x<\0,\ \, &&& \quad x\in {\cal C}_2 \mbox{ and }   A_2x  \in {\cal C}_1,\\
	   x^{\top}(A_1^{\top} P_2A_1-P_1)x <\0,\ \, &&& \quad x\in{\cal C}_1	\mbox{ and } A_1 x\in{\cal C}_2. 
    \end{align*}
	The desired result is obtained by using the S-procedure, which {leads} to the LMIs conditions in (\ref{eqLMIPWQ}).
\end{proof}

\subsection{Discrete-time version of the results in \cite{angeli-homogeneous99}}\label{appendix:attractivity-ges}
The aim of this appendix is to prove that global attractivity of the origin is equivalent to the origin to be GES for homogeneous discrete-time systems of degree $1$, which includes system (\ref{eqCLS}). We consider for this purpose the system
\begin{equation}\label{eq:sys}
	x_{t+1} = f(x_t),
\end{equation}
where $x_t\in\R^{n}$ is the state at time $t\in\N$, $n\in\N^\star$, and $f:\R^{n}\to\R^{n}$ satisfies the next assumption.

\begin{assumption}\label{ass:f}
	Vector field $f$ in \eqref{eq:sys} verifies the properties:
	\begin{enumerate}%[label={\ref{ass:f}.\alph*}]
		\item[(i)] $f$ is continuous on $\R^{n}$. %\label{ass:f:1}
		\item[(ii)] $f$ is homogeneous of degree $1$, i.e., for any $x\in\R^{n}$ and any $\lambda\in\R_+$, $f(\lambda x)=\lambda f(x)$. %\label{ass:f:2}
	\end{enumerate}
\end{assumption}

We denote the solution to \eqref{eq:sys} initialized at $x\in\R^{n}$ at time $t\in\N$ as $\phi(t,x)$.
The goal is to prove the next result, which is stated in \cite{angeli-homogeneous99} for continuous-time systems.
Note that the result below is invoked in \cite{sun-tac08,alonso-rocha-tac10} with no proofs.
\begin{theorem}\label{thm:attractivity-ges}
	Consider system \eqref{eq:sys} and suppose that Assumption~\ref{ass:f} holds. The following statements are equivalent.
	\begin{enumerate}[label={(\roman*)}]
		\item\label{it:thm:attractivity-ges:1} $x=0$ is globally attractive, i.e., any solution $\phi$ asymptotically converges to $0$ as time grows.
		\item\label{it:thm:attractivity-ges:2} $x=0$ is GES. 
        %\label{thm:main:2}
	\end{enumerate}
\end{theorem}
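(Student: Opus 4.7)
The implication \ref{it:thm:attractivity-ges:2}$\Rightarrow$\ref{it:thm:attractivity-ges:1} is immediate from the definition of GES. The plan for the converse, \ref{it:thm:attractivity-ges:1}$\Rightarrow$\ref{it:thm:attractivity-ges:2}, is to exploit the interplay between global attractivity, the continuity of every iterate $\phi(t,\cdot)=f\circ\cdots\circ f$ ($t$ times), which follows by induction from the continuity of $f$ in \cref{ass:f}, and the homogeneity of $f$ of degree~$1$, which gives $\phi(t,\lambda x)=\lambda\phi(t,x)$ for every $\lambda\geq 0$, $x\in\R^n$ and $t\in\N$. The strategy is to reduce everything to the compact unit sphere $\mathbb{S}:=\{y\in\R^n\,:\,|y|=1\}$ and, there, to combine a uniform finite-time growth bound with a uniform finite-time contraction estimate, so as to trigger an iterative halving argument that yields the exponential decay globally via homogeneity.

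I would split the proof into three steps. \emph{Step 1 (uniform growth bound on a time window).} For any $T\in\N$, the continuity of each $\phi(t,\cdot)$ and the compactness of $\mathbb{S}\times\{0,\dots,T\}$ yield $C_T:=\sup\{|\phi(t,y)|\,:\,y\in\mathbb{S},\,0\leq t\leq T\}<\infty$, and homogeneity upgrades this to $|\phi(t,x)|\leq C_T|x|$ for every $x\in\R^n$ and $t\in\{0,\dots,T\}$. \emph{Step 2 (uniform halving time).} The key step is to build $T^\star\in\N^\star$ such that, for every $y\in\mathbb{S}$, there exists $t\in\{1,\dots,T^\star\}$ with $|\phi(t,y)|\leq 1/2$. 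For each $y\in\mathbb{S}$, \ref{it:thm:attractivity-ges:1} provides some $T_y\in\N^\star$ with $|\phi(T_y,y)|\leq 1/4$; continuity of $\phi(T_y,\cdot)$ then yields an open neighborhood $\mathcal{N}_y$ of $y$ in $\mathbb{S}$ on which $|\phi(T_y,\cdot)|\leq 1/2$; extracting a finite subcover $\mathcal{N}_{y_1},\dots,\mathcal{N}_{y_k}$ of $\mathbb{S}$ and setting $T^\star:=\max_i T_{y_i}$ provides the desired uniform time. Homogeneity then promotes this to: for every $z\in\R^n\setminus\{0\}$, there is $t(z)\in\{1,\dots,T^\star\}$ such that $|\phi(t(z),z)|\leq |z|/2$.

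\emph{Step 3 (iteration).} Fix $y\in\mathbb{S}$ and define $\tau_0=0$ and $\tau_{k+1}=\tau_k+t(\phi(\tau_k,y))$ as long as $\phi(\tau_k,y)\neq 0$ (recall that $f(0)=0$ by homogeneity, so the trajectory stays at~$0$ once it hits~$0$). Then $|\phi(\tau_k,y)|\leq 2^{-k}$ and $\tau_k\leq kT^\star$, and for $t\in[\tau_k,\tau_{k+1})$, Step~1 combined with homogeneity gives $|\phi(t,y)|=|\phi(t-\tau_k,\phi(\tau_k,y))|\leq C_{T^\star}|\phi(\tau_k,y)|\leq C_{T^\star}2^{-k}$. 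For any $t\in\N$, setting $k=\lfloor t/T^\star\rfloor$ yields $\tau_k\leq kT^\star\leq t$, hence $|\phi(t,y)|\leq C_{T^\star}2^{-k}\leq 2C_{T^\star}e^{-(\ln 2/T^\star)t}$. Invoking homogeneity one last time, $|\phi(t,x)|\leq 2C_{T^\star}e^{-(\ln 2/T^\star)t}|x|$ for every $x\in\R^n$ and $t\in\N$, which is \ref{it:thm:attractivity-ges:2}. The main obstacle is Step~2: \ref{it:thm:attractivity-ges:1} alone provides only a pointwise time $T_y$ depending on $y$, and the delicate point is to bootstrap these into a single $T^\star$ valid uniformly over $\mathbb{S}$; this is exactly where continuity of each iterate $\phi(T_y,\cdot)$ and compactness of the unit sphere cooperate, while all remaining work reduces to scaling by homogeneity of degree~$1$.
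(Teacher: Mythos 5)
Your proof is correct and follows essentially the same route as the paper's: reduction to the unit sphere by homogeneity, a uniform time $T^\star$ by which every sphere solution enters $\mathbb{B}(0,1/2)$ (the paper obtains this via a sequential-compactness contradiction argument, Lemma~\ref{lem:sup-finite}, whereas you use an open-cover argument with the $1/4$-to-$1/2$ buffer --- the two are interchangeable), followed by the same iterative halving combined with the finite-horizon growth bound of Lemma~\ref{lem:bounded} to interpolate between halving times. The resulting constants $c_1=2C_{T^\star}$ and $c_2=\ln 2/T^\star$ match the paper's.
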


Before proving Theorem \ref{thm:attractivity-ges}, we need to state the next lemmas. 
Define for $x\in\R^n$ and $\Omega$ an open set of $\R^n$,
$$\tau(x,\Omega)=\inf\left\{ t\in\N\ \,:\,\ \phi(t,x)\in\Omega \right\},$$
with the convention that $\inf\emptyset=\infty$.
The next lemma is a discrete-time special case of \cite[Corollary~III.3]{Sontag-tac96(new-iss)}.
\begin{lemma}\label{lem:sup-finite}
	Consider \eqref{eq:sys} and suppose the following holds.
	\begin{enumerate}
		\item[(i)] Item (i) of Assumption \ref{ass:f} holds.
		\item[(ii)] There {exist} ${\mathcal{D}}\subset\R^{n}$ compact, $\Omega\subseteq\R^{n}$ open, $\mathcal{S}$ compact with $\mathcal{S}\subset \Omega$ and
			\begin{equation}\label{eq:cor-condition}
				\forall x\in{\mathcal{D}},\ \exists t\in\N\text{ s.t. }\ \phi(t,x)\in \mathcal{S}.
			\end{equation}
	\end{enumerate}
	Then there {exists} $t_0\in\N$ such that $\tau(x,\Omega)\le t_0$ for every $x\in{\mathcal{D}}$.
\end{lemma}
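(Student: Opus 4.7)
The plan is to proceed by contradiction, exploiting the compactness of $\mathcal{D}$ together with the continuity of the iterates of $f$.

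Suppose, for contradiction, that no such uniform bound $t_0$ exists. Then for every $k \in \N$ there is some $x_k \in \mathcal{D}$ with $\tau(x_k, \Omega) \ge k$. Since $\mathcal{D}$ is compact, I extract a subsequence (still denoted $x_k$) converging to some $x^\star \in \mathcal{D}$. By hypothesis~(ii) applied to $x^\star$, there exists $t^\star \in \N$ with $\phi(t^\star, x^\star) \in \mathcal{S}$.

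Next I use continuity. Since $f$ is continuous on $\R^n$ and $\phi(t^\star,\cdot) = f \circ f \circ \cdots \circ f$ ($t^\star$ times) is a finite composition of continuous maps, $\phi(t^\star,\cdot)$ is continuous on~$\R^n$. Therefore $\phi(t^\star, x_k) \to \phi(t^\star, x^\star) \in \mathcal{S} \subset \Omega$. Because $\Omega$ is open and contains the limit, for $k$ large enough $\phi(t^\star, x_k) \in \Omega$, whence $\tau(x_k, \Omega) \le t^\star$. Choosing $k > t^\star$ produces the contradiction with $\tau(x_k,\Omega) \ge k$, and the result follows.

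I do not foresee a serious obstacle: the argument is the standard extraction/continuity scheme, and there is no subtlety with the sliding of the hitting time since the set $\Omega$ is open (so the condition $\phi(t^\star, x_k) \in \Omega$ is preserved under small perturbations), while the compactness of $\mathcal{S}$ relative to $\Omega$ is in fact used only indirectly (through $\mathcal{S} \subset \Omega$). Strictly speaking, for the contradiction argument above the hypothesis that $\mathcal{S}$ is compact can be weakened to $\mathcal{S} \subset \Omega$ with $\Omega$ open, so the compactness of $\mathcal{S}$ is perhaps kept in the statement for symmetry with the usage made in the main text.
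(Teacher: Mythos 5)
Your proof is correct and follows essentially the same route as the paper's: a contradiction argument with a convergent subsequence in the compact set $\mathcal{D}$, hypothesis (ii) applied to the limit point, and continuity of the finite composition $\phi(t^\star,\cdot)$ combined with the openness of $\Omega$ to contradict $\tau(x_k,\Omega)\ge k$. Your side remark that the compactness of $\mathcal{S}$ is not actually needed here (only $\mathcal{S}\subset\Omega$ with $\Omega$ open) is also accurate.
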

\begin{proof}
	Assume by contradiction that there exist $(x_m)_{m\in\N}\in{\mathcal{D}}^{\N}$ such that $\tau(x_{{m}},\Omega)\ge m$. 
	Since ${\mathcal{D}}$ is compact, up to an extraction, we can assume that $(x_m)_m$ converges to some $x^*\in{\mathcal{D}}$.
	But according to \eqref{eq:cor-condition}, there exist $t^*\in\N$ such that $\phi(t^*,x^*)\in\mathcal{S}$.
	In particular, there exists $\varepsilon>0$ such that $\mathbb{B}(x^*,\varepsilon)\subset\Omega$.
	But for every $m>t^*$, we have $\phi(t^*,x_m)\not\in\Omega$, hence, $|\phi(t^*,x_m)-\phi(t^*,x^*)|\ge\varepsilon$.
	This leads to a contradiction with $x\mapsto\phi(t^*,x)$ continuous on $\R^n$.
\end{proof}
The second and final lemma is stated without proof, as it directly follows from the {continuity assumption made on~$f$}.
\begin{lemma}\label{lem:bounded}
	Consider system \eqref{eq:sys} and suppose that item (i) of Assumption \ref{ass:f} holds.
	Let $\Omega$ be a bounded set of $\R^n$ and let $t_0\in\N$.
	There exist $c>0$ depending on $\Omega$ and $t_0$ such that $|\phi(t,x)|\le c$, for every $x\in\Omega$ and every $t\in\{0,\dots,t_0\}$,
\end{lemma}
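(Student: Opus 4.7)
The plan is to proceed by induction on $t\in\{0,\dots,t_0\}$, using the fact that the continuous image of a compact set is compact. First, since $\Omega$ is bounded, its closure $\cl(\Omega)$ is a compact subset of $\R^n$. I would then define recursively a finite sequence of sets $K_0,\dots,K_{t_0}\subset\R^n$ by $K_0=\cl(\Omega)$ and $K_{s+1}=f(K_s)$ for $s\in\{0,\dots,t_0-1\}$.

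The key step is to observe that each $K_s$ is compact. This follows by induction: $K_0$ is compact by construction, and if $K_s$ is compact, then so is $K_{s+1}=f(K_s)$, because $f$ is continuous on $\R^n$ by item (i) of \cref{ass:f}, and the continuous image of a compact set is compact. Moreover, a direct induction on $s$ also shows that, for every $x\in\Omega$ and every $s\in\{0,\dots,t_0\}$, $\phi(s,x)\in K_s$: this is obvious for $s=0$ since $\phi(0,x)=x\in\Omega\subset K_0$, and the induction step uses the recursion $\phi(s+1,x)=f(\phi(s,x))\in f(K_s)=K_{s+1}$.

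To conclude, since each $K_s$ is compact, the quantity $c_s:=\sup_{y\in K_s}|y|$ is finite, and thus $c:=\max_{s\in\{0,\dots,t_0\}}c_s$ is a finite non-negative real number. By construction, $|\phi(s,x)|\le c_s\le c$ for every $x\in\Omega$ and every $s\in\{0,\dots,t_0\}$, which is the desired bound. Note that the homogeneity assumption on $f$ plays no role here; only continuity is needed, which is consistent with the wording of the lemma. I do not expect any significant obstacle: the argument is a standard finite-horizon continuity argument.
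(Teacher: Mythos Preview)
Your argument is correct and is precisely the standard continuity/compactness argument the paper has in mind: the paper states the lemma ``without proof, as it directly follows from the continuity assumption made on~$f$,'' and your inductive construction of the compact sets $K_s=f(K_{s-1})$ makes this explicit.
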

We are ready to prove Theorem~\ref{thm:attractivity-ges}.
\begin{proof}[Proof of Theorem \ref{thm:attractivity-ges}]
	The implication (ii)$\Rightarrow$(i) directly follows from the corresponding statements. We therefore focus on proving (i)$\Rightarrow$(ii). 
	By homogeneity, it is enough to prove the implication for $|x|=1$. 
	\cref{lem:sup-finite} together with item (i) of Theorem \ref{thm:attractivity-ges} ensure the existence of $t_0\in\N$ and of $t_{x,0}\in\{0,\dots,t_0\}$ such that $\phi(t_{x,0},x)\in\mathbb{B}(0,1/2)$.
	Now, for every $t\in\N$, $\phi(t_{x,0}+t,x)=\phi(t,\phi(t_{x,0},x))=|\phi(t_{x,0},x)|\phi(t,y)$ for $y\in\R^n$, such that $|y|=1$ and $|\phi(t_{x,0},x)|y=\phi(t_{x,0},x)$.
	Hence, there exists $t_{y,0}\in\{0,\dots,t_0\}$ such that $|\phi(t_{y,0},y)|\le1/2$, i.e., $|\phi(t_{x,0}+t_{y,0},x)|\le1/4$.
	Thus, using the homogeneity relation, and repeating the above argument, we conclude to the existence of $(t_{x,\ell})_{\ell\in\N}\in\N^{\N}$ such that $t_{x,\ell}\le t_{x,\ell+1}\le t_{x,\ell}+t_0$ and $\phi(t_{x,\ell},x)\in\mathbb{B}(0,2^{-\ell})$ for every $\ell\in\N$.
	Finally, using the homogeneity, together with Lemma~\ref{lem:bounded}, we conclude to the existence of a constant $c=c\left(t_0\right)$ such that for every $\ell\in\N$ and every $t\in\{t_{x,\ell},\dots,t_{x,\ell+1}\}$, $|\phi(t,x)|\le c2^{-\ell}$.
	That is to say that $|\phi(t,x)|\le c2^{-\nu_x(t)}$, where $\nu_x(t)=\card\left\{ \ell\in\N\ \,:\,\ t_{x,\ell}\le t \right\}$.
	Observe that $\nu_x(t)\ge \lfloor t/t_0\rfloor\ge (t/t_0)-1$.
	We conclude the proof by setting $c_1=2c$ and $c_2=\frac{\ln 2}{t_0}$.
\end{proof}

\subsection{Details for the example of Section~\ref{subsect:applications-insulin}}\label{appendix:details-checking-conditions-insulin}
To illustrate how the conditions of Theorem \ref{unboundtn} are checked, we treat two pairs of $(t_2,t_3)$ among the 58 pairs: $(t_2,t_3)=(1,1)$ and $(t_2,t_3)=(3,2)$.

For $t_2=1$ and $t_3=1$, we have from~\eqref{eq:St}\footnote{Recall \cref{unboundtn} is used with $t_1=1$, and recall that in \eqref{eq:St}, $N_{i_1;t_1,\dots,t_{p},t}$ is defined for $0\le t<t_{p+1}$. Hence, in \eqref{eq:St:exp}, the variable $t$ is $0$ in any cases.}
\begin{align}
	\nonumber
N_{1;t} & =  K ,\\
\label{eq:St:exp}%\nonumber
N_{1;t_1,t} & = -KA_{1} ,\\
\nonumber
N_{1;t_1,t_2,t} & =  KA_{2}A_{1} ,\\
\nonumber
N_{1;t_1,t_2,t_{3},0} & =  -KA_{1}A_{2}A_{1} . 	\nonumber
\end{align}
This leads to
{\small $$
{\cal N} =
\left[ \begin{array}{rrrr} 
   -0.4936 &  -6.9988& -104.7360 &   1.6626 \\
0.0139 &  -3.5448 & -93.6330   & 2.0398 \\
-0.0248    &9.9744  &246.4813 &  -4.9790 \\
-0.0140  &  1.8974 &  69.7493 &  -1.8635
\end{array} \right] .
$$}
To check that the intersection (\ref{eq:interS}) is {reduced to $\{0\}$} for every $t \in \N^\star$, we have to check that
$\mathcal{N}x\ge0$ implies  $-KA_{2}^t\mathcal{M}x\ge0,$ for all $t \in \N^\star$ or equivalently $-KA_{2}^t\mathcal{M}=\beta_t^{\top} \mathcal{N}$, for all $t \in \N^\star$.
We obtain
{\small $$
 	\beta_0 %&=& \left[ \begin{array}{cccc}   \alpha_1^1 & \alpha_2^1  & \alpha_3^1 & \alpha_4^1 \end{array} \right]'  \\
= \left[ \begin{array}{cccc}   0.0100   & 3.3166  &  1.0539   & 3.7582  \end{array} \right] ^{\top},
$$}
{\small $$
{\cal L}\  = \left[ \begin{array}{rrrr} 
0.0000  &   0.0000   & -0.0000  &   0.0100 \\
-77.6570 &   -0.0000  &   0.0000  &   3.3166 \\
-22.1708  &  -0.0129  &   0.0000  &   1.0539 \\
300.8926  &  22.1708 &  -77.6570  &   3.7582
\end{array} \right],  
$$}
{\small $$
\gamma _1 v_1 = 10^5\left[ \begin{array}{c}
 0.0090    \\ 
2.2552    \\ 
0.7100   \\ 
0.8637    
\end{array} \right]  , \quad \gamma _2 v_2 = 10^5\left[ \begin{array}{c}
    -0.0148   \\   
   -3.6872    \\  
  -1.1608    \\ 
   -1.4041     
\end{array} \right], 
$$
$$
\gamma _3 v_3 = 10^5\left[ \begin{array}{c}
 0.0141    \\ 
   3.4783    \\ 
   1.0951   \\ 
    1.3027   
\end{array} \right]  , \quad \gamma _4 v_4 = 10^5\left[ \begin{array}{c}
    -0.0083\\ 
   -2.0462\\ 
  -0.6442\\ 
   -0.7622
\end{array} \right]. 
$$}
As $\beta _0  = \sum_{i=1}^{4} \gamma_i v_i \lambda _i ^0 >0$, $\lambda_1 > \lambda _2>\lambda_3 > \lambda _4>0$, and conditions~(\ref{corro4}) hold, we conclude using Theorem~\ref{TestJerome} that  
$$
\beta _t  = \sum_{i=1}^{4} \gamma_i v_i \lambda _i^t>0, \quad \forall t \in \N.
$$ 

Now, we consider the case $t_2=3$ and $t_3=2$. The number of possible matrices having 4 lines that can be generated from~\eqref{eq:St} is 35. We have to check if there is an invertible matrix  ${\cal N}$ that allows to conclude that the intersection (\ref{eq:interS}) is {reduced to $\{0\}$} for every $t \in \N^\star$. We found,
{\small $$
{\cal N}=
\left[ \begin{array}{rrrr}      -0.4936   &-6.9988 &-104.7360  &  1.6626\\
	0.0139  & -3.5448  &-93.6330  &  2.0398\\
	0.0248 &  -9.9744 &-246.4813  &  4.9790\\
	-0.0210  & 11.0313  &262.8501 &  -4.9946
\end{array} \right].  
$$}
As previously, we compute
{\small $$
	\beta_0 %&=& \left[ \begin{array}{cccc}   \alpha_1^1 & \alpha_2^1  & \alpha_3^1 & \alpha_4^1 \end{array} \right]'  \\
	= \left[ \begin{array}{cccc}     0.0017 &   0.0607  &  0.3860  &  2.7102 \end{array}\right]^{\top},
$$ }
{\small     $$
  {\cal L}\  = \left[ \begin{array}{cccc} 
  1.1997  & -0.0049  &-0.00076  &  0.00166\\
  210.63    &  -1.1799   &  -0.42343   &  0.0607 \\
  -239.25  &     1.9792    &   1.0282    &  0.386 \\
  32044   &   -286.92   &   -98.232    &   2.7102
    \end{array} \right],  
$$}
{\small   $$
  \gamma _1 v_1 = 10^4\left[ \begin{array}{c}
  0.0122\\  
1.0926\\  
 0.8372\\  
 1.8402  
  \end{array} \right]  , \quad \gamma _2 v_2 = 10^4\left[ \begin{array}{c}
 -0.0201   \\  
 -1.7998   \\  
 -1.3368   \\  
 -2.9338    
  \end{array} \right],
  $$
  $$
  \gamma _3 v_3 = 10^4\left[ \begin{array}{c}
    0.0190  \\   
    1.7366  \\   
    1.1679  \\  
   2.5604  
  \end{array} \right]  , \quad \gamma _4 v_4 = 10^4\left[ \begin{array}{c}
  -0.0112  \\    
-1.0295  \\   
-0.6683    \\  
-1.4666    
  \end{array} \right].
  $$}
	We have $\beta _0  = \sum_{i=1}^{4} \gamma_i v_i \lambda _i ^0 >0$, $\lambda_1 > \lambda _2>\lambda_3 > \lambda _4>0$, and we verified that conditions (\ref{corro4}) are satisfied. We conclude that  
  $$
  \beta _t  = \sum_{i=1}^{4} \gamma_i v_i \lambda _i^t>0, \quad \forall t\in\N.
  $$
	The same procedure has been repeated for all the 58 possible values of $t_2$ and $t_3$ meaning that  we were able to exhibit a matrix ${\cal N}$ such that, for every  $t \in\N^\star $, the intersection  (\ref{eq:interS}) is {reduced to $\{0\}$} in all these cases. 
    
\bibliographystyle{IEEEtran}
\bibliography{Helly_biblio}

\begin{IEEEbiography}[{\includegraphics[width=1in,height=1.25in,clip,keepaspectratio]{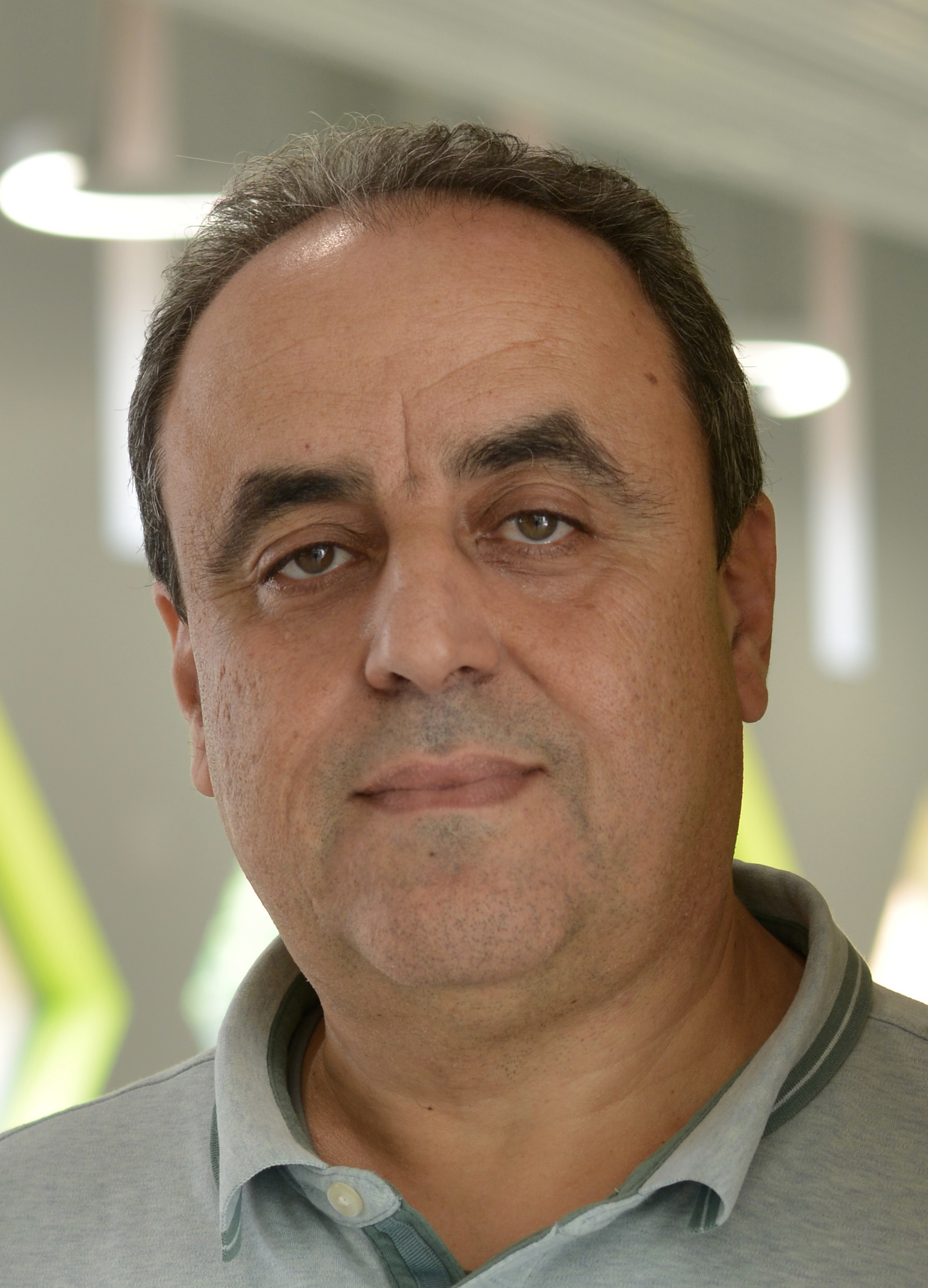}}] {Jamal Daafouz} is a Full Professor at the University of Lorraine (France) and researcher at CRAN-CNRS. He is a senior member of the Institut Universitaire de France (IUF), having been appointed as a junior member in 2010. He received his Ph.D. in Automatic Control from INSA Toulouse and LAAS-CNRS in 1997 and his "Habilitation à Diriger des Recherches" from the University of Lorraine in 2005. His research focuses on the analysis, observation, and control of uncertain, switched and networked systems, with a particular interest for convex based optimisation methods. He has served as an associate editor for several journals, including Automatica, IEEE Transactions on Automatic Control, the European Journal of Control, and Nonlinear Analysis: Hybrid Systems. Currently, he is a senior editor for the IEEE Control Systems Letters.
\end{IEEEbiography}
\begin{IEEEbiography}
	[{\includegraphics[width=1in,height=1.25in,clip,keepaspectratio]{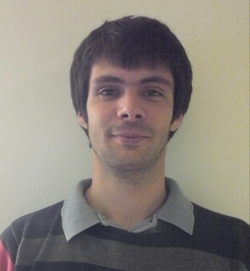}}]{J\'er\^ome Loh\'eac} was born in 1985.
	In 2012, he defended his Ph.D.
	in Applied Mathematics at University of Lorraine, France, under the direction of M.~Tucsnak and J.-F.~Scheid.
	Then, he was post-doc at BCAM, Bilbao (Spain), under the supervision of E.~Zuazua.
	He gets the position of junior researcher at CNRS, France in 2014 and moved to IRCCyN, Nantes.
	Since 2017, he is researcher at CRAN, Nancy.
	His research interests are control and optimal control theory in finite and infinite dimension.
\end{IEEEbiography}
\begin{IEEEbiography}
	[{\includegraphics[width=1in,height=1.25in,clip,keepaspectratio]{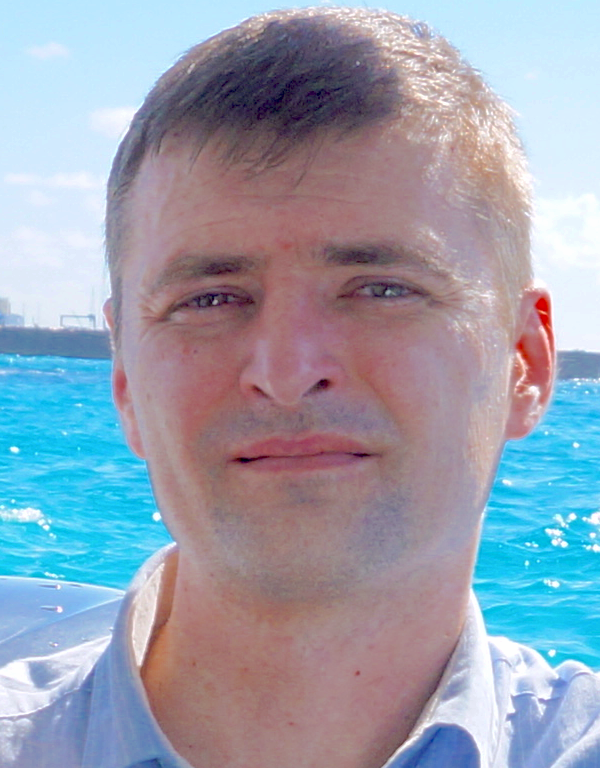}}]{Irinel-Constantin Mor\u{a}rescu} is currently Full Professor at Universit\'e de Lorraine and researcher at CRAN, CNRS in Nancy, France. He received the Ph.D. degree in Mathematics and in Technology of Information and Systems from University of Bucharest and University of Technology of Compiegne, respectively in 2006. He received the ``Habilitation \`a Diriger des Recherches'' from the Universit\'e de Lorraine in 2016. His works concern stability and control of time-delay systems, stability and tracking for different classes of hybrid systems, consensus and synchronization problems. He is in the editorial board of Nonlinear Analysis: Hybrid Systems, IEEE Control Systems Letters, and member of the IFAC Technical Committee on Networked Systems.
\end{IEEEbiography}
\begin{IEEEbiography}
	[{\includegraphics[width=1in,height=1.25in,clip,keepaspectratio]{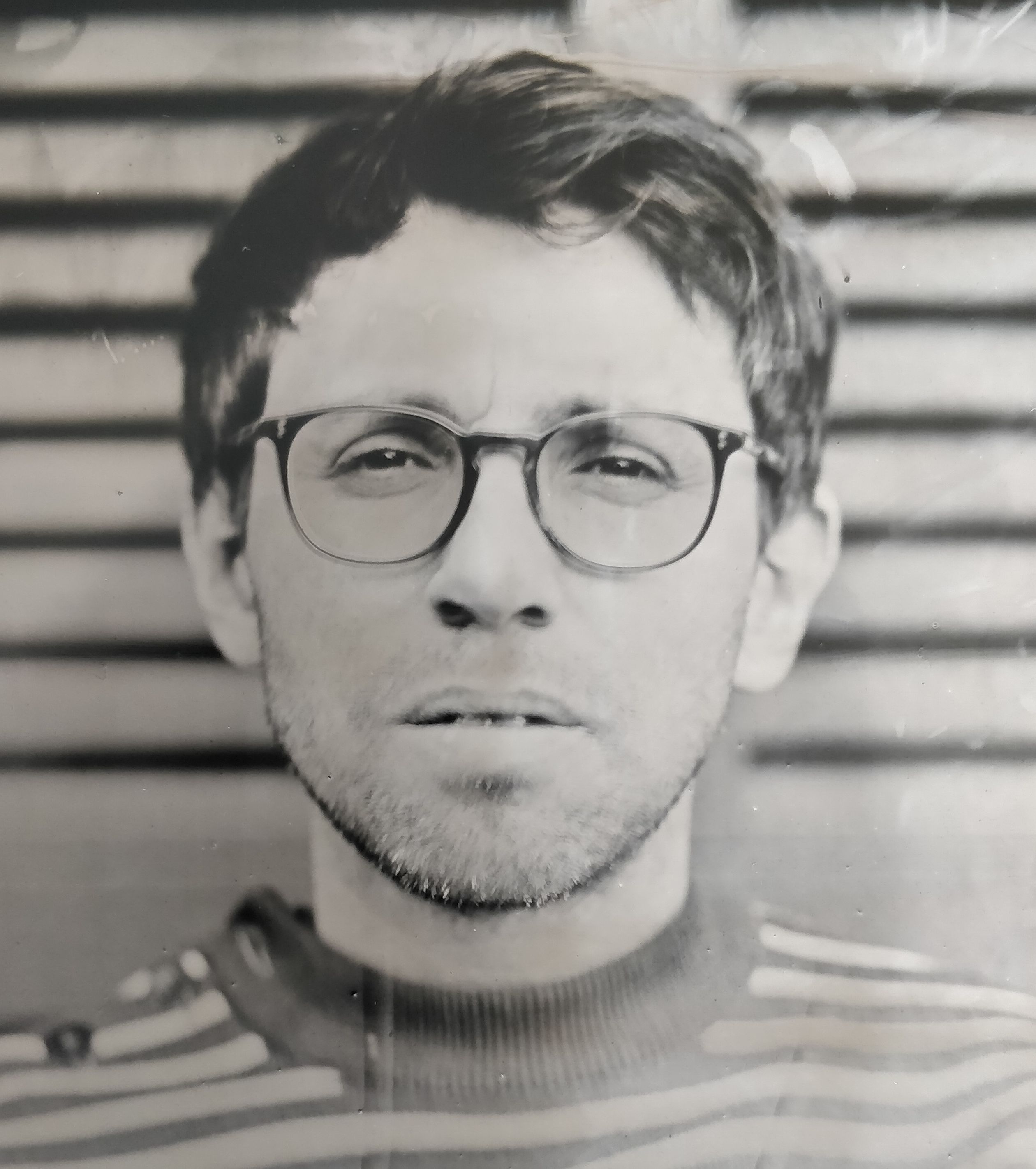}}]{Romain Postoyan} received the ``Ing\'enieur'' degree in Electrical and Control Engineering from ENSEEIHT (France) in 2005. He obtained the M.Sc. by Research in Control Theory \& Application from Coventry University (United Kingdom) in 2006 and the Ph.D. in Control Theory from Universit\'e Paris-Sud (France) in 2009. In 2010, he was a research assistant at the University of Melbourne. Since 2011, he is a CNRS researcher at CRAN, Nancy (France). He received the `Habilitation \`a Diriger des Recherches (HDR)'' from Universit\'e de Lorraine in 2019. He serves/served as an associate editor for the journals: IEEE Transactions on Automatic Control, Automatica, IEEE Control Systems Letters and IMA Journal of Math. Control and Information.
\end{IEEEbiography}
\vskip 0pt plus -1fill

\end{document}